\documentclass[11pt]{amsart}
\usepackage[foot]{amsaddr}  

\usepackage{amssymb,amsmath,amsthm,amsfonts}
\usepackage{mathtools,mathrsfs}
\usepackage{setspace}  

\usepackage{textgreek}
\usepackage[hmargin=3cm,vmargin=3.5cm]{geometry}
\usepackage[dvipsnames,table,xcdraw]{xcolor}
\usepackage{url}
\usepackage[colorlinks = true,
            linkcolor = Fuchsia,
            urlcolor  = ForestGreen,
            citecolor = Plum,
            anchorcolor = blue]{hyperref}
\usepackage{graphicx} 
\usepackage{longtable, relsize}
\usepackage{float}
\usepackage{tikz-cd}
\usetikzlibrary{arrows,automata}
\usetikzlibrary{decorations.markings}

\usepackage{nomencl}  

\makenomenclature
\usepackage{etoolbox}

\renewcommand\nomgroup[1]{%
  \item[\bfseries
  \ifstrequal{#1}{D}{Complex Distributions}{%
  \ifstrequal{#1}{N}{Simple Distributions}{%
  \ifstrequal{#1}{O}{Symbols}{}}}%
]}


\theoremstyle{definition}
\newtheorem{definition}{Definition}[section]
\newtheorem{lemma}{Lemma}[section]

\def\lt{<}
\def\gt{>}
\DeclareMathOperator{\sgn}{sgn}
\DeclareMathOperator{\diag}{diag}

\def\scN{\mathfrak{N}}  
\def\mcN{\mathcal{N}}  
\DeclareMathOperator\erf{erf}  
\def\pdfGG{f_{\text{\tiny GG}}}  
\def\alphahalf{\frac{\alpha}{2}}  
\def\GammaAlpha{\Gamma\left(\frac{1}{\alpha} + 1\right)}
\newcommand{\GammaAlphaN}[1]{\Gamma\left(\frac{#1}{\alpha} + 1\right)}
\def\R{\mathbb{R}}
\def\E{\mathbb{E}}
\def\Rn0{\mathbb{R}_{\ne 0}}
\newcommand{\FWrfn}[2]{{W}_{{-#1},0}\left(-{#2}\right)}  
\newcommand{\gscN}[1]{\scN_\alpha(#1; \sigma, d, p)}
\def\gscNx{\gscN{x}}
\def\gscNs{\gscN{s}}

\def\gsasDist{L_{\alpha,k}}
\def\gasDist{L_{\alpha,k}^\theta}
\def\afstableDist{L_{\alpha}^\theta}
\def\sasDist{P_{\alpha}}
\def\gskewDist{g_{\alpha}^\theta}  

\newcommand{\gsas}[1]{\gsasDist({#1})} 
\newcommand{\gas}[1]{\gasDist({#1})} 
\newcommand{\afstable}[1]{\afstableDist({#1})} 
\newcommand{\sas}[1]{\sasDist({#1})} 
\def\gskew{\gskewDist(x,s)} 

\def\chibar{\overline{\chi}} 
\def\chimeanDist{\chibar_{\alpha,k}}
\def\chimeanDistNegK{\chibar_{\alpha,-k}}
\newcommand{\chimean}[1]{\chimeanDist({#1})} 

\def\gepDist{\mathcal{E}_{\alpha,k}}

\def\fcmSigma{{\sigma}_{\alpha,k}}
\def\gsasSigma{\widetilde{\sigma}_{\alpha,k}}

\def\where{\,\, \text{where} \,\,}
\def\qedbar{\centering{\rule{0.5\linewidth}{0.3pt}}}  

\newtoggle{fullpaper}

\toggletrue{fullpaper}

\iftoggle{fullpaper}{
\title{Generalization of the Alpha-Stable Distribution with the Degree of Freedom 
       }
}{
\title{Generalization of the Alpha-Stable Distribution with the Degree of Freedom 
       (SHORT VERSION)}
}

\author[S. H. Lihn]{Stephen H. Lihn}
\email{
\href{mailto:stevelihn@gmail.com}{stevelihn@gmail.com}, 
\href{mailto:sl0815@princeton.edu}{sl0815@princeton.edu}
}
\thanks{Disclaimer:
Mr. Lihn is a quant researcher at Atom Investors LP, Austin, TX 78746, USA.
This mathematical treatise is not related to the business of Atom Investors LP.
I attest there is no conflict of interest.
}


\subjclass[2020]{Primary: 
33C05, 33C10, 33C15, 
33E20, 
60E07, 
60E05, 
60E10, 
60G53, 
62E15, 
62H10, 
65C10} 

\date{May 12, 2024. v1.0.0512.
    Dedicated to my mother on the Mother's Day of 2024. 
    \iftoggle{fullpaper}{Full Version}{Main Results Only}
}

\providecommand{\keywords}[1]{\textbf{\textit{Key words and phrases.}} #1}

\keywords{fractionally generalized distribution,
    the Wright function,
    fractional chi distribution,
    stable count distribution, 
    generalized gamma distribution, 
    \(\alpha\)-stable distribution, 
    Student's t distribution,
    exponential power distribution,
    multivariate distribution,
    Feller square root process,
    fractional hypergeometric function.
} 

\begin{document}

\begin{abstract} 
A Wright function based framework is proposed to combine and extend 
several distribution families.
The \(\alpha\)-stable distribution is generalized
by adding the \emph{degree of freedom} parameter.
The PDF of this two-sided super distribution family 
subsumes those of the original \(\alpha\)-stable,
Student's t distributions, as well as 
the exponential power distribution and 
the modified Bessel function of the second kind.
Its CDF leads to a fractional extension of the Gauss hypergeometric function.
The degree of freedom makes possible for valid variance, skewness, and kurtosis, 
just like Student's t. 
The original \(\alpha\)-stable distribution
is viewed as having one \emph{degree of freedom},
that explains why it lacks most of the moments.
A skew-Gaussian kernel is derived from the characteristic function of 
the \(\alpha\)-stable law,
which maximally preserves the law in the new framework.  
To facilitate such framework, the stable count distribution is generalized 
as the fractional extension of the generalized gamma distribution. 
It provides rich subordination capabilities, one of which is the fractional
\(\chi\) distribution that supplies the needed 'degree of freedom' parameter. 
Hence, the "new" \(\alpha\)-stable distribution is a "ratio distribution" of 
the skew-Gaussian kernel 
and the fractional \(\chi\) distribution.
Mathematically, it is a new form of higher transcendental function
under the Wright function family.
Last, the new univariate symmetric distribution is extended to 
the multivariate elliptical distribution
successfully.
\end{abstract}

\maketitle

\iftoggle{fullpaper}{
}{}

%
%

\section{Introduction}
\label{section:intro}

The aim of this work is to propose a two-sided, super distribution family, 
called \emph{the generalized \(\alpha\)-stable distribution} (GAS), that subsumes three 
major distributions, plus one transcendental function:
\begin{itemize}
  \item the \(\alpha\)-stable distribution,
  \item Student's t distribution,
  \item the exponential power distribution, and
  \item the modified Bessel function of the second kind\footnote{It is the characteristic function of Student's t distribution}.
\end{itemize}
The literature review is as follows.

In 1908, William S. Gosset published a paper on measuring the distribution on samples of finite size 
under the pseudonym "Student"\cite{Student:1908}.
He pioneered a distribution later called \emph{Student's t distribution} 
or simply \emph{the t distribution}, that plays a crucial role in statistical inferences
and hypothesis testing.
The distribution has a major parameter called \emph{degrees of freedom} \(k\), 
hence the notation \(t_k\),
with the Cauchy distribution at \(k=1\),
and the standard normal distribution \(\mcN = \mcN(0,1)\) at \(k \to \infty\). 

Student's t distribution can be constructed as the ratio 
of a standard normal variable and the square root of a \(\chi^2\) distribution,
divided by degrees of freedom. That is, \(t_k \sim \mcN / \sqrt{\chi_k^2/k}\). 
The \(\chi^2\) distribution was first derived by F. R. Helmert in 1872
from his work on least squares adjustment and error propagation\cite{Helmert:1872}.
Later in 1900, Karl Pearson defined the \(\chi\) distribution as the square root of 
a \(\chi^2\) variable: \(\chi_{k} \sim \sqrt{\chi_k^2}\).

This history is summarised in our first and central guiding equation:
\begin{align}\label{eq:intro-t}
t_k
    &\sim \mcN  / \,\chibar_{k}, \quad \quad \where \chibar_{k} := \chi_k / \sqrt{k}. 
\end{align}

In 1925, Paul L\'evy published his seminal book on the \(\alpha\)-stable distribution\cite{Levy:1925}.
The distribution has a major parameter, among others, called \emph{the stability index} \(\alpha \in (0,2]\). 
In this work, we call the extension to the \(\alpha\) dimension \emph{fractional}.

The \(\alpha\)-stable distribution family also contains the Cauchy distribution at \(\alpha = 1\),
and the normal distribution at \(\alpha = 2\). 
Are the two commonalities between Student's t and \(\alpha\)-stable coincident? 

A main goal of this paper is to show that 
this is not a coincidence -- 
There is a super distribution family 
that subsumes both distributions seamlessly. 
However, its construction requires the fractional extension of the \(\chi\) distribution:
\(\chi_{k} \to \chi_{\alpha,k}\),
and a branch of mathematical functions called
\emph{higher transcendental function}.

In 1933, E. M. Wright published his work on the asymptotic theory of partitions\cite{Wright:1933, Wright:1935} that proposed
a particular type of higher transcendental function, with the notation of 
 \(W_{\lambda,\delta}(z)\) where \(\lambda,\delta\) are two shape parameters.
It was later collected into the Bateman manuscript in 1955,
but tucked under the Chapter of the Mittag-Leffler function
(Chapter 18.1 of Vol 3)\cite{Bateman:1955}.
In recent decades, this function is found to have an essential role
in the theory of fractional calculus and
the \(\alpha\)-stable distribution, hence is called \emph{the Wright function}.

In 1952, William Feller pioneered a new stable parameterization \((\alpha,\theta)\)
in his work on the Riesz-Feller fractional derivative\cite{Feller:1952},
linking the fractional calculus and
the \(\alpha\)-stable distribution together\cite{Mainardi:2007, Saxena:2014}.
It is called \emph{the Feller parameterization} in this work, in which
\(\theta\) is the skewness parameter in the form of a trigonometric angle.
Among many parameterizations in the \(\alpha\)-stable distribution 
(sometimes very confusing),
this is the primary parameterization used here
(See Definition 3.7 of \cite{Nolan:2020}).
Therefore, the \(\alpha\)-stable distribution takes the notation of \(L_{\alpha}^{\theta}\).

Later in Feller's 1971 seminal book on the probability theory\cite{Feller:1971}, 
the series representation for
the PDF of the \(\alpha\)-stable distribution was given. 
When \(\theta = \pm\alpha\), the distribution becomes one-sided (or called extremal).
Its PDF \(L_\alpha(x)\) \((x > 0)\) is based on one of the most important variants 
of the Wright function:
\(L_\alpha(x) = x^{-1} \FWrfn{\alpha}{x^{-\alpha}}\)
(See Table \ref{tab:frac-dist-mapping}).

In a nutshell, \(\chi_{\alpha,k}\) is a Wright function extension of \(\chi_{k}\).
We would properly standardize 
\(\chi_{\alpha,k}\) to \(\chibar_{\alpha, k}\), then extend \(t_k\) from \eqref{eq:intro-t} to 
\begin{align}\label{eq:intro-gsas}
L_{\alpha,k}
    &\sim \mcN  / \,\chibar_{\alpha, k}
\end{align}
after which the skewness \(\theta\) is added, and we obtain the enlarged distribution family 
\(L_{\alpha,k}^\theta\).
We make sure when \(k=1\), it conforms to the \(\alpha\)-stable distribution:
\(L_{\alpha,1}^\theta\) = \(L_{\alpha}^\theta\).

We can also look at it from a different angle. We take the \(\alpha\)-stable distribution
\(L_{\alpha}^\theta\), add \(k\) from \(t_k\) to it, and form \(L_{\alpha,k}^\theta\).
We would get to the same place.

\begin{center} \qedbar \end{center}

Since 1990's, Mainardi et al. have been exploring  extensively 
the relation between the Wright function, the fractional calculus, 
and the \(\alpha\)-stable distribution\cite{Mainardi:2010, Mathai:2017, Mainardi:2020}. 
I consider this work as a continuation of such exploration.

In 1962, E. W. Stacy proposed a super distribution family called 
\emph{the generalized gamma distribution} (GG). 
Many one-sided distributions commonly used for parametric models
are part of it, such as 
the exponential distribution and the exponential power distribution, 
the half-normal, Weibull, Rayleigh, gamma, \(\chi\), and \(\chi^2\) distributions
(See Table \ref{tab:classic-map-zero}). 

The Wright function and GG have a deep connection. The large-\(x\) asymptotic of 
a PDF based on the Wright-function becomes a GG-style PDF, apart from a constant. 
We can jokingly say \emph{they are joined at the tail}.

One contribution of this work is to have GG fractionally extended,
called \emph{the generalized stable count distribution} (GSC), 
which serves as the parent of the fractional \(\chi\) distribution.
Obviously, GSC is the largest one-sided distribution family in this scope.
The following is a brief review of my past effort.


In 2017, I discussed the \(\alpha\)-stable law with Professor Mulvey at Princeton University.
Later that year, I derived \emph{the stable count distribution}
\(\scN_\alpha(x)\) as the conjugate prior of the one-sided stable
distribution \(L_\alpha(x)\)\cite{Lihn:2017, Lihn:2020, Wikipedia_StableCountDistribution, PensonGorska:2010}.
\(\scN_\alpha(x)\) measures the distribution of the length \(N\) of a L\'evy sum
\(\sum_{i = 1}^{N} X_i\) where \(X_i\)'s are i.i.d. stable random variables.
It was found to be expressed elegantly in the Wright function:
\(\scN_\alpha(x) = {\Gamma(\frac{1}{\alpha}+1)}^{-1} \FWrfn{\alpha}{x^{\alpha}}\).

In 2022, I began to organize the subordination structure of the GG-related distributions 
via the stable count distribution,
which is documented in Appendix \ref{section:product-dist}.
The GG-style pattern in the PDF of GSC, \(x^{d-1} \FWrfn{\alpha}{x^p}\),
was discovered
(See Table \ref{tab:classic-map-zero}). 

It turned out such pattern is not a coincidence.
GSC becomes GG when \(\alpha \to 0\).
GSC is proven to be the fractional extension of GG, solidifying its theoretical position 
in the statistical distribution theory.
The proof is based on the nice properties of Mainardi's auxiliary Wright functions\cite{Mainardi:2010}.

In 2023, the subordination study was extended to the known symmetric two-sided distributions, 
as shown in Appendix \ref{section:ratio-dist}.
The main structure in the PDF of the fractional \(\chi\) distribution,
\(\chi_{\alpha,k}(x) \propto x^{k-2} \FWrfn{\frac{\alpha}{2}}{x^{\alpha}}\), 
became clear, 
when the \emph{continuous Gaussian mixture} \eqref{eq:intro-gsas} is focused.
This is elaborated in Sections \ref{section:mixture} and \ref{section:fcm}.
It combines the \(\alpha\) and \(k\) parameters into a single distribution.
This is the \textbf{cornerstone} that generalizes the 
\(\alpha\)-stable, Student's t, and exponential power distributions.
All three are combined to the expanded notation \(L_{\alpha,k}^\theta\).
These are the main contributions of this paper.

The symmetric case \((\theta = 0)\) is very solid and elegant: \(L_{\alpha,k} = L_{\alpha,k}^0\). 
Most of this paper aims to explain this case in great detail.
The skewness generalization is experimental. It is a much harder task that requires future work.
In a tongue-twisting manner, it goes like:
\begin{itemize}
  \item \(L_{\alpha,1}^\theta\): the original \(\alpha\)-stable distribution,
  \item \(L_{1,k} (k > 0)\): Student's t distribution,
  \item \(L_{\alpha,-1}\): the exponential power distribution.
\end{itemize}

I also proposed a framework for the generation of the random variables,
from the GSC to \(\chimeanDist\), to \(L_{\alpha,k}\). 
It is based on Feller's square-root process in 1951\cite{Feller:1951}.
Hence, this new distributional system is grounded on an elegant, well known stochastic process.
Amazingly, some of the solutions of the mean-reverting force
turn out to be simple polynomials.

Lastly, the univariate Gaussian mixture is extended nicely to the multivariate distribution.
Two kinds are proposed: The first kind is a straightforward copy from the elliptical distribution.
The second kind is an invention that allows each dimension to have its own shape.
The bivariate models from both kinds are applied to the VIX/SPX\footnote{
VIX is the CBOE volatility index; SPX is the S\&P 500 index.
} data set with success.
This is discussed in Section \ref{section:multivariate}.

Figure \ref{fig:gsc_hierarchy} illustrates the stack of distributions involved in this work. 
Seven new distributions are invented, and a handful of existing, classic distributions are subsumed into them.

\begin{figure}[htp]
    \centering
    \includegraphics[width=6in]{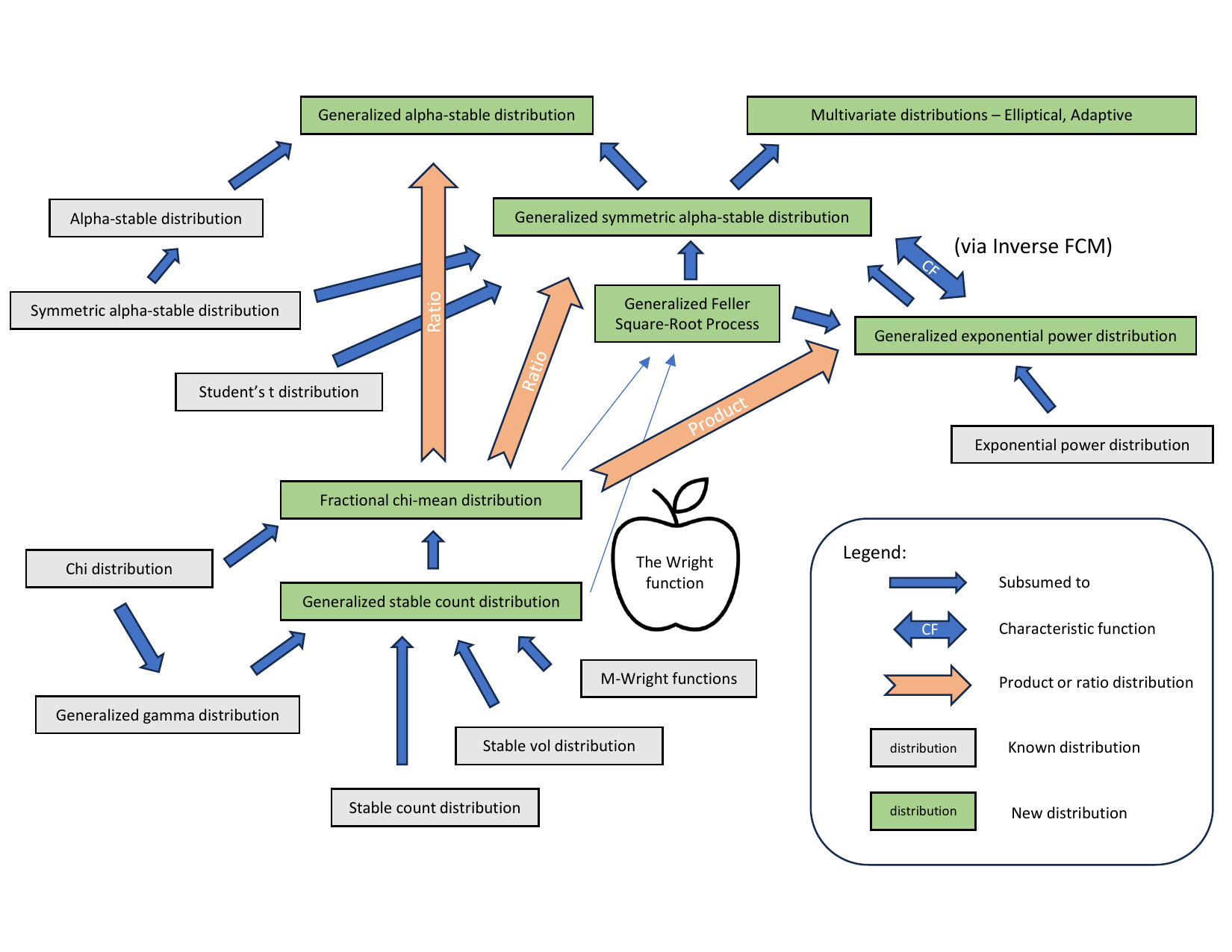}
    \caption{The hierarchy of the distributions involved in this work. 
        The Wright function plays a central role. 
        The green boxes are the contributions from this work.
    }
    \label{fig:gsc_hierarchy}
\end{figure}

\begin{center} \qedbar \end{center}

The benefits from multiple degrees of freedom, aka \(k > 1\), are two-fold.
First, the original \(\alpha\)-stable distribution
is viewed as having one degree of freedom.
That explains why it lacks most of the moments.
\(k > 1\) allows the distribution to 
have finite moments with increasing \(k\):
The \(n\)-th moment should exist when \(k > n\).
This addresses, to some extent, the decades-old issue of non-existing moments in the stable law. 

Secondly, for the generalized exponential power distribution,
it allows the slope of the PDF to be continuous everywhere.
This deserves some explanations. 

In 2017, I incorporated the exponential power distribution,
\(\mathcal{E}_\alpha(x) = e^{-{|x|}^\alpha}/2\GammaAlpha\),
into the Hidden Markov Model (HMM) for regime identification\cite{Lihn:2017hmm}.
It is especially important for the crash regime, such that
the model can capture the leptokurtic nature of the financial time series.

A main deficiency of its high-kurtosis distribution is that, 
when \(\alpha \le 1\), the slope of the PDF is discontinuous at \(x \to 0^{\pm}\).
A simple example is \(e^{-|x|}\).
This appears to violate the intuition that
the low-volatility returns should distribute like normal.
This deficiency is also attributed as having one degree of freedom.
It is addressed by the addition of the \(k\) dimension:
\(\mathcal{E}_\alpha \to \mathcal{E}_{\alpha,k}\). 
When \(k > 1\), the slope of the PDF is continuous everywhere.

\begin{center} \qedbar \end{center}


To showcase the new GAS distribution, it is used to fit the daily log returns 
of SPX from 1927 to 2020 in Figure \ref{fig:sp500}. 
This is a very difficult data set to fit for any existing two-sided distribution. 
In the \emph{pyro project}\cite{Bingham:2018pyro},
it is used as a failed example
that this data set can \textbf{not} be fitted by a single marginal distribution\footnote{
See \url{https://pyro.ai/examples/stable.html}}.

Nonetheless, GAS produced a good fit. It fits the shape of the histogram 
reasonably good in both the density and log-density scales, to the \(10^{-3}\) level,
which is about 6 standard deviations.
It produced the desired excess kurtosis (\(\sim\)19) and standardized peak density (\(\sim\)0.72), 
and added a small skewness. Most interestingly, it provides a set of interpretable parameters
that we are familiar with in the Student's t and \(\alpha\)-stable context: 
\(\alpha=0.813, k=3.292\) for the shape, and \(\theta=0.08\) for the skewness.
More discussion is in Section \ref{section:sp500-fit-explained}.

\begin{figure}[htp]
    \centering
    \includegraphics[width=6in]{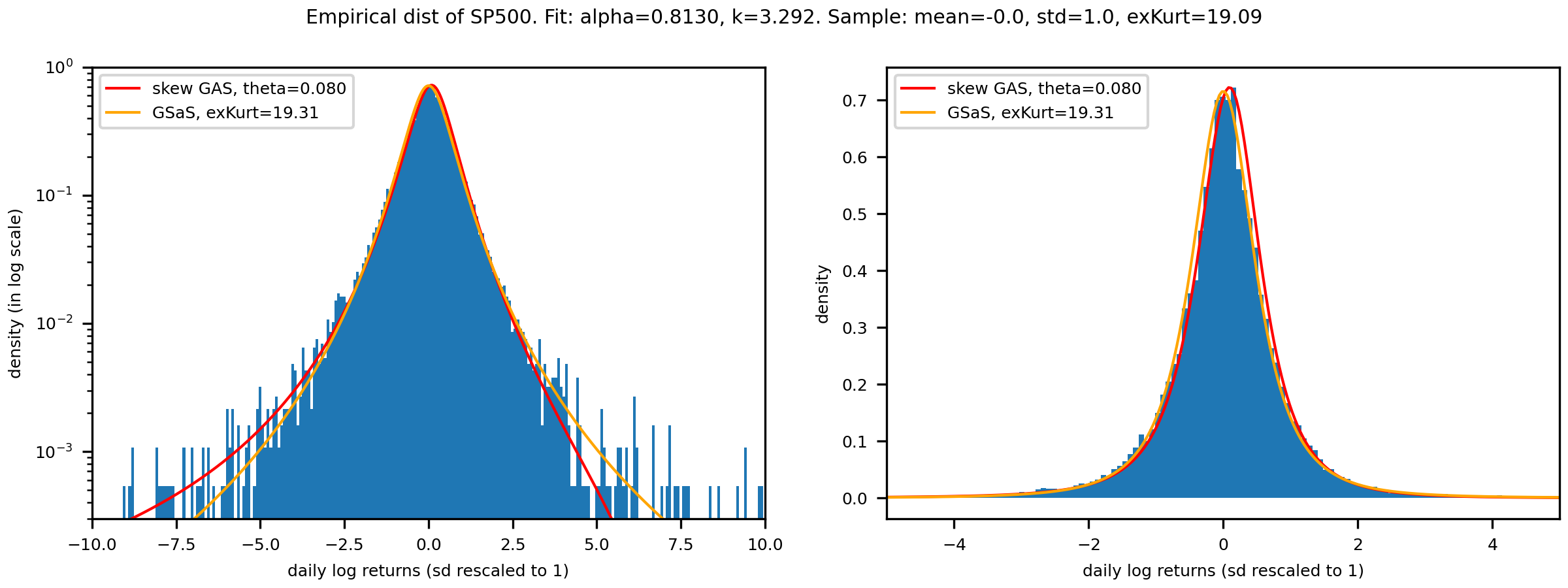}
    \caption{Empirical distribution of S\&P 500 daily log returns. 
    Data is standardized and centered for demonstration purpose. 
    The return distribution is normalized to one standard deviation with zero mean.
    Fit by GSaS (orange line) and GAS (red line).
    The fits are reasonably good up to 6 standard deviations.}
    \label{fig:sp500}
\end{figure}

\section{Framework of Continuous Gaussian Mixture}
\label{section:mixture}

The bulk of this paper is focused on the construction of a symmetric two-sided distribution
in the form of a continuous Gaussian mixture. Both the ratio and product distribution methods are used.
In the case of the symmetric \(\alpha\)-stable distribution (SaS)\cite{Crisanto-Neto:2018}, 
the exponential power distribution comes from its characteristic function (CF).
We would like to present a unified framework and familiarize the reader with the notations,
which would be otherwise subtle and confusing.

Assume the PDF of a two-sided symmetric distribution is \(L(x)\) where \(x \in \mathbb{R}\).
It has zero mean, \(\E(X|L) = 0\). 
Assume the PDF of a one-sided distribution is \(\chi(x) \,(x > 0)\) such that 

\begin{align}\label{eq:mix-pdf-by-chi-ratio}
L(x)
    &:= \int_{0}^{\infty} s \, ds \,
      \mcN(xs) \, \chi(s)
\end{align}
This is nothing but the definition of a ratio distribution with a standard normal variable \(\mcN\).
This is the first form of the Gaussian mixture: \(L \sim \mcN / \chi\).

It also has the equivalent expression in terms of a product distribution
by way of \emph{the inverse distribution} \( \chi^\dagger \) such
that \(L \sim \mcN \chi^\dagger\). 
This is the second form of the Gaussian mixture.

\(\chi^\dagger\) is closer to our typical understanding of 
the marginal distribution of a volatility process. 
For example, the Brownian motion process \(dX_t = \sigma_t \, dW_t\),
measured in a particular time interval \(\Delta t\), 
we have \(\Delta X_t \sim L\) and \(\sigma_t \sim \chi^\dagger\).

However, \(\chi\) in the first form is more natural in the expression of 
the \(\alpha\)-stable distribution. 
So we are more inclined to use the ratio distribution.
The reader should keep this subtlety in mind.

\begin{lemma}\label{lemma-mix--ratio-vs-prod} (Inverse distribution)
The inverse distribution is defined as\cite{Wikipedia_InverseDistribution}
\begin{align} \label{eq:mix-inverse-chi}
     \chi^\dagger(s) &:= s^{-2} \, \chi\left(\frac{1}{s}\right)
\end{align}
such that
\begin{align}
\int_{0}^{\infty} s \, ds \, \mcN(xs) \, \chi(s)
    &= \int_{0}^{\infty} \, \frac{ds}{s} \,
      \mcN\left(\frac{x}{s}\right) \, \chi^\dagger(s)
    && (x \in \mathbb{R})
\\
\int_{0}^{\infty} s \, ds \, \mcN(xs) \, \chi^\dagger(s)
    &= \int_{0}^{\infty} \, \frac{ds}{s} \,
      \mcN\left(\frac{x}{s}\right) \, \chi(s)
\end{align}
The proof is straightforward by a change of variable \(t = 1/s\). 
You can move between LHS and RHS easily.

\qedbar
\end{lemma}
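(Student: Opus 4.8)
The plan is to establish both identities by the single substitution $t = 1/s$, with $x$ held fixed as a parameter that rides along unchanged; this is exactly what makes the equivalence $L \sim \mcN/\chi \Longleftrightarrow L \sim \mcN\chi^\dagger$ from the preceding discussion precise. For the first identity I would start from the left-hand side $\int_0^\infty s\,\mcN(xs)\,\chi(s)\,ds$ and put $t = 1/s$, so that $s = t^{-1}$, $ds = -t^{-2}\,dt$, and the limits $s: 0 \to \infty$ turn into $t: \infty \to 0$; reversing the limits to absorb the sign yields $\int_0^\infty t^{-1}\,\mcN(x/t)\,\chi(1/t)\,t^{-2}\,dt = \int_0^\infty t^{-3}\,\mcN(x/t)\,\chi(1/t)\,dt$. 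Then I would invoke the definition \eqref{eq:mix-inverse-chi} rearranged as $\chi(1/t) = t^{2}\,\chi^\dagger(t)$, so that $t^{-3}\cdot t^{2} = t^{-1}$ and the integral becomes $\int_0^\infty t^{-1}\,\mcN(x/t)\,\chi^\dagger(t)\,dt$, which is precisely the right-hand side after renaming $t$ back to $s$.

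For the second identity I would avoid redoing the computation and instead use the involution property of the dagger operation: directly from \eqref{eq:mix-inverse-chi}, $(\chi^\dagger)^\dagger(s) = s^{-2}\,\chi^\dagger(1/s) = s^{-2}\cdot(1/s)^{-2}\,\chi(s) = \chi(s)$. Feeding $\chi^\dagger$ in place of $\chi$ into the already-proved first identity then gives $\int_0^\infty s\,\mcN(xs)\,\chi^\dagger(s)\,ds = \int_0^\infty s^{-1}\,\mcN(x/s)\,(\chi^\dagger)^\dagger(s)\,ds = \int_0^\infty s^{-1}\,\mcN(x/s)\,\chi(s)\,ds$, as required.

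There is essentially no obstacle here; the only points needing a modicum of care are bookkeeping the Jacobian $t^{-2}$ together with the reversal of the integration limits, and noting that the identities hold for every $x \in \R$ because $\mcN$ is even, so the sign of $x$ never interacts with the substitution, which acts only on the integration variable. I would also remark in passing that the very same change of variable confirms $\chi^\dagger$ is a genuine density, $\int_0^\infty \chi^\dagger(s)\,ds = \int_0^\infty \chi(t)\,dt = 1$, and that all the manipulations are rigorous since $t \mapsto 1/t$ is a smooth bijection of $(0,\infty)$ onto itself and the integrands are nonnegative, so no convergence subtleties arise beyond the assumed integrability of $\chi$.
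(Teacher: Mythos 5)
Your proof is correct and is exactly the change of variable $t = 1/s$ that the paper itself invokes (the paper leaves the computation as an exercise); your Jacobian bookkeeping, the use of $\chi(1/t) = t^2\chi^\dagger(t)$, and the involution $(\chi^\dagger)^\dagger = \chi$ to get the second identity for free are all sound.
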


We use the notation \(\text{CF}\{g\}(t) = \E(e^{itX}|g)\) to represent 
the characteristic function transform of the PDF \(g(x)\).
Note that \(\mcN\) has a special property that 
its CF is still itself: \(\text{CF}\{\mcN\}(t) = \sqrt{2\pi} \, \mcN(t)\).

\begin{lemma} \label{lemma-mix-cf-dist} (Characteristic function transform of \(L\))
Let \(\phi(t)\) be the CF of \(L\) such that 
\( \phi(t) := \text{CF}\{L\}(t) =
    \int_{-\infty}^{\infty} dx  \, \exp(itx) \, L(x)\). 
\eqref{eq:mix-pdf-by-chi-ratio} is transformed to
\begin{align} \label{mix-cf}
\phi(t)
    &= \sqrt{2\pi} \, \int_{0}^{\infty} \, ds \,
      \mcN\left(\frac{t}{s}\right) \, \chi(s)
    && (t \in \mathbb{R})
\end{align}

This allows us to define a new distribution pair: \(L_{\phi}\) and \(\chi_{\phi}^\dagger\), 
in terms of a product distribution such that
\begin{align} \label{eq:mix-pdf-by-chi-dagger}
L_{\phi}(x)
    &:= \int_{0}^{\infty} \, \frac{ds}{s} \,
      \mcN\left(\frac{x}{s}\right) \, \chi_{\phi}^\dagger(s)
    && (x \in \mathbb{R})
\\ \label{eq:mix-chi-dagger}
\chi_{\phi}^\dagger(s) &:=
    \frac{ s \, \chi(s) }{\E(X|\chi)}
\end{align}
where \(\E(X|\chi)\) is the first moment of \(\chi\).
Here \(\chi_{\phi}^\dagger\) is the inverse distribution of \(\chi_{\phi}\),
which can be reverse-engineered according to \eqref{eq:mix-inverse-chi},
\begin{align} \label{eq:mix-chi-phi}
\chi_{\phi}(s) &:=
    \frac{ s^{-3} }{\E(X|\chi)} \,\chi\left(\frac{1}{s}\right) 
\end{align}

\qedbar
\end{lemma}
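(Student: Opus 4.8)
The plan is to prove \eqref{mix-cf} by a direct Fubini–substitution argument, and then read off the product‑distribution reformulation \eqref{eq:mix-pdf-by-chi-dagger}–\eqref{eq:mix-chi-phi} by matching integrands and checking normalization; the definitional parts of the statement then reduce to routine verification.

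First I would substitute \eqref{eq:mix-pdf-by-chi-ratio} into $\phi(t) = \int_{-\infty}^{\infty} dx\, e^{itx} L(x)$ and interchange the order of integration, writing $\phi(t) = \int_0^\infty s\,ds\,\chi(s)\int_{-\infty}^{\infty} dx\, e^{itx}\mcN(xs)$. The interchange is legitimate because $\mcN \ge 0$, $\chi \ge 0$, and $\int_0^\infty s\,\chi(s)\,ds = \E(X|\chi) < \infty$, so the double integral of the absolute value is finite (Tonelli). In the inner integral I substitute $u = sx$ (with $s>0$), obtaining $\frac{1}{s}\int_{-\infty}^\infty du\, e^{i(t/s)u}\mcN(u) = \frac{1}{s}\,\text{CF}\{\mcN\}(t/s) = \frac{\sqrt{2\pi}}{s}\mcN(t/s)$ by the stated self‑reproducing property of the Gaussian CF. Substituting back cancels the factor $s$ and yields \eqref{mix-cf}. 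As byproducts I record two consistency checks: evaluating \eqref{mix-cf} at $t=0$ gives $\phi(0) = \sqrt{2\pi}\,\mcN(0)\int_0^\infty\chi(s)\,ds = 1$, as a characteristic function must satisfy; and evaluating \eqref{eq:mix-pdf-by-chi-ratio} at $x=0$ gives the identity $\sqrt{2\pi}\,L(0) = \E(X|\chi)$, which I use below.

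Next, to obtain the product form, I compare \eqref{mix-cf} with the target shape $\int_0^\infty \frac{ds}{s}\,\mcN(x/s)\,\rho(s)$: matching integrands forces $\rho(s) = c\, s\,\chi(s)$, and requiring $\int_0^\infty\rho(s)\,ds = 1$ forces $c = 1/\E(X|\chi)$, which is exactly \eqref{eq:mix-chi-dagger}. With this choice, $L_\phi(x)$ defined by \eqref{eq:mix-pdf-by-chi-dagger} equals $\frac{1}{\sqrt{2\pi}\,\E(X|\chi)}\,\phi(x)$, and its total mass is $\frac{1}{\sqrt{2\pi}\,\E(X|\chi)}\int_{-\infty}^\infty\phi(x)\,dx = \frac{2\pi L(0)}{\sqrt{2\pi}\,\E(X|\chi)} = 1$ by Fourier inversion at the origin together with the identity $\sqrt{2\pi}\,L(0) = \E(X|\chi)$; hence $L_\phi$ is a bona fide density. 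Finally, \eqref{eq:mix-chi-phi} is just the algebraic inversion of the involution \eqref{eq:mix-inverse-chi}: imposing $\chi_\phi^\dagger(s) = s^{-2}\chi_\phi(1/s)$ and solving for $\chi_\phi$ via $r = 1/s$ gives $\chi_\phi(r) = r^{-3}\chi(1/r)/\E(X|\chi)$, and by Lemma~\ref{lemma-mix--ratio-vs-prod} the resulting $L_\phi \sim \mcN\,\chi_\phi^\dagger$ is equivalently $\mcN/\chi_\phi$, closing the loop.

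The only real obstacles are regularity conditions rather than ideas: the Fubini step needs the first moment $\E(X|\chi)$ to be finite (implicitly assumed by its appearance in \eqref{eq:mix-chi-dagger}), and the normalization of $L_\phi$ invokes Fourier inversion at a point, which requires $L$ to be continuous at $0$, equivalently $\phi \in L^1$. Both are mild and hold in all concrete cases of interest, so I would simply state them as standing hypotheses. I would also flag that $L_\phi$ is a genuine probability density only when $\phi \ge 0$ — as for the symmetric $\alpha$‑stable CF $e^{-|t|^\alpha}$ — whereas for a general symmetric $L$ the function $\phi$ is merely real and even.
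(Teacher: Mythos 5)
Your argument is correct and follows essentially the same route the paper takes: the paper proves the general version of this statement in Appendix Lemma \ref{lemma-sub-pdf-cf} by the same substitute--Fubini--change-of-variable ($u=xs$) computation, followed by matching the integrand to the product form and inverting \eqref{eq:mix-inverse-chi} to get \eqref{eq:mix-chi-phi}. Your added checks (Tonelli justification, the normalization of $L_\phi$ via $\sqrt{2\pi}\,L(0)=\E(X|\chi)$, and the caveat that $L_\phi$ is a true density only when $\phi\ge 0$) are correct refinements the paper leaves implicit.
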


We are in an interesting place: We start with an one-sided distribution \(\chi\),
we derive two variants from it: \(\chi_{\phi}\) and \(\chi_{\phi}^\dagger\).
We also obtain two two-sided distributions: \(L\) and \(L_{\phi}\).

We shall call \(\chi_{\phi}\) \emph{the characteristic distribution} of \(\chi\) 
since it facilitates the following parallel relation:
\begin{align*}
L &\sim \mcN / \chi
\\
L_{\phi} &\sim \mcN / \chi_{\phi}
\end{align*}

\(\chi\) symbolizes the fractional \(\chi\) distribution we are about to introduce.
The \(\phi\) suffix will be replaced with the \emph{negation} (sign change) of the degree of freedom.


\subsection{List of Symbols}
\label{section:symbols}
The next tables describe the notations and symbols that will be used in this paper.
Note that GSC, FCM, GAS, GSaS are new notations. Others are known.

\vspace{6pt}

\textbf{The PDF of Simple Distribution}

\begingroup
\setlength{\tabcolsep}{10pt} 
\renewcommand{\arraystretch}{1.25} 
\begin{tabular}{ l l }
\(\Gamma(x;s)\)         & The gamma distribution (used only as a notation) \\
\({\chi}_{k}(x), {\chi}^2_{k}(x)\)  & The chi and chi-square distributions \\
\({L}(x)\)              & The two-sided Laplace distribution \\
\({\mcN}(x), \mcN(x;\mu, \sigma^2)\) & The normal distribution, 
    \({\mcN}(x) \sim {\mcN}(x; 0,1)\) \\
\(\mathcal{E}_\alpha(x)\) & The two-sided exponential power distribution
    (\(e^{-{|x|}^\alpha}/2\GammaAlpha\))\\
\(\scN_\alpha(\nu)\)    & The stable count distribution (SC) \\
\(V_{\alpha}(x)\)       & The stable vol distribution (SV) \\
\({L}_{\alpha}(x)\)     & The one-sided stable distribution, aka \(L_\alpha^\alpha(x)\) \\
\(\sas{x}\)             & The symmetric \(\alpha\)-stable distribution, aka \(L_\alpha^0(x)\) (SaS) \\
\({t}_{k}(x)\)          & Student's t distribution \\
\({\text{Wb}}(x;k)\)    & Weibull distribution \\
\({\text{IG}}(x;k)\)    & The inverse gamma distribution (See 
    \href{https://docs.scipy.org/doc/scipy/reference/generated/scipy.stats.invgamma.html}{\texttt{scipy.stats}}, but use \(a=k\)) \\
\({\text{IWb}}(x;k)\)   & The inverse Weibull distribution (See 
    \href{https://docs.scipy.org/doc/scipy/reference/generated/scipy.stats.invweibull.html}{\texttt{scipy.stats}}, but use \(c=k\)) \\\end{tabular}
\endgroup
\vspace{6pt}

\textbf{The PDF of Complex Distribution}

\begingroup
\setlength{\tabcolsep}{10pt} 
\renewcommand{\arraystretch}{1.25} 
\begin{tabular}{ l l }
\(\pdfGG(x; a, d, p)\)       & The generalized gamma distribution (GG) \\
\(\text{GenGamma}(x; s,c)\)  & The generalized gamma distribution too \((c=p, sc=d)\) \\
\(\afstable{x}\)  & The \(\alpha\)-stable distribution \\
\(\gscNx\)        & The generalized stable count distribution (GSC, new) \\
\(\gas{x}\)       & The generalized \(\alpha\)-stable distribution (GAS, new) \\
\(\gsas{x}\)      & The generalized symmetric \(\alpha\)-stable distribution (GSaS, new) \\
\(\chi_{\alpha,k}(x)\) & The fractional \(\chi\) distribution (new) \\
\(\chimean{x}\)        & The fractional \(\chi\)-mean distribution (FCM, new) \\
\(\mathcal{E}_{\alpha,k}(x)\) & The generalized exponential power distribution (GExpPow, new) \\
\end{tabular}
\endgroup
\vspace{6pt}

\textbf{Other Symbols}

\begingroup
\setlength{\tabcolsep}{10pt} 
\renewcommand{\arraystretch}{1.25} 
\begin{tabular}{ l l }
\(\alpha\)  & The shape parameter associated with the Le\'vy stability index \\
\(C\)       & The normalization constant for \(\gscNx\) \\
\(d\)       & The degree of freedom parameter in \(\gscNx\) \\
\(\theta\)  & The skewness parameter in \(\gas{x}\) and \(g_{\alpha}^{\theta}(x,s)\) \\
\(e^{-z^\alpha}\)        & The one-sided stretched exponential function \\
\(E_{\alpha}(z)\)        & The Mittag-Leffler function \\
\(k\)                    & The degree of freedom parameter in \(\gas{x}, {t}_{k}(x), \chimean{x}\) \\
\(\E(X^n|\text{Dist})\)   & The \(n\)-th moment of the distribution "Dist" \\
\(m_n\)                  & The \(n\)-th moment in a local context \\
\(p\)                    & The shape parameter in \(\gscNx\), unless mentioned otherwise \\
\(\Gamma(z)\)            & The gamma function \\
\(\Gamma(s,z)\)          & The upper incomplete gamma function \\
\(\gamma(s,z)\)          & The lower incomplete gamma function \\
\(g_{\alpha}^{\theta}(x,s)\)   & The skew-Gaussian kernel \\
\(W_{\lambda,\delta}(z)\)      & The Wright function (two-parameter)\\
\(    W \left[ 
        \begin{matrix}
        a,& b
        \\
        \lambda,& \mu
        \end{matrix}
    \right](z)
\)                             & The four-parameter Wright function \\
\end{tabular}
\endgroup
\vspace{6pt}

\section{Main Results}
\label{section:main-result}

We will go over the main results in this section, and leave the explanations and proofs 
for \iftoggle{fullpaper}{later sections.}{later.} 
The new symmetric \(\alpha\)-stable distribution \(\gsasDist\) is constructed in three steps.
Then its CF is used to extend the exponential power distribution.
The last part is to add the skewness \(\theta\) experimentally 
to the new \(\alpha\)-stable distribution \(\gasDist\).

\subsection{Recap of the Wright Function}
\label{section:main-recap-wright}

The mathematical foundation of this work is the Wright function\cite{Wright:1933, Wright:1935}
as shown in the center of Figure \ref{fig:gsc_hierarchy}. 
Its series representation is

\begin{align}\label{eq:wright-fn}
W_{\lambda,\delta}(z) &:= \sum_{n=0}^\infty
    \frac{z^n}{n!\,\Gamma(\lambda n+\delta)} 
    && (\lambda \ge -1, z \in \mathbb{C})
\end{align}

Four \((\lambda,\delta)\) pairs are used extensively in this work.
The first group of the two are
\(F_\alpha(z) := W_{-\alpha,0}(-z)\) and 
\(M_\alpha(z) := W_{-\alpha,1-\alpha}(-z) = F_\alpha(z) /(\alpha z)\), 
where \(\alpha \in [0,1]\).
They are used in various definitions and proofs,
such as in \eqref{eq:main-gsc-pdf}, \eqref{eq:main-chimean-pdf}, \eqref{eq:m-wright-M}.

In particular, \(M_\alpha(z)\) is called \emph{the M-Wright function}
or simply \emph{the Mainardi function} 
(See Appendix \ref{section:formula-m-wright})\cite{Mainardi:2010, Mathai:2017, Mainardi:2020}. 
Conceptually, the \emph{fractional extension} of a classic, exponential-based distribution 
hinges on two important properties: 
\(M_0(z) = \exp(-z)\) and \(M_{\frac{1}{2}}(z) = \frac{1}{\sqrt{\pi}} \exp(-z^2/4)\),
that lead to \eqref{eq:main-gg-by-gsc}.

The second group of the two are \(W_{-\alpha,-1}(-z)\) and
\(-W_{-\alpha,1-2\alpha}(-z)\). 
Their usefulness is discovered by me 
for the generation of random variables,
such as in  \eqref{eq:mu-sol-adv} and \eqref{eq:m-wright-series-diff}.
They are associated with the derivatives of \(F_\alpha(z)\) and \(M_\alpha(z)\).
In some cases, they lead to beautiful polynomial solutions.
Section \ref{section-gsc-random} is dedicated to this subject.

Let's begin with the 2017 discovery of the stable count (SC) distribution \(\scN_\alpha(x)\)\cite{Lihn:2017}.
Its PDF was first formulated as the conjugate prior of the one-sided stable distribution \(L_\alpha(x)\).
Soon afterwards, it was linked to the Wright function:
\begin{align} \label{eq:main-sc-pdf}
\scN_\alpha(x) &:=
    \frac{1}{\Gamma(\frac{1}{\alpha}+1)} 
    \frac{1}{x} L_\alpha\left(\frac{1}{x}\right)
=
    \frac{1}{\Gamma(\frac{1}{\alpha}+1)} 
    W_{-\alpha,0}(-x^\alpha)
    & (0 < \alpha \leq 1)
\end{align}

In 2020, a variant of SC was proposed, called the stable vol (SV) distribution \(V_\alpha(x)\)\cite{Lihn:2020},
that works better with a normal variable:
\begin{align} \label{eq:main-sv-pdf}
V_{\alpha}(x) &:= 
    \frac{\sqrt{2 \pi} \,\Gamma(\frac{2}{\alpha}+1)}{\Gamma(\frac{1}{\alpha}+1)} \,
    \scN_{\frac{\alpha}{2}}(2 x^2)
    = 
        \frac{ \sqrt{2\pi} }{ \Gamma(\frac{1}{\alpha}+1) }
        \, W_{-\frac{\alpha}{2},0} \left( -{(\sqrt{2} x)}^\alpha \right)
    & (0 < \alpha \leq 2)
\end{align}

As the readers can observe, \(F_\alpha(z)\), \(M_\alpha(z)\), 
\(\scN_\alpha(x)\), \(V_{\alpha}(x)\), \(L_{\alpha}(x)\)
all point to a GG-style pattern that subsumes a large number of one-sided distributions.
The pattern is as follows.


\subsection{New One-Sided Distributions}
\label{section:one-sided-definitions}

\begin{definition}[Generalized stable count distribution (GSC)]\label{def:gsc}

GSC is a four-parameter one-sided distribution family, whose PDF is defined as

\begin{align}
\label{eq:main-gsc-pdf}
\gscNx &:= C \, {\left( \frac{x}{\sigma} \right)}^{d-1}
         \, \FWrfn{\alpha}{ {\left( \frac{x}{\sigma} \right)}^{p} } 
        && (x \ge 0)
\end{align}
where \(\alpha \in [0,1]\) controls the shape of the Wright function;
\(\sigma\) is the scale parameter;
\(p\) is also the shape parameter controlling the tail behavior \((p\ne 0, dp \ge 0)\);
\(d\) is the \emph{degree of freedom} parameter. 
When \(\alpha \to 1\), the PDF becomes a Dirac delta function: \(\delta(x-\sigma)\)
assuming \(\sigma\) is finite.
When \(d \ge 1\), all the moments of GSC exist and have closed forms.
\iftoggle{fullpaper}{See Section \ref{section:gsc} 
for more detail.}{} 

\iftoggle{fullpaper}{
The normalization constant \(C\) will be derived in Section \ref{section:gsc-const} 
and is shown below:
}{
The normalization constant \(C\) is:
}

\begin{align}\label{eq:main-gsc-const}
C &= \left\{
    \begin{array}{ll}
        \frac{|p|}{\sigma} 
        \frac{\Gamma(\frac{\alpha d}{p})}{\Gamma(\frac{d}{p})} 
            &, \,\, \text{for} \,\, \alpha \ne 0, d \ne 0.
    \\ 
        \frac{|p|}{\sigma \alpha}  &, \, \text{for} \,\, \alpha \ne 0, d = 0.
    \end{array}
    \right.
\end{align}

\qedbar
\end{definition}
The original stable count distribution \eqref{eq:main-sc-pdf}
is simply \(\scN_\alpha(x) = \scN_\alpha(x; \sigma=1,d=1,p=\alpha)\).
\iftoggle{fullpaper}{See Section \ref{section:sc-sv}.}{} 

It is important to note that \(d\) and \(p\) are allowed to be negative, 
as long as \(dp \ge 0\). 
Several examples are found in Table \ref{tab:classic-map-zero} and Table \ref{tab:frac-dist-mapping}. This feature is used to define the characteristic distribution of FCM below.

Since the Wright function extends an exponential function to the fractional space,
GSC is the fractional extension
of the generalized gamma (GG) distribution\cite{Stacy:1962}, 
whose PDF is defined as:
\begin{equation}
\label{eq:main-pdf-gg}
\pdfGG(x; a, d, p) =
    \frac{|p|}{a \Gamma(\frac{d}{p})} {\left(\frac{x}{a}\right)} ^{d-1} e^{-(x/a)^p}.
\end{equation}
The parallel use of parameters is obvious,
except that \(a\) in GG is replaced with \(\sigma\) in GSC to avoid confusion with \(\alpha\).

GG is subsumed to GSC in two ways:

\begin{align}\label{eq:main-gg-by-gsc}
\pdfGG(x; \sigma, d, p)
    &:= \left\{
    \begin{array}{ll}
        \scN_0(x; \sigma, d=d-p, p) 
        &, \,\, \text{at} \,\, \alpha = 0.
    \\
        \scN_{\frac{1}{2}}\left(x; \sigma=\frac{\sigma}{2^{2/p}}, d=d-\frac{p}{2}, p=\frac{p}{2} \right)
        &, \,\, \text{at} \,\, \alpha = \frac{1}{2}.
    \end{array}
    \right.
\end{align}
The first line is treated as the definition of GSC at \(\alpha = 0\).
\iftoggle{fullpaper}{See Sections \ref{section:mapping-gg-0} and \ref{section:mapping-gg-one-half} 
for more detail.}{} 

Although the first line is more obvious,
it is the second line that leads to the fractional extension of the \(\chi\) distribution.

The readers are reminded that 
\(\FWrfn{\alpha}{x}\) used here is fully supported 
by existing software packages. This is due to the fact that
it can be converted to \(L_\alpha(x)\) 
by \(L_\alpha(x) = x^{-1} \FWrfn{\alpha}{x^{-\alpha}}\) from \eqref{eq:main-sc-pdf}. 
And \(L_\alpha(x)\) can be computed via 
\texttt{scipy.stats.levy\_stable} using 1-Parameterization with 
\texttt{beta}=1, \texttt{scale}= \(\cos(\alpha \pi/2)^{1/\alpha}\)
for \(0 < \alpha < 1\). 
(\(L_1(x) = \delta(x-1)\) can't be computed.)\footnote{
See Chapter 1 of \cite{Nolan:2020} for more detail on different parameterizations.
We wouldn't go into the issue of stable parameterizations.}
It might seem somewhat peculiar that we can use the existing implementation of 
\(L_\alpha(x)\) to develop all the new distributions
for proof of concept.

\begin{center}
\qedbar
\end{center}

When working with the stable law, \(\alpha\) in GSC may become \(\alpha/2\) , which produces
the more recognizable range of \(\alpha \in [0,2]\) as in the next definition.

Let \(\chi_{k}\) be the \(\chi\) distribution of \(k\) \emph{degrees of freedom},
then Student's t distribution \(t_k\) can be constructed by
\(t_k \sim \mcN / (\chi_{k} / \sqrt{k})\)\cite{JoramSoch:2019t}.
We prefer to standardize it by dividing out the \(\sqrt{k}\) term. So we use a new convention 
of a bar over \(\chi\) such that
\(\chibar_k := \chi_{k} / \sqrt{k}\).

\begin{definition}[Fractional \(\chi\)-mean distribution (FCM)]\label{def:fcm}

FCM \(\chimeanDist\) is the fractional extension of \(\chibar_k\).
It is the \textbf{main workhorse} in this paper.
It is a two-parameter sub-family of GSC, whose PDF for \(k > 0\) is: 

\begin{align}
\label{eq:main-chimean-pdf}
\chimean{x}
    &:= \scN_{\frac{\alpha}{2}}(x; \sigma=\fcmSigma, d=k-1, p=\alpha)
    && (x \ge 0, k > 0)
\\ \notag
    &\quad \quad \where \fcmSigma = 
        \frac{{|k|}^{1/2 - 1/\alpha}}{\sqrt{2}}  
\\ \notag
    &= \frac{\alpha \Gamma(\frac{k-1}{2})}{\Gamma(\frac{k-1}{\alpha})}
    \, {\left( \frac{1}{\fcmSigma} \right)}^{k-1} 
    {x}^{k-2} \,
    W_{-\frac{\alpha}{2}, 0} \left( -{
        \left( \frac{x}{\fcmSigma} \right)
    }^\alpha \right)
\end{align}
and \(\alpha \in [0,2]\) matches exactly L\'evy's stability index
in the stable law. See Lemma \ref{lemma-frac-chi1} when \(k = 1\).

The characteristic FCM (\(\chi_\phi\) in Section \ref{section:mixture}) is defined 
in the negative \(k\) space, whose PDF is:

\begin{align}
\label{eq:main-chimean-pdf-neg-k}
\chimeanDistNegK(x)
    &:= \scN_{\frac{\alpha}{2}}(x; \sigma=(\fcmSigma)^{-1}, d=-k, p=-\alpha)
    && (x \ge 0, k > 0)
\end{align}
This is used to subsume the exponential power distribution to GSaS below.
Negative \(k\) is not in the scope until \eqref{eq:main-gexppow-pdf-neg-k}.

To use a single expression for both positive and negative \(k\)'s,
let \(h(k) = (1+\sgn(k))/2\) be the Heaviside (step) function.
The PDF can be consolidated to:
\begin{align}
\label{eq:main-chimean-pdf-union}
\chimeanDist(x)
    &= \scN_{\frac{\alpha}{2}}(x; \sigma={(\fcmSigma)}^{\sgn(k)}, 
             d = k - h(k), p = \sgn(k) \,\alpha)
    && (x \ge 0, k \in \Rn0)
\end{align}
This looks cumbersome, but it can save some duplication in a few places.

\qedbar
\end{definition}

\begin{lemma}\label{lemma-fcm-reflection} (FCM Reflection Formula)
Assume \(k > 0\), FCM has the reflection formula that resembles \eqref{eq:mix-chi-phi}:
\begin{align}
\label{eq:main-fcm-reflection}
\chibar_{\alpha,-k}(x)
    &= \frac{1 }{x^3 \, \E(X|\chimeanDist) } 
       \, \chimeanDist\left(\frac{1}{x}\right)
\\ \notag
    & \where \E(X|\chimeanDist) = 
        \fcmSigma \,
        \frac{\Gamma(\frac{k-1}{2})} {\Gamma(\frac{k-1}{\alpha})} 
        \frac{\Gamma(\frac{k}{\alpha})} {\Gamma(\frac{k}{2})} 
    \,\, \text{(See \eqref{eq:fcm-moment})}
\end{align}
It is equivalent to the following relation in terms of the moments:
\begin{align}\label{eq:main-fcm-moment-reflection}
\E(X^n|\chibar_{\alpha,-k}) 
    &= \frac{\E(X^{-n+1}|\chimeanDist)} {\E(X|\chimeanDist) },
    \,\, k > 0.
\end{align}
In particular, when \(n=1\), \(\E(X|\chibar_{\alpha,-k}) = 1/\E(X|\chimeanDist)\).
The first moment is reciprocal.

\qedbar
\end{lemma}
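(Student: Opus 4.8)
The plan is to treat \eqref{eq:main-chimean-pdf-neg-k} as the definition of \(\chimeanDistNegK\) and verify that it coincides with the reciprocal transform \(\frac{1}{x^{3}\,\E(X|\chimeanDist)}\,\chimeanDist\!\left(\frac{1}{x}\right)\); the moment identity \eqref{eq:main-fcm-moment-reflection} then drops out by a single change of variable. First I would write both PDFs in the explicit Wright form of Definition \ref{def:gsc}, recalling that the Wright parameter for FCM is \(\frac{\alpha}{2}\). With \(d=k-1\), \(p=\alpha\), \(\sigma=\fcmSigma\), formula \eqref{eq:main-gsc-const} gives the constant \(C_{k}=\frac{\alpha}{\fcmSigma}\,\Gamma(\frac{k-1}{2})/\Gamma(\frac{k-1}{\alpha})\), so \(\chimeanDist(x)=C_{k}\,(x/\fcmSigma)^{k-2}\,\FWrfn{\alphahalf}{(x/\fcmSigma)^{\alpha}}\); with \(d=-k\), \(p=-\alpha\), \(\sigma=(\fcmSigma)^{-1}\) it gives \(C_{-k}=\alpha\fcmSigma\,\Gamma(\frac{k}{2})/\Gamma(\frac{k}{\alpha})\) and \(\chimeanDistNegK(x)=C_{-k}\,(x\fcmSigma)^{-k-1}\,\FWrfn{\alphahalf}{(x\fcmSigma)^{-\alpha}}\). (At \(k=1\) the constant \(C_{k}\) must be read from the second branch of \eqref{eq:main-gsc-const}, and it agrees with the limit of the first branch, consistent with Lemma \ref{lemma-frac-chi1}.)

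Next I would substitute \(x\mapsto 1/x\) in \(\chimeanDist\): the factor \((x/\fcmSigma)^{k-2}\) becomes \((x\fcmSigma)^{2-k}\) and, crucially, the Wright argument \((x/\fcmSigma)^{\alpha}\) becomes \((x\fcmSigma)^{-\alpha}\), which is exactly the Wright argument appearing in \(\chimeanDistNegK\). Multiplying by \(x^{-3}\) and comparing the prefactors of the common factor \(\FWrfn{\alphahalf}{(x\fcmSigma)^{-\alpha}}\), the claimed identity \eqref{eq:main-fcm-reflection} is seen to be equivalent to the single scalar relation \(\E(X|\chimeanDist)=\fcmSigma^{3}\,C_{k}/C_{-k}\). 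Inserting the two constants, \(\fcmSigma^{3}\,C_{k}/C_{-k}=\fcmSigma\,\frac{\Gamma(\frac{k-1}{2})}{\Gamma(\frac{k-1}{\alpha})}\,\frac{\Gamma(\frac{k}{\alpha})}{\Gamma(\frac{k}{2})}\), which is precisely the stated expression for \(\E(X|\chimeanDist)\); it then remains only to confirm that this matches the general GSC/FCM moment formula \eqref{eq:fcm-moment} at \(n=1\), which is the one external input I would cite rather than re-derive.

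Finally, for the moment form, I would integrate against the PDF relation: \(\E(X^{n}|\chimeanDistNegK)=\int_{0}^{\infty}x^{n}\,\frac{1}{x^{3}\,\E(X|\chimeanDist)}\,\chimeanDist\!\left(\frac{1}{x}\right)dx\), and the substitution \(u=1/x\) turns the right-hand side into \(\E(X^{-n+1}|\chimeanDist)/\E(X|\chimeanDist)\), i.e.\ \eqref{eq:main-fcm-moment-reflection}; the case \(n=1\) gives \(\E(X|\chimeanDistNegK)=1/\E(X|\chimeanDist)\). The two statements are genuinely equivalent because this change of variable is invertible, and because a density of GSC/Wright type on \((0,\infty)\) is determined by its moments, so either identity recovers the other. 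I expect the main obstacle to be purely bookkeeping: tracking the powers of \(\fcmSigma\) through the reciprocal substitution and evaluating \eqref{eq:main-gsc-const} with the correct signs of \(d\) and \(p\) — in particular the asymmetry between \(d=k-1\) and \(d=-k\) encoded by the Heaviside term in \eqref{eq:main-chimean-pdf-union}, which is exactly what forces the factor \(x^{-n+1}\) (rather than \(x^{-n}\)) in the moment relation and the \(x^{-3}\) (rather than \(x^{-2}\)) in the density relation.
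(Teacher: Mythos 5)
Your proof is correct, but it takes a genuinely different route from the paper's. The paper proves the lemma entirely at the level of moments: it computes \(\E(X^n|\text{RHS})\) by the change of variable \(t=1/x\), reducing it to \(\E(X^{-n+1}|\chimeanDist)/\E(X|\chimeanDist)\), then plugs the closed-form moment formulas \eqref{eq:fcm-moment} and \eqref{eq:fcm-moment-nge-k} into both sides and checks the gamma factors cancel. You instead verify the density identity \eqref{eq:main-fcm-reflection} directly: writing \(\chimeanDist\) and \(\chimeanDistNegK\) in explicit GSC/Wright form, observing that \(x\mapsto 1/x\) sends the Wright argument \((x/\fcmSigma)^{\alpha}\) to \((x\fcmSigma)^{-\alpha}\) and the power \((x/\fcmSigma)^{k-2}\) to \((x\fcmSigma)^{2-k}\), and reducing everything to the scalar identity \(\E(X|\chimeanDist)=\fcmSigma^{3}C_{k}/C_{-k}\), which checks out against \eqref{eq:main-gsc-const} and \eqref{eq:fcm-moment}. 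Your constants \(C_{k}=\frac{\alpha}{\fcmSigma}\Gamma(\frac{k-1}{2})/\Gamma(\frac{k-1}{\alpha})\) and \(C_{-k}=\alpha\fcmSigma\,\Gamma(\frac{k}{2})/\Gamma(\frac{k}{\alpha})\) are right, as is the bookkeeping of powers of \(\fcmSigma\). Your approach buys something real: it establishes the pointwise PDF identity without any appeal to moment determinacy, whereas the paper's moment-matching argument only directly yields \eqref{eq:main-fcm-moment-reflection} and leaves the passage back to the density identity implicit (and for heavy-tailed Wright densities with only finitely many moments, that passage is not automatic). The one soft spot in your write-up is precisely the closing remark that the two statements are equivalent "because a density of GSC/Wright type is determined by its moments" --- that claim is unnecessary for your proof (you derive the moment identity from the density identity, which is the harmless direction) and is not obviously true in general, so I would simply drop it.
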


When \(k\) is positive, 
\(k\)'s meaning is exactly the same as
that of Student's t, since \(\chibar_{1,k} = \chi_{k} / \sqrt{k}\),
which can be proven easily from \eqref{eq:main-gg-by-gsc}.
On the other hand,  
\(\chibar_{\alpha,1}\) is called \emph{fractional chi-1}.
It is used to construct the SaS distribution:
\(P_{\alpha} := L_\alpha^0 \sim \mcN / \chibar_{\alpha,1}\). 
This important point will be elaborated in Lemma \ref{lemma-frac-chi1}.


The parameter space outside \((\alpha = 1\,|\, k = 1)\) is our innovation -
a main contribution of this paper.
\iftoggle{fullpaper}{See Section \ref{section:fcm} for more detail.}{} 

\begin{lemma}\label{lemma-fcm-delta1} (FCM Asymptotics)
As \(k \to \pm\infty\), FCM becomes a delta function at its asymptotic mean. 
For \(k \gg 1\),
\begin{align}
\label{eq:main-fcm-delta1}
\chibar_{\alpha,k}(x)
    &\sim \delta(x-m_{\alpha}), \,\, \where 
    m_{\alpha} = {\alpha}^{-1/\alpha}.
\end{align}
On the other hand, \(\chibar_{\alpha,-k}(x) \sim \delta(x-1/m_{\alpha})\).

This property ensures that the two-sided distributions constructed from FCM 
become a normal distribution at \(k \to \pm\infty\). Hence, this lemma is associated with
the Central Limit Theorem.

Note that \(m_1 = 1\) is preserved for Student's t; \(m_2 = 1/\sqrt{2}\) is preserved for \(\alpha\)-stable.

The speed of approaching a delta function is determined by how fast the variance decreases: 
\(\E(X^2|\chimeanDist) - \E(X|\chimeanDist)^2 \to 0\).
It is faster for larger \(\alpha\), and can be very slow for small \(\alpha\) below 1.

\qedbar
\end{lemma}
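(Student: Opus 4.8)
The plan is to pass entirely through the moments of FCM and then invoke Chebyshev's inequality, so that no uniform control of the Wright-function density is ever needed. First I would record the FCM moment formula (the specialization of the GSC moment formula, namely \eqref{eq:fcm-moment}): for $k$ large enough that the moment in question is finite,
\[
\E(X^n\,|\,\chibar_{\alpha,k})
   \;=\; \fcmSigma^{\,n}\,
   \frac{\Gamma\!\left(\tfrac{k-1}{2}\right)}{\Gamma\!\left(\tfrac{k-1}{\alpha}\right)}\,
   \frac{\Gamma\!\left(\tfrac{k-1+n}{\alpha}\right)}{\Gamma\!\left(\tfrac{k-1+n}{2}\right)},
   \qquad \fcmSigma=\frac{k^{\,1/2-1/\alpha}}{\sqrt2}\ \ (k>0),
\]
which at $n=1$ reproduces the $\E(X|\chibar_{\alpha,k})$ appearing in Lemma \ref{lemma-fcm-reflection}. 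I would then feed each Gamma ratio to the Stirling asymptotic $\Gamma(y+a)/\Gamma(y+b)\sim y^{\,a-b}$ as $y\to\infty$, but applied with shift variable $y=k/\alpha$ in the first ratio and $y=k/2$ in the second, obtaining $\Gamma(\tfrac{k-1+n}{\alpha})/\Gamma(\tfrac{k-1}{\alpha})\sim(k/\alpha)^{n/\alpha}$ and $\Gamma(\tfrac{k-1}{2})/\Gamma(\tfrac{k-1+n}{2})\sim(k/2)^{-n/2}$. Multiplying these with $\fcmSigma^{\,n}=k^{\,n(1/2-1/\alpha)}2^{-n/2}$, the $k$-powers cancel exactly ($\tfrac n2-\tfrac n\alpha+\tfrac n\alpha-\tfrac n2=0$), the $2$-powers cancel, and only the $\alpha$-factor from the first ratio survives, giving
\[
\E(X^n\,|\,\chibar_{\alpha,k})\ \xrightarrow[\;k\to\infty\;]{}\ \alpha^{-n/\alpha}\;=\;m_\alpha^{\,n}
\]
for every fixed real exponent $n$.

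For the positive branch this already closes the argument: taking $n=1,2$ yields $\E(X|\chibar_{\alpha,k})\to m_\alpha$ and $\mathrm{Var}(X|\chibar_{\alpha,k})\to m_\alpha^2-m_\alpha^2=0$, so by Chebyshev's inequality $X\to m_\alpha$ in probability, and hence $\chibar_{\alpha,k}$ converges weakly to the point mass $\delta(x-m_\alpha)$. The consistency checks $m_1=1$ and $m_2=1/\sqrt2$ are then just the substitutions $\alpha=1,2$ (at $\alpha=2$ one even gets an exact delta at $1/\sqrt2$ for every $k$, since the underlying $\scN_{1}$ degenerates). For the negative branch I would not redo a density estimate but use the moment reflection \eqref{eq:main-fcm-moment-reflection}, $\E(X^n|\chibar_{\alpha,-k})=\E(X^{1-n}|\chibar_{\alpha,k})/\E(X|\chibar_{\alpha,k})$: at $n=1$ this tends to $1/m_\alpha$ and at $n=2$ to $m_\alpha^{-1}/m_\alpha=m_\alpha^{-2}$, so again the variance vanishes and $\chibar_{\alpha,-k}$ concentrates at $1/m_\alpha$ — equivalently, one may push the limiting $\delta(x-m_\alpha)$ through the change of variables in the reflection formula \eqref{eq:main-fcm-reflection}. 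The closing ``speed of convergence'' remark I would get by carrying one more Stirling term in $\mathrm{Var}(X|\chibar_{\alpha,k})$, exhibiting it as $O(k^{-c(\alpha)})$ with $c(\alpha)$ increasing in $\alpha$, hence fast for $\alpha$ near $2$ and slow for small $\alpha$; this is also why the lemma is morally a Central Limit Theorem for the Gaussian mixtures built on FCM.

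The main obstacle is bookkeeping rather than concept. One must (i) take $k$ large enough that every moment used is finite — including the $-1$st moment needed on the $-k$ side — which is legitimate because $\chibar_{\alpha,k}(x)$ behaves like $x^{\,k-2+\alpha}$ near the origin (the lowest nonvanishing term of $W_{-\alpha/2,0}(-(x/\fcmSigma)^\alpha)$ in \eqref{eq:main-chimean-pdf} is linear), so $\int_0 x^{\,n}\,x^{\,k-2+\alpha}\,dx$ converges for any fixed $n$ once $k$ is large; and (ii) keep the Stirling shift variable as $k/\alpha$ (resp.\ $k/2$) rather than $k$ itself, since it is exactly that shift which generates the compensating $\alpha^{-n/\alpha}$ and makes the limiting value $m_\alpha$ instead of $1$. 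Everything else is a one-line application of Chebyshev's inequality.
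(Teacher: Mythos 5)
Your proposal is correct and follows essentially the same route as the paper: Section \ref{section:fcm-at-inf-k} likewise applies the gamma-ratio asymptotic \eqref{eq:gamma-infty} to the moment formula \eqref{eq:fcm-moment} to get \(\E(X^n|\chimeanDist)\sim\alpha^{-n/\alpha}=m_\alpha^n\), concludes from the vanishing variance that FCM degenerates to \(\delta(x-m_\alpha)\), and handles the negative-\(k\) branch via the reflection rule. Your explicit appeal to Chebyshev's inequality just makes rigorous the ``zero variance implies point mass'' step that the paper states without elaboration.
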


\begin{definition}\label{def:inverse-fcm} (Inverse FCM)
The inverse distribution of FCM can be constructed by 
"inverting \(\sigma\) and negating \(d\) and \(p\)" so to speak.
The results are simple due to the FCM reflection rule:
\begin{align} \label{eq:inverse-fcm-pdf}
\chimeanDist^\dagger(x) &:=
    \frac{x \, \chibar_{\alpha, -k}(x)} {\E(X|\chibar_{\alpha,-k})}  
    = \scN_{\frac{\alpha}{2}}(x; \sigma=(\fcmSigma)^{-1}, d=-(k-1), p=-\alpha)
    && (k > 0)
\\ \label{eq:inverse-fcm-pdf-neg-k}
\chibar_{\alpha, -k}^\dagger(x) &:=
    \frac{x \, \chimeanDist(x)} {\E(X|\chibar_{\alpha,k})}  
    = \scN_{\frac{\alpha}{2}}(x; \sigma=\fcmSigma, d=k, p=\alpha)
\end{align}

Note that \eqref{eq:inverse-fcm-pdf-neg-k} is similar to \eqref{eq:main-chimean-pdf}
except \(d\) is incremented by one.
And \eqref{eq:inverse-fcm-pdf} is similar to \eqref{eq:main-chimean-pdf-neg-k}
except \(d\) is incremented by one.

\qedbar
\end{definition}

We are jumping ahead to emphasize that the inverse FCM allows 
the generalized exponential power distribution \(\gepDist\) in \eqref{eq:main-gexppow-pdf} 
defined as \(\mcN / \,\chibar_{\alpha, -k}^\dagger\) to be mapped to the negative \(k\) space of GSaS
as \(\mcN / \,\chibar_{\alpha, -k}^\dagger = \mcN \, \chibar_{\alpha, -k} = L_{\alpha,-k}\).

As a validation, \(\chibar_{\alpha, -1}^\dagger(x)\) is equal to the stable vol distribution
\(V_\alpha(x)\) in \eqref{eq:main-sv-pdf}, as expected by its original design
(explained in Section \ref{section:sc-sv}).


In Figure \ref{fig:pyro_sp500_fcm_dist}, we plot the FCM distribution
that came out of the S\&P 500's GSaS fit, which is a distribution of
the inverse of volatility.

\begin{figure}[htp]
    \centering
    \includegraphics[width=4in]{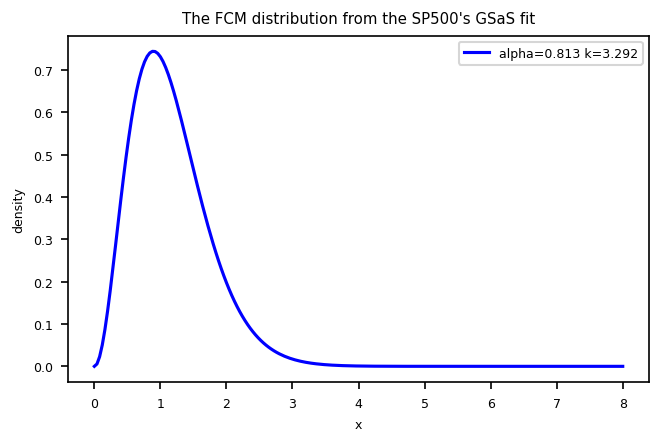}
    \caption{The FCM distribution from the S\&P 500's GSaS fit.
    Its scale has been standardized because the variance of the log-return distribution was normalized to one.
    This is a distribution of the inverse of volatility. 
    The smaller $x$ the more volatile the market was.}
    \label{fig:pyro_sp500_fcm_dist}
\end{figure}

All aspects of FCM are well defined up to this point. 
Next, we are going to take apart the \(\alpha\)-stable distribution.

\begin{definition}[The skew-Gaussian kernel]\label{def:skew-kernel}
The skew-Gaussian kernel \(\gskew\) is derived from the \(\alpha\)-stable law.

The minus-log of the stable CF is defined 
in Feller's parameterization\cite{Feller:1971} as 
\(\psi_\alpha^\theta(\zeta) := \exp\left( \text{sgn}(\zeta) \frac{i\theta\pi}{2} \right) {|\zeta|}^\alpha\),
where \(\theta\) is the skewness parameter confined by the so-called 
\emph{Feller-Takayasu diamond}: 
\(\theta \le \min\{\alpha, 2-\alpha\}\) and \(\alpha \in (0,2]\)\cite{Mainardi:2007, Mainardi:2020}.

The stable  PDF is \(\afstable{x} := 
    \frac{1}{2\pi} \int_{-\infty}^{\infty} d\zeta  
    \, \exp(-\psi_\alpha^\theta(\zeta)) \, e^{-ix\zeta}\).
We want it to be decomposed according to
\(\afstable{x} = \int_{0}^{\infty} s \, ds \, \gskew \, \chibar_{\alpha,1}(s)\). 
The solution of \(\gskew\) is

\begin{align}\label{eq:main-skew-gaussian}
\gskew
    &= \frac{1}{q\pi} \int_{0}^{\infty} dt
        \, \cos\left(\tau\, {(st)}^\alpha + \frac{x}{q}\, st \right)  e^{-t^2/2} 
    && (x \in \mathbb{R}, s \ge 0)
\\  & \where \notag
        q = \cos(\theta\pi/2)^{1/\alpha},
        \,\, \tau = \tan(\theta\pi/2), \,\, \theta \ne 1. 
    && 
\end{align}

\qedbar
\end{definition}

Feller's parameterization provides a clean environment to dissect the subordination structure
in the stable CF.
\iftoggle{fullpaper}{See Section \ref{section:skew-kernel} for more detail.}{}  

In general, \(\gskew\)
is not a distribution, since some parts of the function can be negative. 
But it satisfies \(\int_{-\infty}^{\infty} \gskew dx = 1/s\). 
One exception is \(g_{\alpha}^{0}(x,s) = \mcN(xs)\).
Also note that \(g_{\alpha}^{\theta}(x,0) = 1/q \sqrt{2\pi}\), independent of \(x\).

Most of the difficulty in computing the \(\alpha\)-stable PDF arises from integrating 
some kind of highly oscillatory functions.
See Section 3 of \cite{Nolan:2020} for a general discussion.

In our approach, that difficulty is concentrated in \(\gskew\) for a non-zero \(\theta\).
The range of the integral is confined by \(e^{-t^2/2}\), which is nice.
But when the oscillation frequency is high, it gets harder to integrate with 
the ordinary \texttt{quad} utility in the \texttt{scipy} package.
We switch to the \texttt{quadosc} function in the \texttt{mpmath} package\cite{mpmath:130}
with more success in accuracy, but some performance is sacrificed.


\begin{lemma}\label{lemma-frac-chi1}
\textbf{Fractional Chi-1}.
When the skewness is absent: \(\theta = 0\), we get \(q = 1, \tau = 0\), 
there is no "complicated oscillation" from the cosine function.
\(g_{\alpha}^{0}(x,s) = \mcN(xs)\) greatly simplifies the matter.
This leads to the SaS distribution as
\(P_{\alpha} := L_\alpha^0 \sim \mcN / \chibar_{\alpha,1}\).
That is, all the shape information in SaS is determined by \(\chibar_{\alpha,1}\).

The \(\chibar_{\alpha,1}\) term is called \emph{fractional chi-1},
which is the fractional version of the \(\chi\) distribution of one degree of freedom 
from the absolute of a random variable defined as such. 
This is a new concept, used as a building block for \(\chimeanDist\).

Here we briefly explain how to go from \(\chibar_{\alpha,1}\) to \(\chi_{\alpha,k}\).
\iftoggle{fullpaper}{See Section \ref{section:frac-chi-physical} in full detail.}{}

\(\chibar_{\alpha,1}\) is \(\scN_{\frac{\alpha}{2}}(x; \sigma=\frac{1}{\sqrt{2}}, d=0, p=\alpha)\),
whose PDF is
\begin{align} \label{eq:main-frac-chi1-pdf}
\chibar_{\alpha,1}(x)
    &=  2 x^{-1} \, \FWrfn{\alphahalf}{{(\sqrt{2}x)}^{\alpha}}
\end{align}

\(\chibar_{\alpha,1}\) has some peculiar behaviors. 
For \(\alpha < 1\), \(\chibar_{\alpha,1}(x)\) diverges as \(x^{\alpha-1}\) for small \(x\). 
And
\begin{align*}
\lim_{\alpha \to 0} \chibar_{\alpha,1}(x) &\approx \alpha/(e x)  &(x > \alpha \,\, \text{approximately})
\\
\chibar_{1,1}(x) &= \sqrt{2/\pi} \, e^{-x^2/2}
\\
\chibar_{2,1}(x) &= \delta(x-1/\sqrt{2})
\end{align*}
The first line implies 
\(\displaystyle \lim_{\alpha \to 0} P_{\alpha}(x) \approx \chibar_{\alpha,1}(x) / 2 \approx \alpha/(2e x)\),
for not too small \(x\), e.g. \(x > \alpha/2\).

\textbf{The \(x^{k-2}\) Term.}
In the \(k > 0\) case, the PDF \(\chimeanDist(x)\) comes from \(\chibar_{\alpha,1}(x)\) 
multiplied by the surface area of the \((k-1)\)-dimensional sphere, which is \(\propto x^{k-1}\).
Note that \(\chibar_{\alpha,1}(x) \propto x^{-1} \FWrfn{\frac{\alpha}{2}}{ {\left( 2 x^2 \right)}^{\alpha/2} } \). 
Hence, \(\chi_{\alpha,k}(x) \propto x^{k-2} \FWrfn{\frac{\alpha}{2}}{ {\left( 2 x^2 \right)}^{\alpha/2} } \).
This is the origin of the crucial \(x^{k-2}\) term.

Once \(\chi_{\alpha,k}\) is defined, it is standardized to FCM
such that \(\chimeanDist := \chi_{\alpha,k} / \fcmSigma\). 
FCM is used to construct all the subsequent two-sided distributions.

\end{lemma}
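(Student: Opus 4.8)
The plan is to settle the lemma's claims in the order they appear: the collapse of the skew‑Gaussian kernel at $\theta=0$, the extraction of $\chibar_{\alpha,1}$ from the defining mixture, the special values and endpoint behaviour, and finally the radial $x^{k-2}$ construction of $\chi_{\alpha,k}$. First, put $\theta=0$ in \eqref{eq:main-skew-gaussian}: then $q=\cos 0=1$ and $\tau=\tan 0=0$, so $\gskew=\frac1\pi\int_0^\infty\cos(xst)\,e^{-t^2/2}\,dt$, and the elementary identity $\int_0^\infty\cos(bt)\,e^{-t^2/2}\,dt=\sqrt{\pi/2}\,e^{-b^2/2}$ yields $\gskew=\frac{1}{\sqrt{2\pi}}e^{-(xs)^2/2}=\mcN(xs)$. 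Substituting this into $\afstable{x}=\int_0^\infty s\,ds\,\gskew\,\chibar_{\alpha,1}(s)$ from Definition \ref{def:skew-kernel} (at $\theta=0$, $\afstable{x}=\sas{x}$) turns it into $\sas{x}=\int_0^\infty s\,\mcN(xs)\,\chibar_{\alpha,1}(s)\,ds$, which is exactly the ratio‑mixture form \eqref{eq:mix-pdf-by-chi-ratio}; hence $P_\alpha\sim\mcN/\chibar_{\alpha,1}$, and since $\mcN$ carries no shape, all shape of $P_\alpha$ sits in $\chibar_{\alpha,1}$.

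Next, solve for $\chibar_{\alpha,1}$. Conditioning on the denominator (equivalently Lemma \ref{lemma-mix-cf-dist}) rewrites the mixture as $e^{-|t|^\alpha}=\text{CF}\{\sasDist\}(t)=\int_0^\infty e^{-t^2/(2s^2)}\chibar_{\alpha,1}(s)\,ds$. Setting $\lambda=t^2/2$ and substituting $v=1/s^2$ turns the right side into a genuine Laplace transform: $\tfrac12 v^{-3/2}\chibar_{\alpha,1}(v^{-1/2})$ is the density with Laplace transform $e^{-2^{\alpha/2}\lambda^{\alpha/2}}$, namely the rescaled one‑sided stable law of index $\alpha/2$, i.e.\ $x\mapsto\tfrac12 L_{\alpha/2}(x/2)$. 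Using $L_{\alpha/2}(y)=y^{-1}\FWrfn{\alphahalf}{y^{-\alpha/2}}$ from \eqref{eq:main-sc-pdf} and undoing both substitutions gives $\chibar_{\alpha,1}(x)=2x^{-1}\FWrfn{\alphahalf}{(\sqrt2 x)^\alpha}$, which, matched against Definition \ref{def:gsc}, is $\scN_{\alpha/2}(x;\sigma=1/\sqrt2,d=0,p=\alpha)$ — the $d=0$ branch of \eqref{eq:main-gsc-const}, with $C=|p|/(\sigma\cdot\tfrac\alpha2)=2\sqrt2$ producing the leading constant $2$. A fully self‑contained variant skips characteristic functions: expand $W_{-\frac{\alpha}{2},0}(w)=\sum_{n\ge1} w^n/(n!\,\Gamma(-\alpha n/2))$, integrate $\int_0^\infty s^{\alpha n}\mcN(xs)\,ds$ termwise, apply the reflection formula for $\Gamma$, and resum to the Feller series for $\sas{x}$.

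Then dispatch the special values, the endpoint, and the $x^{k-2}$ term. The small‑$x$ blow‑up $\chibar_{\alpha,1}(x)\sim x^{\alpha-1}$ for $\alpha<1$ is just the $n=1$ term of the Wright series above (the $n=0$ term drops out since $1/\Gamma(0)=0$). At $\alpha=1$, $\FWrfn{1/2}{z}=\tfrac{z}{2\sqrt\pi}e^{-z^2/4}$ (equivalently $M_{1/2}(z)=\pi^{-1/2}e^{-z^2/4}$) gives $\chibar_{1,1}(x)=\sqrt{2/\pi}\,e^{-x^2/2}$, the half‑normal $\chi_1$; at $\alpha=2$ the Wright shape $\alpha/2\to1$ and Definition \ref{def:gsc} makes GSC a Dirac delta at its scale, so $\chibar_{2,1}(x)=\delta(x-1/\sqrt2)$; and $\lim_{\alpha\to0}\chibar_{\alpha,1}(x)\approx\alpha/(ex)$ (for $x$ away from $0$) follows from the $\alpha=0$, i.e.\ generalized‑gamma, form \eqref{eq:main-gg-by-gsc}: $\scN_0(x;1/\sqrt2,0,\alpha)=\pdfGG(x;1/\sqrt2,\alpha,\alpha)=\alpha\sqrt2\,(\sqrt2 x)^{\alpha-1}e^{-(\sqrt2 x)^\alpha}\to\alpha/(ex)$. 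For the $x^{k-2}$ term, $\chi_{\alpha,k}$ is the length distribution of a $k$‑vector built from the fractional‑chi‑1 mechanism: its PDF is $\chibar_{\alpha,1}(r)$ times the $(k-1)$‑sphere surface area $\propto r^{k-1}$, and since $\chibar_{\alpha,1}(r)\propto r^{-1}\FWrfn{\alphahalf}{(2r^2)^{\alpha/2}}$ this yields $\chi_{\alpha,k}(r)\propto r^{k-2}\FWrfn{\alphahalf}{(2r^2)^{\alpha/2}}$; the standardization $\chimeanDist:=\chi_{\alpha,k}/\fcmSigma$, with $\fcmSigma$ fixed by the convention $\chibar_{1,k}=\chi_k/\sqrt k$, lands on \eqref{eq:main-chimean-pdf}.

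The delicate points all live in the second and third steps. Justifying the termwise Gaussian integration and the Laplace inversion needs the absolute‑convergence and large‑argument (Mainardi) estimates for $W_{-\alpha/2,0}$, since the series converges everywhere but its tail is controlled only through the Wright asymptotics. The endpoint $x\to0^+$ must be treated with care when $\alpha<1$: the density diverges like $x^{\alpha-1}$, yet the mixing integral $\int_0^\infty s\,(\cdots)\,ds$ stays finite because $s\cdot s^{\alpha-1}=s^\alpha$ is integrable at $0$. And the $\alpha\to0$ statement is genuinely singular — $\scN_0$ is not literally $\alpha=0$ in \eqref{eq:main-gsc-pdf} but the separate generalized‑gamma formula \eqref{eq:main-gg-by-gsc} — so "$\lim_{\alpha\to0}$" must be read as the pointwise statement on $\{x>\alpha/2\}$ that the lemma records. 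Everything else is identification against Definition \ref{def:gsc} and \eqref{eq:main-gg-by-gsc}, which is bookkeeping.
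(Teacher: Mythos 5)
Your proposal is correct, and on most sub-claims (the collapse of the kernel to $\mcN(xs)$ via the Gaussian cosine integral, the half-normal case at $\alpha=1$, the $\alpha\to 0$ limit through the generalized-gamma form, the surface-area origin of $x^{k-2}$) it coincides with what the paper does in Sections \ref{section:properties-skew-kernel}, \ref{section:frac-chi-1}, and \ref{section:frac-chi-physical}. The one genuinely different step is how you obtain the PDF \eqref{eq:main-frac-chi1-pdf}: the paper never inverts anything — it applies the previously established Lambda decomposition \eqref{eq:lambda-decomp} to the term $e^{-\cos(\Theta)\zeta^\alpha}$ inside the stable Fourier integral and then simply identifies $\chibar_{\alpha,1}(s)$ with the rescaled stable vol distribution, $\frac{2\Gamma(1/\alpha+1)}{\sqrt{2\pi}}\,s^{-1}V_\alpha(s)$, in \eqref{eq:stable-decomp-fcm}. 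You instead treat the mixture identity $e^{-|t|^\alpha}=\int_0^\infty e^{-t^2/(2s^2)}\chibar_{\alpha,1}(s)\,ds$ as an equation to be solved, substitute $v=1/s^2$, and recognize the resulting Laplace transform as that of the one-sided stable law of index $\alpha/2$ scaled by $2$; unwinding the substitutions and using $L_{\alpha/2}(y)=y^{-1}\FWrfn{\alphahalf}{y^{-\alpha/2}}$ reproduces $2x^{-1}\FWrfn{\alphahalf}{(\sqrt{2}x)^\alpha}$ exactly. Your route is more self-contained — it effectively re-proves the second line of the Lambda decomposition rather than citing it — at the cost of invoking Laplace-inversion uniqueness; the paper's route is shorter but leans on prior work. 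Two smaller divergences, both harmless: you get $\chibar_{2,1}=\delta(x-1/\sqrt{2})$ from the general $W_{-1,0}$ delta-function limit in Definition \ref{def:gsc}, whereas the paper computes the moments $\E(X^n|\chibar_{2,1})=2^{-n/2}$ and shows the variance vanishes; and you supply a reason (the $n=1$ Wright-series term) for the $x^{\alpha-1}$ divergence at small $x$, which the paper merely asserts.
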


\subsection{New Two-Sided Distributions}
\label{section:two-sided-definitions}

We focus on the symmetric case where \(\theta = 0\) first, 
and defer the skew case \(\theta \ne 0\) 
to the latter.

When \(\theta = 0\), we have \(g_{\alpha}^{0}(x,s) = \mcN(xs)\),
independent of \(\alpha\).
The new generalized symmetric distribution is solely based on FCM 
that subsumes the SaS, Student's t, and exponential power distributions.

\begin{definition}\label{def:main-gsas}
(Generalized symmetric \(\alpha\)-stable distribution (GSaS))
GSaS is elegantly constructed as a ratio distribution: 
\(L_{\alpha,k} \sim \mcN / \, \chimeanDist\).
Hence, its PDF is

\begin{align}\label{eq:main-gsas-def}
\gsas{x}
    &:= \int_{0}^{\infty} s \, ds \,
      \mcN(xs)
      \, \chimeanDist(s)
    && (x \in \R, \alpha \in (0,2], k \in \Rn0)
\end{align}
It follows naturally that \(L_{\alpha,1} = P_{\alpha}\) for SaS,
and \(L_{1,k} = t_k\) for Student's t when \(k > 0\). On the other hand,
\(L_{\alpha,-1} = \mathcal{E}_{\alpha}\) for the exponential power.

\begin{lemma} \label{lemma-gsas-clm} 
(GSaS Version of Central Limit Theorem)
When \(|k| \gg 1\), it is straightforward from Lemma \ref{lemma-fcm-delta1} that 
\(\gsasDist\) becomes a normal distribution.
This statement is similar to the Central Limit Theorem (CLM). 
The variance asymptotics of \(\gsasDist\)
is \((\alpha^{2/\alpha})^{\sgn(k)}\) at \(\alpha \to 2\) or \(|k| \to \infty\).
This is consistent with both \(\alpha\)-stable at \(\alpha=2\) and Student's t at \(\alpha=1\) by design.

\qedbar
\end{lemma}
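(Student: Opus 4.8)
The plan is to derive the limiting behavior of $\gsasDist$ as $|k| \to \infty$ directly from the continuous Gaussian mixture representation \eqref{eq:main-gsas-def} together with the FCM asymptotics of Lemma \ref{lemma-fcm-delta1}. First I would substitute the weak limit $\chimeanDist(s) \sim \delta(s - m_\alpha^{\sgn(k)})$, with $m_\alpha = \alpha^{-1/\alpha}$, into the mixing integral. For $k \gg 1$ this collapses $\int_0^\infty s\,ds\,\mcN(xs)\,\chimeanDist(s)$ to $m_\alpha \,\mcN(x\,m_\alpha)$, which is exactly the PDF of a centered normal variable with standard deviation $1/m_\alpha = \alpha^{1/\alpha}$, hence variance $\alpha^{2/\alpha}$. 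For $k \ll -1$ the delta sits at $1/m_\alpha$ instead, so the scale is $m_\alpha = \alpha^{-1/\alpha}$ and the variance is $\alpha^{-2/\alpha}$; the two cases combine into the stated $(\alpha^{2/\alpha})^{\sgn(k)}$.

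The second step is to justify the consistency claims. At $\alpha = 2$ one has $m_2 = 2^{-1/2}$, so the variance asymptotic is $2^{2/2} = 2$ (for $k>0$), which matches the fact that $L_{2,k}$ should be a centered normal whose scale agrees with the $\alpha$-stable normalization at $\alpha = 2$; I would cross-check this against $\chibar_{2,1}(x) = \delta(x - 1/\sqrt{2})$ from Lemma \ref{lemma-frac-chi1}, which gives $P_2 = \mcN/\delta(\cdot - 1/\sqrt2)$, i.e. variance $2$, so the $k\to\infty$ limit lands in the same place. At $\alpha = 1$, $m_1 = 1$, so the variance asymptotic is $1$, matching the fact that Student's $t_k$ standardized by $\chibar_k = \chi_k/\sqrt k$ has variance $k/(k-2) \to 1$ as $k \to \infty$. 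These are the ``consistent with both $\alpha$-stable and Student's t by design'' assertions.

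The third step is to quantify the rate of convergence, i.e. to make the ``variance asymptotics'' statement precise rather than merely a weak limit. Here I would expand the variance of $\gsasDist$ as $\E(X^2\mid \gsasDist) = \E(S^{-2}\mid \chimeanDist)$ using the mixture (a normal with random precision $S$ has second moment $\E(S^{-2})$), and then write $\E(S^{-2}) = \E(S)^{-2}\bigl(1 + \Theta(\mathrm{Var}(S)/\E(S)^2)\bigr)$ via a delta-method / Taylor expansion around the shrinking-variance mean. The closed forms for $\E(X\mid\chimeanDist)$ and $\E(X^2\mid\chimeanDist)$ in terms of ratios of Gamma functions (referenced near \eqref{eq:main-fcm-reflection} and \eqref{eq:fcm-moment}) let me show $\mathrm{Var}(S) = \E(X^2\mid\chimeanDist) - \E(X\mid\chimeanDist)^2 \to 0$, with Gamma-ratio asymptotics (Stirling) controlling the speed; this is where the ``faster for larger $\alpha$, very slow for small $\alpha$ below 1'' remark gets its proof.

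The main obstacle I anticipate is precisely this last step: turning the heuristic ``$\chimeanDist \to \delta$'' into a rigorous statement about the variance (and ideally uniform convergence of the density) of $\gsasDist$, because the convergence is genuinely slow for small $\alpha$ and one must be careful that the tails of $\chimeanDist$ near $s = 0$ — where $\mcN(xs)$ is largest and $s^{-2}$ blows up — do not spoil the limit. I would handle this by splitting the mixing integral at a cutoff $s = \varepsilon$, bounding the small-$s$ contribution using the explicit Wright-function form of $\chimeanDist$ in \eqref{eq:main-chimean-pdf} (whose behavior as $s \to 0^+$ is polynomial, $\propto s^{k-2}$, so for $k$ large it is strongly suppressed there), and applying dominated convergence on the bulk. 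The rest of the argument is routine Gamma-function asymptotics.
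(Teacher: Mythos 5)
Your proposal is correct and follows essentially the same route as the paper: the distributional limit is read off from Lemma \ref{lemma-fcm-delta1}, and the variance asymptotic comes from the mixture identity \(\E(X^2|\gsasDist) = \E(X^{-2}|\chimeanDist)\) combined with the large-\(|k|\) FCM moment asymptotic \(\E(X^n|\chimeanDist)\sim\alpha^{-n/\alpha}\) (and its reflection for \(k<0\)). The only divergence is your third step — the delta-method expansion, the cutoff at \(s=\varepsilon\), and dominated convergence are not needed, because the paper evaluates \(\E(X^{-2}|\chimeanDist)\) exactly from the closed-form Gamma-ratio moment formula \eqref{eq:fcm-moment} and then applies \(\Gamma(x+a)/\Gamma(x+b)\sim x^{a-b}\), which disposes of the \(s\to 0\) tail you were worried about without any truncation argument.
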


Its CF becomes a product distribution, according to \eqref{mix-cf}, such as

\begin{align}\label{eq:main-gsas-cf}
\phi_{\alpha,k}(\zeta) := 
\text{CF}\{\gsasDist\}(\zeta)
    &= \sqrt{2\pi} \int_{0}^{\infty} \, ds \,
      \mcN\left(\frac{\zeta}{s}\right)
      \, \chimeanDist(s)
    && (\zeta \in \mathbb{R})
\end{align}

For \(k > 0\), it is an expanded form of the inverse lambda decomposition \eqref{eq:lambda-decomp}.
It describes, in the \(\theta=0\) case, how the \(\alpha\)-stable law is modified from
\(\exp(-{|\zeta|}^\alpha)\) at \(k = 1\)
to a generalized form of \(k > 1\).

Since \(\gsas{x}\) with \(k > 0\) covers Student's t as well,
mathematically speaking, \(\phi_{\alpha,k}(\zeta)\) forms a high transcendental function
that subsumes both the stretched exponential function and 
the modified Bessel function of the second kind (due to \(\text{CF}\{t_{k}\}(\zeta)\)).
\iftoggle{fullpaper}{See Section \ref{section:gsas-cf} for more detail.}{} 

\qedbar
\end{definition}

\begin{definition}[Generalized exponential power distribution (GEP)]\label{def:gexppow}
According to Lemma \ref{lemma-mix-cf-dist}, \(\phi_{\alpha,k}(\zeta)\) can be treated as a distribution.
In this case, its \(k=1\) base is the exponential power distribution whose PDF is 
\(\mathcal{E}_\alpha(x) = e^{-{|x|}^\alpha}/2\GammaAlpha\)\cite{GSL_ExponentialPowerDistribution, Wolfram_ExponentialPowerDistribution}.
We take it to the next level:
\begin{align} \label{eq:main-gexppow-pdf}
\gepDist(x) 
    &:= \frac{1}{\E(X|\chimeanDist)} 
    \, \int_{0}^{\infty} \, ds \,
      \mcN\left(\frac{x}{s}\right)
      \, \chimeanDist(s) 
    && (x \in \mathbb{R}, k > 0)
\\ \label{eq:main-gexppow-pdf-by-dagger}
    &= \int_{0}^{\infty} \, \frac{ds}{s} \,
      \mcN\left(\frac{x}{s}\right)
      \, \chibar_{\alpha, -k}^\dagger(s)  
\end{align}
where the normalization constant \(\E(X|\chimeanDist)\) is
FCM's first moment in \eqref{eq:fcm-moment}. 
And \(\mathcal{E}_{\alpha,1} = \mathcal{E}_\alpha\).

Furthermore, according to Lemma \ref{lemma-mix--ratio-vs-prod},
\eqref{eq:main-gexppow-pdf-by-dagger} is inverted to a ratio distribution form:
\begin{align} \label{eq:main-gexppow-pdf-neg-k}
\gepDist(x) 
    &= \int_{0}^{\infty} s \, ds \,
      \mcN\left(xs\right)
      \, \chibar_{\alpha, -k}(s) 
\\ \notag
    &= L_{\alpha,-k}(x)
\end{align}
Therefore, GSaS subsumes GEP beautifully
when the negative \(k\) domain is defined properly in FCM.

The \(k\) dimension mitigates a flaw in \(\mathcal{E}_\alpha(x)\): when \(\alpha \le 1\),
the slope of the PDF is discontinuous at \(x = 0\).
That is, \(\lim_{x \to 0^-} \frac{d}{dx} \, \mathcal{E}_{\alpha}(x)\) \(\ne 
\lim_{x \to 0^+} \frac{d}{dx} \, \mathcal{E}_{\alpha}(x) \).  
Now as long as \(k \gt 1\), \(\frac{d}{dx} \, \mathcal{E}_{\alpha,k}(x)\) 
is continuous everywhere.
\iftoggle{fullpaper}{See Section \ref{section:gen-exp-pow-dist} for more detail.}{}

\qedbar
\end{definition}

Since \(\mcN(xs)\) in \eqref{eq:main-gsas-def} 
doesn't take either \(\alpha\) or \(k\). All the shape information comes
from our innovation, \(\chimeanDist(s)\). 
For instance, the peak PDF \(L_{\alpha,k}(0)\) is 
the first moment of \(\chimeanDist\).


\textbf{Excess Kurtosis}. 
The moments of GSaS have closed form solutions. Of particular interest is the excess kurtosis, 
plotted in Figure \ref{fig:gsas_ex_kurt_by_alpha} in the \((k,\alpha)\) coordinate. 
Notice that a major division occurs along the line of \(k = 5 - \alpha\). 
In the region where \(0 < k \le 5 - \alpha\),
there are complicated patterns caused by the infinities of the gamma function. 
Only small pockets of valid kurtosis exist.

\begin{lemma}\label{lemma-main-kurtosis}
In the region where \(k > 5 - \alpha\), the excess kurtosis is 
a continuous function with positive values.
At large \(k\)'s, the closed form of the moments can be expanded by Sterling's formula.
The excess kurtosis (exKurt) becomes part of a linear equation:

\begin{align}\label{eq:gsas-linear-kurtosis}
{\left( s -\frac{1}{2} \right)} = 
    \left( \frac{k-3}{4} \right) \,
    \log\left( 1 + \frac{\text{exKurt}}{3} \right),
    \quad \where \, s = 1/\alpha
\end{align}

This equation shows how GSaS works under the \textbf{Central Limit Theorem}.
GSaS approaches a normal distribution: \(L_{\alpha,k} \to \mcN\) when the excess kurtosis becomes zero.
This can happen from two directions:
when \(\alpha \to 2\), or when \(k \to \infty\).

\qedbar
\end{lemma}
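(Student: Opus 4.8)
The plan is to start from the closed-form moments of GSaS. Since $L_{\alpha,k} \sim \mcN/\chimeanDist$ is a Gaussian scale mixture, the even moments factor: $\E(X^{2n}|L_{\alpha,k}) = \E(Z^{2n}|\mcN)\cdot\E(X^{-2n}|\chimeanDist)$, where the standard-normal even moments are $\E(Z^{2n}|\mcN)=(2n-1)!!$. So $\E(X^2) = \E(X^{-2}|\chimeanDist)$ and $\E(X^4) = 3\,\E(X^{-4}|\chimeanDist)$, and the excess kurtosis is
\begin{equation*}
\text{exKurt} = \frac{\E(X^4|L_{\alpha,k})}{\E(X^2|L_{\alpha,k})^2} - 3
    = 3\left(\frac{\E(X^{-4}|\chimeanDist)}{\E(X^{-2}|\chimeanDist)^2} - 1\right).
\end{equation*}
First I would substitute the FCM moment formula (the one referenced as \eqref{eq:fcm-moment}, giving $\E(X^m|\chimeanDist)$ as a ratio of gamma functions times a power of $\fcmSigma$); the $\fcmSigma$ factors and several gamma arguments cancel in the dimensionless ratio $\E(X^{-4})/\E(X^{-2})^2$, leaving something of the shape $\Gamma(\cdot)\Gamma(\cdot)/\Gamma(\cdot)^2$ with arguments linear in $k$, $1/\alpha$, and constants. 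That reduces the claim to an asymptotic identity among gamma functions.

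The second step is the large-$k$ expansion. I would write $1+\text{exKurt}/3$ as that gamma ratio, take logarithms, and apply Stirling's formula $\log\Gamma(z) = (z-\tfrac12)\log z - z + \tfrac12\log(2\pi) + O(1/z)$ to each of the four gamma factors. The leading $z\log z$ and linear-in-$z$ pieces should cancel because the arguments sum correctly (the ratio is "balanced"), so that $\log(1+\text{exKurt}/3)$ collapses to a combination of the $-\tfrac12\log z$ terms plus lower order. Collecting these logarithmic terms is where the structure $(k-3)/4$ and the $s-\tfrac12$ with $s=1/\alpha$ must emerge: I expect the $-\tfrac12\log z$ contributions to combine into something like $\log\!\big(\tfrac{(k/\alpha)(\,\cdot\,)}{(k/2)(\,\cdot\,)}\big)$-type factors, and then a further expansion in $1/k$ of each $\log(1 + c/k)\approx c/k$ linearizes it, producing $\big(s-\tfrac12\big) = \big(\tfrac{k-3}{4}\big)\log(1+\text{exKurt}/3)$ up to $O(1/k)$ corrections.

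The main obstacle will be bookkeeping in the second step: making sure the right gamma arguments are paired so that the dominant Stirling terms genuinely cancel, and tracking which $1/k$-order terms survive to give exactly the coefficient $(k-3)/4$ rather than $(k-c)/4$ for some other constant. A subtle point is that the equation is presented as exact-looking but is really the leading-order asymptotic relation "at large $k$" (the lemma says so), so I would state precisely in what sense — e.g., that both sides agree up to $O(1/k)$ — and verify the claimed normal limit as a consistency check: as $\alpha\to 2$ we get $s\to\tfrac12$ so the left side vanishes, forcing $\text{exKurt}\to 0$; and as $k\to\infty$ the factor $(k-3)/4\to\infty$ forces $\log(1+\text{exKurt}/3)\to 0$, again $\text{exKurt}\to 0$, matching Lemma \ref{lemma-gsas-clm}. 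I would also sanity-check the two classical edges, $\alpha=1$ (Student's $t_k$, where $\text{exKurt}=6/(k-4)$ for $k>4$) and the known SaS/$\alpha$-stable behavior, to confirm the coefficients.
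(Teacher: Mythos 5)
Your proposal follows essentially the same route as the paper: express the kurtosis as $3\,\E(X^{-4}|\chimeanDist)/\E(X^{-2}|\chimeanDist)^2$ via the Gaussian scale-mixture moment factorization, reduce it to a balanced gamma-function ratio (the paper writes it as $3\frac{k-5}{k-3}\frac{\Gamma(s(k-5))\Gamma(s(k-1))}{\Gamma(s(k-3))^2}$ and recenters at $k_3=k-3$), then apply Stirling's formula and linearize the logarithm in $1/k$ to obtain \eqref{eq:gsas-linear-kurtosis} as a leading-order large-$k$ relation. The bookkeeping you flag is handled in the paper precisely by that $k_3=k-3$ shift, which makes the Stirling terms pair off symmetrically; otherwise your plan matches the paper's proof.
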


The contour plot of excess kurtosis is shown in the \((k,s)\) coordinate 
in Figure \ref{fig:gsas_ex_kurt_by_s}. 
It is visually amusing. Notice the singular point at \(s=1/2, k=3\).

The excess kurtosis of \(\gepDist\) has a closed form too:
\begin{align} \label{eq:gexppow-exkurt}
\text{exKurt}(\gepDist) &=
    \frac
    {3 \, \E(X|\chimeanDist) \, \E(X^5|\chimeanDist)}
    {\E(X^3|\chimeanDist)^2} 
    - 3
\end{align}
which is shown in Figure \ref{fig:gexppow_ex_kurt_by_s}.
The linear contour levels are clear indication that the distribution 
has an orderly structure in terms of kurtosis.

\begin{figure}[htp]
    \centering
    \includegraphics[width=7in]{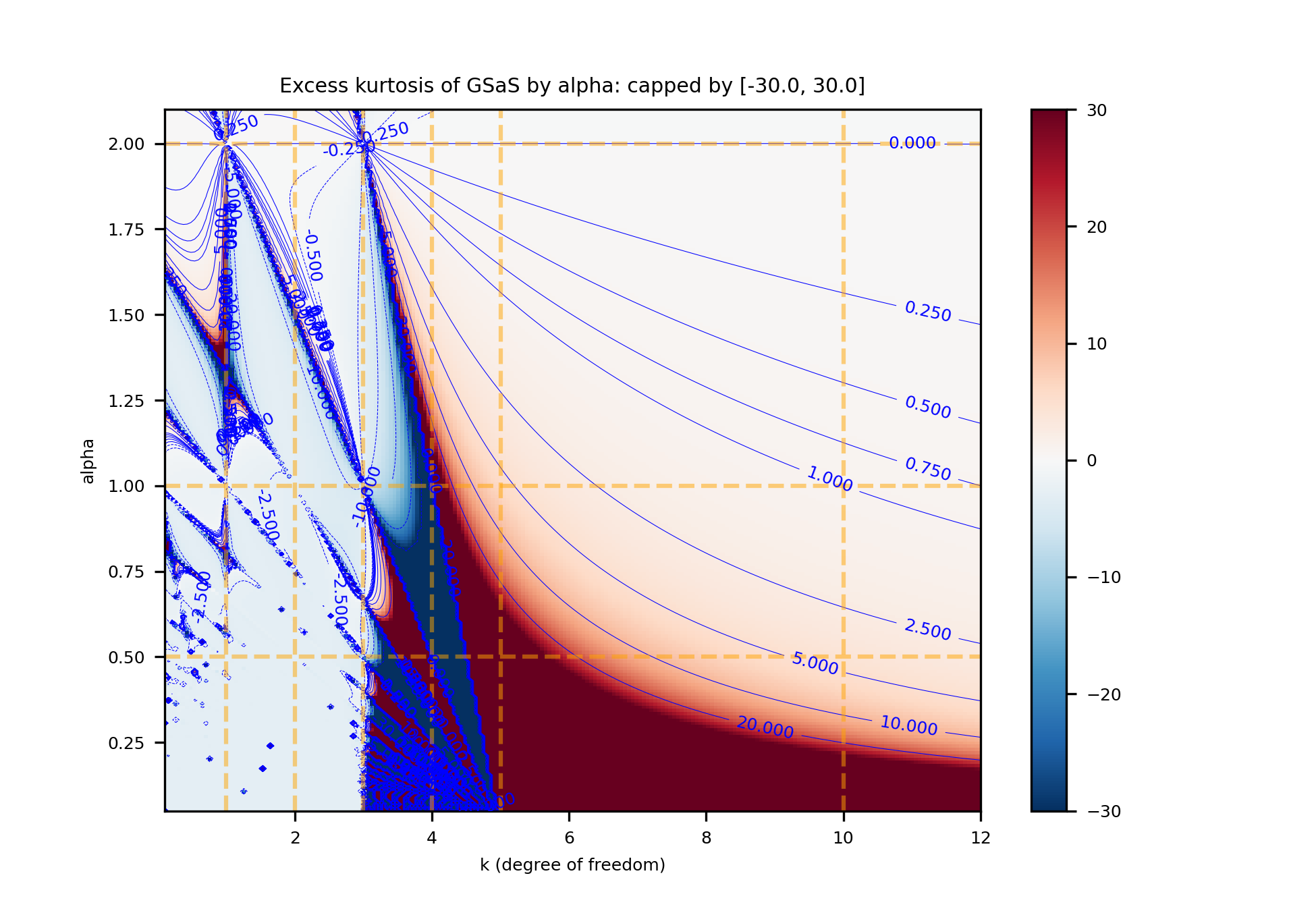}
    \caption{The contour plot of excess kurtosis in GSaS by \((k, \alpha)\). }
    \label{fig:gsas_ex_kurt_by_alpha}
\end{figure}

\begin{figure}[htp]
    \centering
    \includegraphics[width=7in]{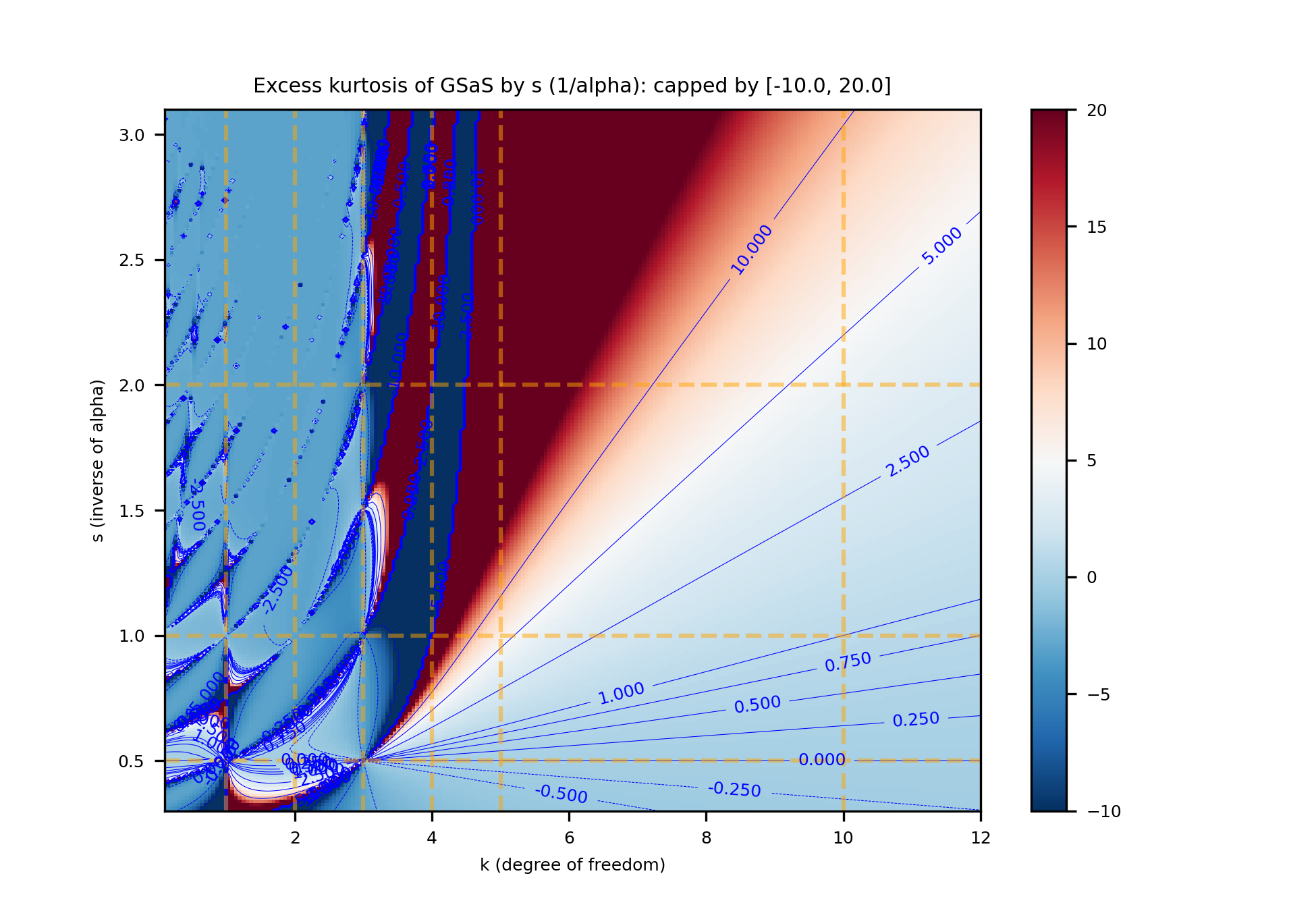}
    \caption{The contour plot of excess kurtosis in GSaS by \((k, s)\)
        where \(s = 1/\alpha\),
        in order to show the linear equation described in 
        \eqref{eq:gsas-linear-kurtosis} for large \(k\)'s.}
    \label{fig:gsas_ex_kurt_by_s}
\end{figure}

\begin{figure}[htp]
    \centering
    \includegraphics[width=7in]{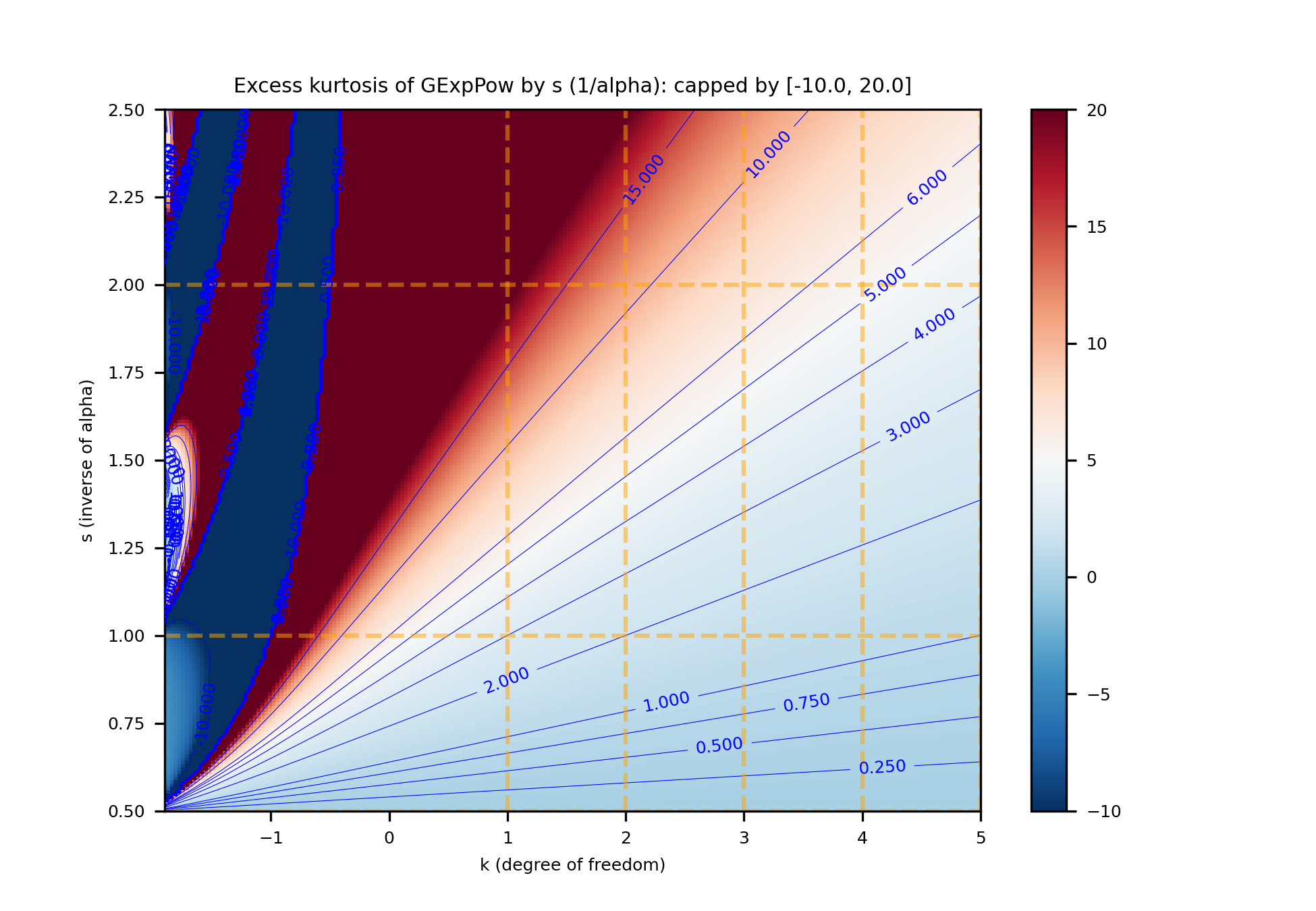}
    \caption{The contour plot of excess kurtosis \eqref{eq:gexppow-exkurt}
        of the generalized exponential power distribution \(\gepDist\)
        by \((k, s)\) where \(s = 1/\alpha\).
        The linear contour levels can be clearly observed.
        We allow the formula to be evaluated down to \(k = -2\)
        to show the singular point at \(s = 1/2, k = -2\).}
    \label{fig:gexppow_ex_kurt_by_s}
\end{figure}

\begin{definition}[Generalized \(\alpha\)-stable distribution (GAS)]\label{def:gas}

GAS is an experimental skew distribution based on GSaS. 
The GAS \(L_{\alpha,k}^{\theta}\) is a ratio such that 
\(L_{\alpha,k}^{\theta} \sim g_{\alpha}^{\theta} / \chimeanDist\). 
Hence, its PDF is

\begin{align}\label{eq:main-gas-def}
\gas{x}
    &:= \int_{0}^{\infty} s \, ds \,
      g_{\alpha}^{\theta}(x,s) \,
      \, \chimeanDist(s)
    && (x \in \mathbb{R})
\end{align}
However, the PDF of a distribution must be positive-definite (Bochner's Theorem).
This narrows the range of \(\theta\) as \(k\) increases.

\qedbar
\end{definition}
GAS adds the \(k\) dimension to \(\alpha\)-stable by expanding 
from \(\chibar_{\alpha,1}\) to \(\chimeanDist\) in the integral.
It subsumes \(\alpha\)-stable as \(L_{\alpha}^{\theta} \sim L_{\alpha,1}^{\theta}\),
and subsumes Student's t as \(t_{k} \sim L_{1,k}^{0}\).
\iftoggle{fullpaper}{See Section \ref{section:gas} for more detail.}{} 

GAS admits valid variance, skewness, and kurtosis, 
\emph{as long as there are a few degrees of freedom}, 
just like you would expect in Student's t. 
(See Lemma \ref{lemma-main-kurtosis} about the validity of kurtosis.)
It inherits the skewness from the \(\alpha\)-stable law - a property that 
very few two-sided skew distributions have, if any. 

Most real-world data analyses begin with the summary statistics of the first four moments.
Plus a few more, such as the peak PDF, and the 95\% and 99\% confidence levels.
GAS can address all of them. This is a large improvement to \(\alpha\)-stable.

\subsection{Explanation of the S\&P 500 Fit}
\label{section:sp500-fit-explained}

Next, we explain how we fitted the S\&P 500 daily log return data 
that produced Figure \ref{fig:sp500}.
We used GSaS in the first pass. The values 
of \(\alpha\) and \(k\) are obtained such that the GSaS agree with the empirical kurtosis (20) 
and standardized peak density (0.71).
In the second pass, a small skewness is added to GAS and we fine-tuned the fit.

However, the solution for the S\&P 500 fit falls in an unconventional region as shown 
in Figure \ref{fig:pyro_sp500_gsas_region}. 
It is in a small pocket
near \(\alpha=1, k= 3\). Conventionally, as we know from Student's t, the kurtosis only exists 
for \(k\) greater than 4.
But in our new \((\alpha,k)\) space, it is possible
for \(k\) to be less than 4 with valid kurtosis, although only in small pockets. 
We found the solution by tracing the contours 
of the peak PDF and excess kurtosis as shown in Figure \ref{fig:pyro_sp500_gsas_search}.  
(The peak PDF has a closed form solution and is more stable than the kurtosis numerically.)
The intersection near \(\alpha=0.81, k=3.3\) is the solution that satisfies both properties.
This is the only solution for the S\&P 500 data 
in the low degrees of freedom region where \(k < 5\).

\begin{figure}[htp]
    \centering
    \includegraphics[width=5in]{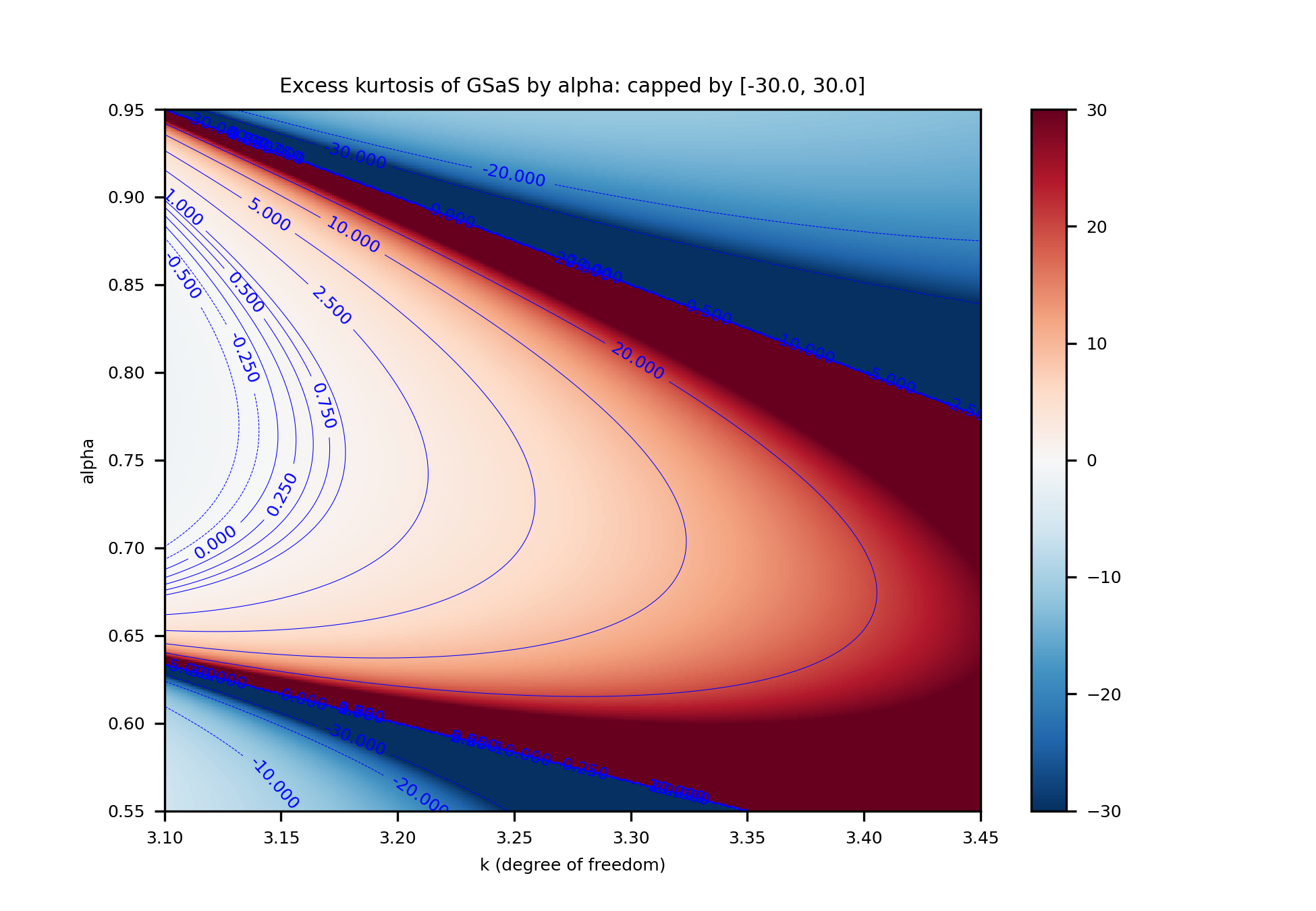}
    \caption{The region near \((\alpha=1, k= 3)\) that contains the S\&P 500 solution.}
    \label{fig:pyro_sp500_gsas_region}
\end{figure}

\begin{figure}[htp]
    \centering
    \includegraphics[width=5in]{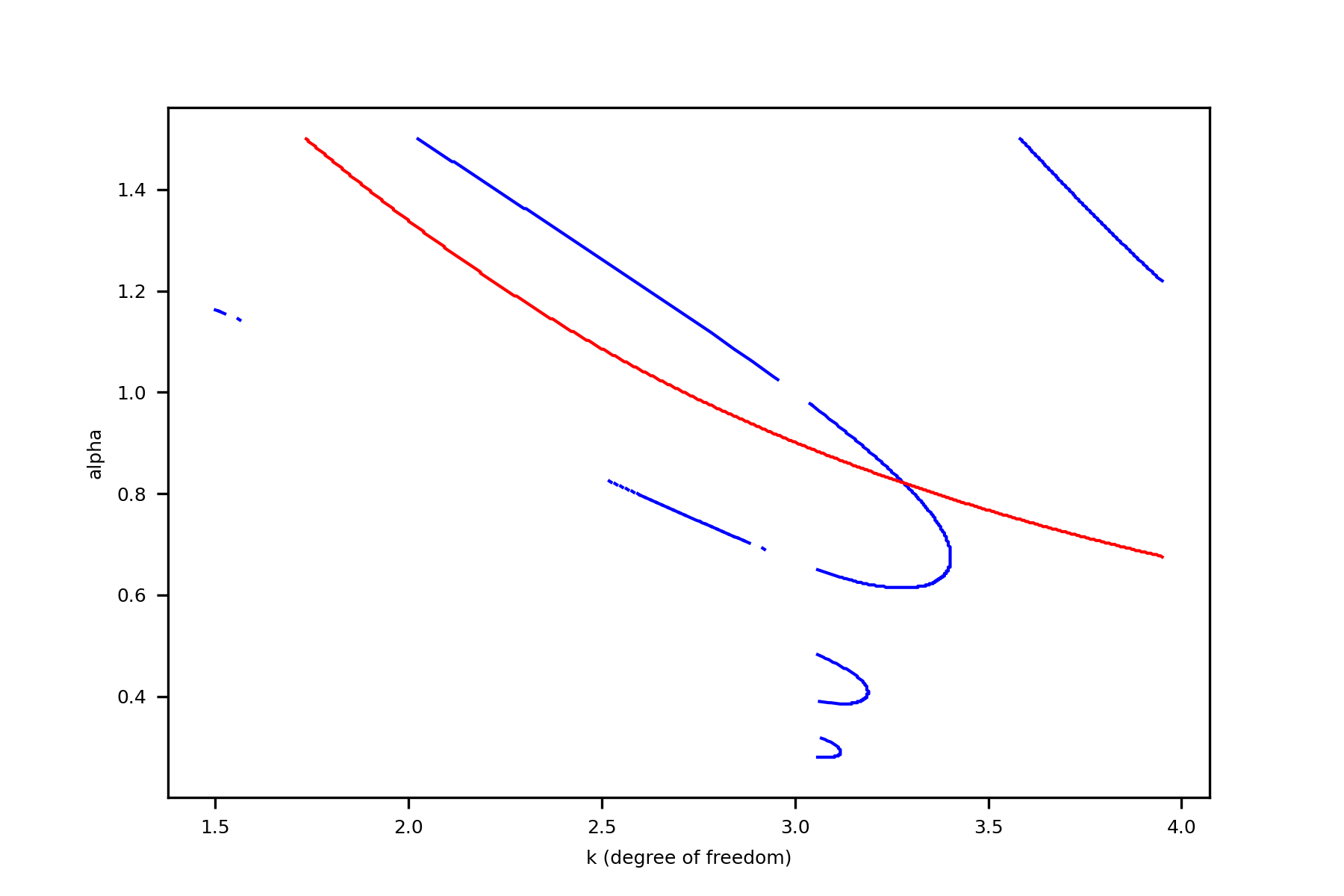}
    \caption{The S\&P 500 solution is found by tracing the contours of 
    the excess kurtosis (blue line) and the peak PDF (red line). 
    The intersection is the solution that satisfies both properties.}
    \label{fig:pyro_sp500_gsas_search}
\end{figure}

\clearpage  

\section{Multivariate Distribution}
\label{section:multivariate}

The appearance of \(\mcN(xs)\)
makes it possible to extend GSaS to 
an \(n\)-dimensional multivariate symmetric distribution 
with the concept of the elliptical distribution\cite{Tong:1990}.
We propose two viable options. 
And we use the daily return distributions of the S\&P 500 Index (SPX) and CBOE VIX Index (VIX) since 1990 
to examine the bivariate use case.

We need to upgrade \(\mcN(xs)\) to a multivariate normal distribution properly.
To set it up, we vectorize \(x\) to \(\mathbf{x} \in \mathbb{R}^n\),
and introduce the covariance matrix \(\mathbf{\Sigma} \in \mathbb{R}^{n \times n}\),
which is invertible to \(\mathbf{\Sigma}^{-1}\). Its determinant is \(|\mathbf{\Sigma}|\).
There are two routes to go from here.

\subsection{Multivariate of the First Kind - Elliptical}
\label{section:multi-elliptical}

\textbf{The first route} is simpler and produces very meaningful results. 
All dimensions share the same shape from a single FCM.
So \((xs)^2\) in \(\mcN(xs)\) is replaced to
\(\mathbf{x}^\intercal \mathbf{\Sigma}^{-1} \mathbf{x} \, s^2 \).
Such elliptical distribution closely resembles that of a multivariate normal.
The ellipse structure \(\mathbf{x}^\intercal \mathbf{\Sigma}^{-1} \mathbf{x}\) is perserved.

We use an extended notation \(\mcN\left( \mathbf{x}; \mathbf{\mu}, \mathbf{\Sigma} \right)\) to represent
the PDF of a multivariate normal distribution. 
But for the purpose of exploring fundamental properties, we don't want to be bothered 
with the location parameter. Therefore, we set its \(\mathbf{\mu} = 0\).

\begin{definition}(GSaS Elliptical Multivariate Distribution)
The PDF of the first kind is
\begin{align}\label{eq:ell-gsas-def}
L_{\alpha, k}(\mathbf{x}; \mathbf{\Sigma})
    &:= 
    \int_{0}^{\infty} \, ds \,
      \, \chimeanDist(s) \,
      \mcN\left( \mathbf{x}; 0, \frac{\mathbf{\Sigma}}{s^2} \right)
\\ \label{eq:ell-gsas-def2}
    &=
    \frac{1}{(2\pi)^{n/2} {|\mathbf{\Sigma}|}^{1/2}}    
    \int_{0}^{\infty} s^{n} \, ds \,
      \, \chimeanDist(s)
      \left[
        \exp\left( -\frac{s^2}{2} \, \mathbf{x}^\intercal \mathbf{\Sigma}^{-1} \mathbf{x} \right)
      \right]
\end{align}
The \texttt{scipy.stats.multivariate\_normal} function is used in our prototype implementation.
It is obvious from the first line that the total density is integrated to 1.
\end{definition}

When \(\alpha=1, k > 0\), \(L_{1, k}(\mathbf{x}; \mathbf{\Sigma})\) becomes the multivariate t distribution.
This is validated numerically with \texttt{scipy.stats.multivariate\_t}.
On the other hand, due to Lemma \ref{lemma-gsas-clm}, 
\begin{align*}
    \lim_{\alpha \to 2 \,\text{or}\, k \to \infty} 
    L_{\alpha, k}\left(\mathbf{x}; {\alpha^{-2 \alpha}} \, \mathbf{\Sigma} \right) 
    = \mcN\left( \mathbf{x}; 0, \mathbf{\Sigma} \right)
\end{align*}
These two validations are critically \emph{important}. 
They show that the univariate GSaS extends naturally to multivariate
through the combination of the elliptical distribution and FCM.

From \eqref{eq:ell-gsas-def2}, its peak PDF is
\begin{align}\label{eq:mv-ell-gsas-pdf-at-zero}
L_{\alpha, k}(\mathbf{0}; \mathbf{\Sigma})
    &= \frac{\E(X^{n}|\chimeanDist)}{(2\pi)^{n/2} {|\mathbf{\Sigma}|}^{1/2}}  
\end{align}

Its variance (aka the second moment) is
\begin{align}\label{eq:mv-ell-gsas-cov}
\E(X_i X_j)
    &= \E(X^{-2}|\chimeanDist) \, \mathbf{\Sigma}_{i,j} 
\end{align}

Its \(i\)-th dimension marginal PDF is a GSaS:
\begin{align}\label{eq:mv-ell-gsas-1d-pdf}
L_{\alpha, k}^{(i)}(x; \mathbf{\Sigma})
    &= \frac{1}{\sigma_i} \gsasDist\left(\frac{x}{\sigma_i}\right),
    \quad \where \sigma_i = \sqrt{\mathbf{\Sigma}_{i,i}}
\end{align}

\begin{proof}
The proof of \eqref{eq:mv-ell-gsas-1d-pdf} is straightforward. 
The marginal distribution of a multivariate normal distribution is 
the down-sized multivariate normal distribution with the covariance matrix from
those selected dimensions. See Section 3.3.1 of \cite{Tong:1990}.
\end{proof}

Hence, the \(i\)-th marginal variance is \(\E(X^{2}|\gsasDist) \, \mathbf{\Sigma}_{i,i}\),
the same as \(\E(X_i^2)\) in \eqref{eq:mv-ell-gsas-cov} since
\(\E(X^{2}|\gsasDist) = \E(X^{-2}|\chimeanDist)\) from Lemma \ref{lemma-gsas-moments}. 
They match nicely. 

The marginal peak density is 
\(L_{\alpha, k}^{(i)}(0; \mathbf{\Sigma}) = \frac{1}{\sqrt{2\pi} \sigma_i} \E(X|\chimeanDist)\).

The multivariate extension of GEP (aka \(k < 0\)) has not been validated 
since there is no implementation in the \texttt{scipy} package.
But we do find it useful as an alternative model in Section \ref{section:multi-spx-vix}.

\subsection{Multivariate of the Second Kind - Adaptive}
\label{section:multi-adaptive}

\textbf{The second route} is more complex, but also more adaptive for practical use.
For instance, in a multi-asset scenario, it allows each asset 
to have its own shape according to its marginal distribution.
The downside is that the elliptic structure is distorted.

In the previous method, we learn the advantage of using multivariate normal to
construct our multivariate PDF. The \(\mcN\left( \mathbf{x}; ... \right)\) term
can be integrated away easily to make sure the total density is equal to 1.
We follow the same methodology but expand on the FCM structure.

For simplicity, we use \(\Theta_i\) to represent 
the \(i\)-th parameterization: \((\alpha_i, k_i)\).
And \(\Theta\) is the collection of all \(\Theta_i\)'s.
We vectorize \(s\) to \(\mathbf{s} \in \mathbb{R}^n\).
Each \(s_i\) comes from a distinct \(\chibar_{\Theta_i}\) in that dimension.

The covariance matrix \(\mathbf{\Sigma}\) is sandwiched by \(1/\mathbf{s}\) such that 
the covariance matrix becomes
\((\mathbf{s}^\dagger)^\intercal \,\mathbf{\Sigma}\, \mathbf{s}^\dagger\)
where \(\mathbf{s}^\dagger = \diag(\mathbf{s})^{-1}\).
It is then enveloped by the integrals of the multi-dimensional FCMs.
To put it together -

\begin{definition}(GSaS Adaptive Multivariate Distribution)
The PDF of the second kind is
\begin{align}\label{eq:mv-adp-gsas-def}
L_{\Theta}(\mathbf{x}; \mathbf{\Sigma})
    &:= 
    \prod_i \int_{0}^{\infty} \, ds_i \,
      \, \chibar_{\Theta_i}(s_i) \,
        \mcN\left( \mathbf{x}; 0, 
            (\mathbf{s}^\dagger)^\intercal \,\mathbf{\Sigma}\, \mathbf{s}^\dagger
        \right),
    \quad \where \mathbf{s}^\dagger = \diag(\mathbf{s})^{-1}
\end{align}
This object is far more computationally intense than \(L_{\alpha, k}(\mathbf{x}; \mathbf{\Sigma})\)
due to the multi-dimensional integral on \(\mathbf{s}\).
\end{definition}

Note that, in the case of identity \(\mathbf{\Sigma}\), \(L_{\Theta}(\mathbf{x}; \mathbf{\Sigma})\)  
does factorize into the product of the marginal one-dimensional distributions.
This is a positive feature.
However, this PDF retains the circular contour only in the center region.
The contour could be distorted into rectangle-barbell shapes with round edges in the tail regions.

Its peak PDF is
\begin{align}\label{eq:mv-adp-gsas-pdf-at-zero}
L_{\Theta}(\mathbf{0}; \mathbf{\Sigma})
    &= \frac{\prod_i \E(X|\chibar_{\Theta_i})}{(2\pi)^{n/2} {|\mathbf{\Sigma}|}^{1/2}}  
\end{align}

Its variance (aka the second moment) is
\begin{align}\label{eq:mv-adp-gsas-cov}
\E(X_i X_j)
    &= 
    \left\{
    \begin{array}{ll}
        \E(X^{-2}|\chibar_{\Theta_i}) \, \mathbf{\Sigma}_{i,i} 
        &, \, \text{for} \,\, i = j.
    \\ 
        \E(X^{-1}|\chibar_{\Theta_i}) \, \E(X^{-1}|\chibar_{\Theta_j}) \, \mathbf{\Sigma}_{i,j} 
        &, \, \text{for} \,\, i \ne j.
    \end{array}
    \right.
\end{align}

Its \(i\)-th dimension marginal PDF is a GSaS:
\begin{align}\label{eq:mv-adp-gsas-1d-pdf}
L_{\Theta}^{(i)}(x; \mathbf{\Sigma})
    &= \frac{1}{\sigma_i} L_{\alpha_i, k_i}\left(\frac{x}{\sigma_i}\right),
    \quad \where \sigma_i = \sqrt{\mathbf{\Sigma}_{i,i}}
\end{align}
Such marginal distribution can adapt to each sample's shape.
This is the major improvement over \eqref{eq:mv-ell-gsas-1d-pdf}.
The \(i\)-th marginal variance is \(\E(X^{2}|\gsasDist) \, \mathbf{\Sigma}_{i,i} = \E(X_i^2)\)
in \eqref{eq:mv-adp-gsas-cov}. They match nicely.

The \(i\)-th marginal peak density is 
\(L_{\Theta}^{(i)}(0; \mathbf{\Sigma}) = \frac{1}{\sqrt{2\pi} \sigma_i} \E(X|\chibar_{\alpha_i,k_i})\).
Hence, the peak density ratio between marginal and multivariate is
\begin{align*}
\frac{\prod_i L_{\Theta}^{(i)}(0; \mathbf{\Sigma})}
{L_{\Theta}(\mathbf{0}; \mathbf{\Sigma})}
    &= \left[ \frac{|\mathbf{\Sigma}|} {\prod_i \mathbf{\Sigma}_{i,i}} \right]^{1/2}
\end{align*}
In the bivariate case, this ratio becomes \(\sqrt{1-\rho^2}\)
where \(\rho\) is the correlation and \(|\rho| \le 1\).
On one hand, when \(\rho = 0\), this ratio is 1.
On the other hand, the closer \(\rho\) is to 1, the smaller this ratio is.
Conversely, the ratio from the sample data implies the model correlation.

In conclusion, the differences between the elliptical version and the adaptive version
are (a) how the peak density is formed; (b) the shapes of the marginal distributions;
and (c) the off-diagonal elements of the variance.

\subsection{Bivariate Application - SPX and VIX}
\label{section:multi-spx-vix}

In this section, we use the bivariate distributions
to model the joint daily return distributions of 
the S\&P 500 Index (SPX) and CBOE VIX Index (VIX) from 1/1990 to 3/2024.
We use this case to demonstrate the fitting procedure and visualize the features mentioned above.
This is the pinnacle of this work.

First, the sample statistics are calculated from the full sample. 
The sample covariance matrix is \(\mathbf{\Sigma}=\)
[[ 0.00486232, -0.00055669], [-0.00055669, 0.0001304]].
The correlation is \(\rho = -0.70\). 
The excess kurtoses are 15.8 for VIX, and 17.4 for SPX.

Second, we fit the marginal distributions of VIX and SPX with two GSaS's.
The result is shown in Figure \ref{fig:spx_vix_1d}. 
To manage a reasonable size of the bins in the histograms, 
the data is truncated at [-0.3, 0.3] for VIX,
and [-0.05, 0.05] for SPX. 200 bins are used.
Each GSaS is rescaled to the sample's
standard deviation: \(\sqrt{\mathbf{\Sigma}_{0,0}}\) and \(\sqrt{\mathbf{\Sigma}_{1,1}}\).
The model mean is the same as the sample mean in the plots.
For demonstration purpose, the means in the bivariate models are discarded.

Each dimension has its own shapes: \(\alpha_0 = 0.64, k_0 = 5.50\) for VIX;
and \(\alpha_1 = 0.88, k_1 = 3.20\) for SPX. 
The parameters are chosen such that the peak density and the excess kurtosis
between the model and sample are matched reasonably well.'
Skewness is not handled.

The same search procedure shown in Figure \ref{fig:pyro_sp500_gsas_search} 
is applied to each marginal distribution.

According to Figure \ref{fig:gsas_ex_kurt_by_alpha},
The high sample kurtosis in conjunction with \(\alpha\) slightly below 1
indicates \(k\) in the neighborhood of 5. This observation is true for VIX, but not for SPX.

SPX distribution has a sharper peak. The height exceeds the range that any \(k \ge 4\) can afford.
The solution is still located in the pocket near \(\alpha \approx 1\) and \(k \approx 3\),
as shown in Figures \ref{fig:pyro_sp500_gsas_region} and \ref{fig:pyro_sp500_gsas_search}.
This is a peculiar feature of SPX index, regardless much shorter history in this case.

To configure the bivariate model distributions, we designed an adjustment procedure
for the model \(\mathbf{\Sigma}\).
The main obstacle is that, if we simply feed the parameters obtained above to the bivariate models,
the peak joint density will come out lower than the sample's peak density, which is 1186,
printed in the upper right plot of Figure \ref{fig:spx_vix_2d}.

We attribute the shortage to insufficient correlation. The adjustment is treated 
as an optimization problem: The free variable is an adjustment factor.
The model variance of each dimension is scaled down by the factor, while
the absolute of the correlation is scaled up by the factor. 
Construct a new model covariance matrix.
The objective is to increase the model's peak joint density 
such that its percentage deviation to the sample peak density 
is less than the adjustment factor.

For the adaptive bivariate model, \(\rho\) is moved to -0.81. 
For the elliptical bivariate model, \(\rho\) is moved to -0.79. 
Since  the latter only takes one set of the shape parameter, 
we use the simple averages from the marginals: 
\(\langle\alpha\rangle = (\alpha_0 + \alpha_1)/2 = 0.76\), 
\(\langle{k}\rangle = (k_0 + k_1)/2 = 4.35\).

The model outputs are shown in the lower two contour plots in Figure \ref{fig:spx_vix_2d}.
The results are satisfactory. The tilts of the contours are in agreement with that of the sample
(Upper right). The adaptive contour is particularly interesting, or even a bit exaggerating. 
It has a rectangle-barbell shape, with four corners sticking out,
which looks more like what's from the data.
Elliptical contour appears to be too simple and naive.

\begin{figure}[htp]
    \centering
    \includegraphics[width=6in]{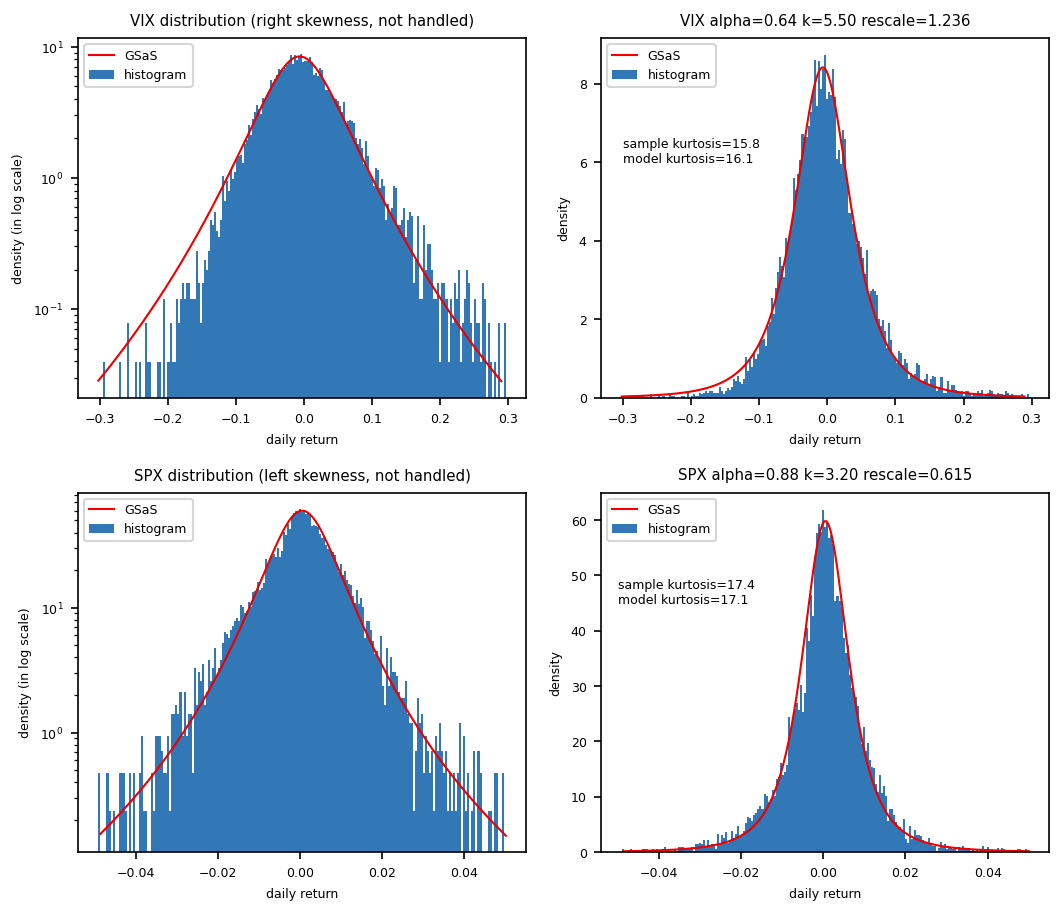}
    \caption{The The marginal distributions of VIX and SPX daily returns
        from 1/1990 to 3/2024. Each is fitted with a GSaS 
        by matching their peak densities, standard deviations, excess kurtosis,
        and shapes of the distribution.
        Each GSaS is shifted to match the mean of the data.
        The y-axes of the left plots are in the log scale. 
        }
    \label{fig:spx_vix_1d}
\end{figure}

\begin{figure}[htp]
    \centering
    \includegraphics[width=6in]{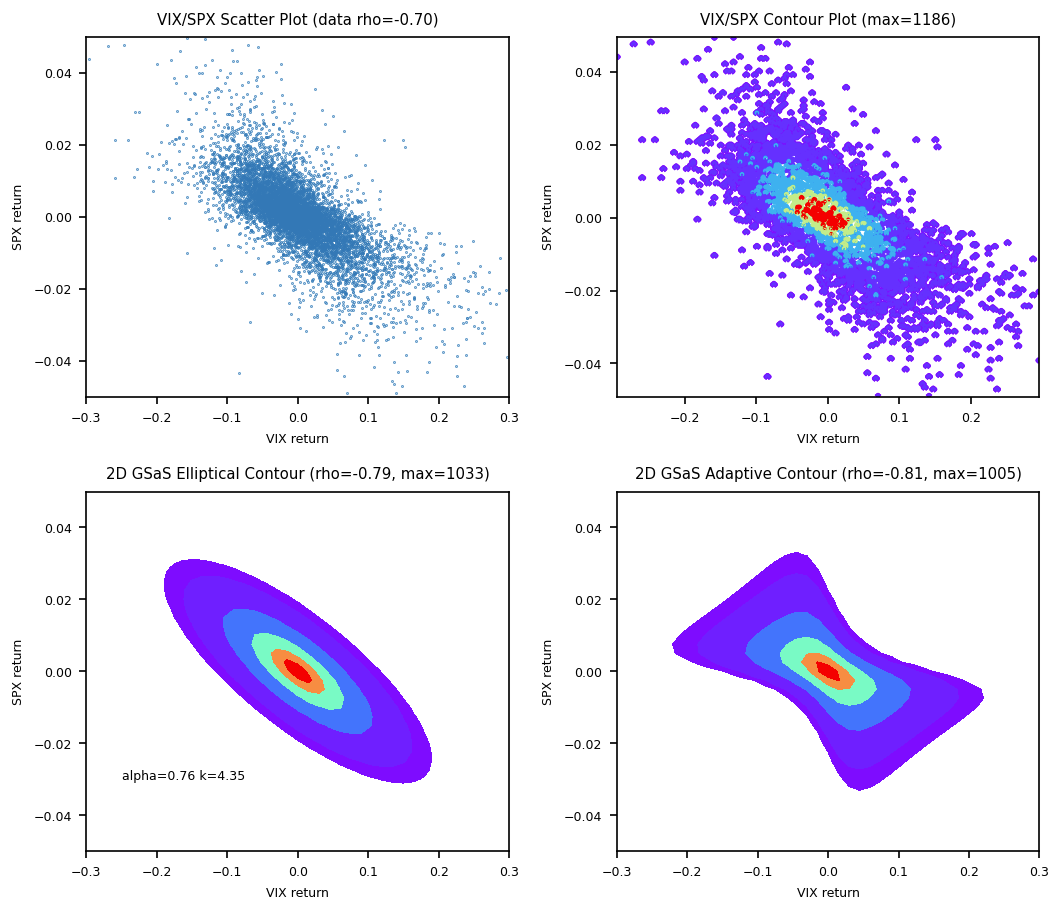}
    \caption{The bivariate distribution of VIX and SPX daily returns
        from 1/1990 to 3/2024. 
        \emph{Upper left}: The scatter plot of the data points,
            from which the sample covariance matrix and correlation are calculated.
        \emph{Upper right}: The contour plot from the 2D histogram of the data.
            200 bins in each dimension.
            This shows approximately what kind of contours should be expected. 
            They are more like rectangles than ellipses.
        \emph{Lower left}: The contour plot from the elliptical bivariate model distribution.
            The tilt of the ellipses agrees with the data (about \(45^\circ\) in the chart).
        \emph{Lower right}: The contour plot from the adaptive bivariate model distribution.
            Its shape is like rectangle-barbell shape, with four corners sticking out,
            which looks more like what's in the raw data.
        }
    \label{fig:spx_vix_2d}
\end{figure}

\clearpage

\iftoggle{fullpaper}{

\section{GSC: Generalized Stable Count Distribution}
\label{section:gsc}

\subsection{GSC PDF and Explanation}
\label{section:gsc-definition}

We recap the PDF of GSC \eqref{eq:main-gsc-pdf} from Definition \ref{def:gsc} 
and compare it with that of GG \eqref{eq:main-pdf-gg}\footnote{
The CDF of a GG is a regularized incomplete gamma function \(\Gamma(\cdot, \cdot)\).
Hence, it can also be parameterised such as 
\(\text{GenGamma}(\frac{x}{a}; s=\frac{d}{p}, c=p) := \Gamma\left( s, {\left(\frac{x}{a} \right)}^c \right)\).
} side by side:
\begin{align*}
\gscNx &:= 
    C \quad \quad {\left( \frac{x}{\sigma} \right)}^{d-1}
    \, W_{-\alpha,0} \left( -{\left( \frac{x}{\sigma} \right)}^{p} \right) 
\\
\pdfGG(x; a, d, p) &:=
    \frac{|p|}{a \Gamma(\frac{d}{p})} {\left(\frac{x}{a}\right)} ^{d-1} 
    \,\, e^{-(x/a)^p}
\end{align*}
The parallel in notations between the two PDFs should be obvious.

GSC is the fractional extension of GG.
The exponential function in GG is \emph{replaced} with 
the Wright function:
\(e^{-z} \rightarrow W_{-\alpha,0}(-z)\).
The additional parameter \(\alpha \in [0,1]\) controls the
shape of the Wright function. The scale parameter is changed
from \(a\) to \(\sigma\) to avoid confusion 
between \(a\) in GG and \(\alpha\) in GSC.

The other two parameters \(d, p\) are kept the same intentionally:
\(d\) is the \emph{degree of freedom} parameter;
\(p\) is the second shape parameter controlling the tail behavior \((p\ne 0, dp \ge 0)\).
In our use cases, \(p\) is either \(\alpha\) or \(2 \alpha\).
In the later case, \(p\) is exactly the stability index.

The range of \(\alpha\) is constrained to \([0,1]\) due to the analytic property of \(W_{-\alpha,0}(-z)\).
When \(\alpha \to 1\), it becomes a Dirac delta function \(\delta(z-1)\). 
This is a well known result.

On the other hand, when \(\alpha \to 0\), GSC becomes GG. 
See Section \ref{section:mapping-gg-0} for proof.
A broad class of distributions belong to GG, see Table \ref{tab:classic-map-zero}.

\subsection{Determination of C}
\label{section:gsc-const}

We derive the value of the normalization constant \(C\) in \eqref{eq:main-gsc-const} as following.

\begin{proof}
The normalization constant \(C\) in \eqref{eq:main-gsc-pdf} is obtained from the requirement 
that the integral of the PDF must be 1:
\begin{align*}
\int_0^{\infty} \, N_\alpha(x; \sigma, d, p) \, dx 
    &= \, \frac{C \sigma}{|p|} 
    \frac{\Gamma(\frac{d}{p})} {\Gamma(\frac{d}{p} \alpha)}
    = 1
\end{align*}
where the integral is carried out 
by the moment formula of the Wright function \eqref{eq:wright-moments}.

Hence, \(C\) is determined as
\begin{align*}
    C 
    &= \frac{|p|}{\sigma} 
    \frac{\Gamma(\frac{d}{p} \alpha)} {\Gamma(\frac{d}{p})} 
    &(\alpha \ne 0, d \ne 0)
\end{align*}

We typically constrain \(dp \ge 0\) and \(p\) is typically positive.
But it becomes negative in the inverse distribution and/or characteristic distribution types.
So we need \(|p|\) to ensure \(C\) is positive. 

For the case of \(\alpha \ne 0\) and \(d \to 0\), 
due to \eqref{eq:gamma-zero}, we have 
\begin{align*}
    C 
    &= \frac{|p|}{\sigma \alpha}
    &(\alpha \ne 0, d = 0)
\end{align*}
These two cases are combined to form \eqref{eq:main-gsc-const}.
\end{proof}


\subsection{Mapping between GSC and GG at alpha = 0}\label{section:mapping-gg-0}

All of the gamma-based, 
exponential-based distributions can be expressed by GSC at \(\alpha = 0\).
We prove the first line of \eqref{eq:main-gg-by-gsc} as following.

\begin{proof}
For certain proofs, it is more convenient to express the PDF of GSC in terms of 
the M-Wright function \(M_\alpha(z)\) in \eqref{eq:m-wright-M}:
\begin{align}
\label{eq:pdf-gsc-mw}
\gscNx &= \alpha C \, {\left( \frac{x}{\sigma} \right)}^{d+p-1}
         \, M_{\alpha} \left( {\left( \frac{x}{\sigma} \right)}^{p} \right) 
\end{align}

We apply the property \(M_0(z) = \exp(-z)\) mentioned below \eqref{eq:m-wright-series}
to \(\gscNx\) from \eqref{eq:pdf-gsc-mw}, and obtain:

\begin{align*}
\lim_{\alpha \to 0} \gscNx
    &= \lim_{\alpha \to 0}
        \alpha C 
        \, {\left( \frac{x}{\sigma} \right)}^{d+p-1}
        \, \exp\left( -{\left( \frac{x}{\sigma} \right)}^{p} \right)
\\
    &= \left( \lim_{\alpha \to 0} \alpha \Gamma\left( \frac{d}{p} \alpha \right)\right) 
        \frac{p/\sigma} {\Gamma(\frac{d}{p})} 
        \, {\left( \frac{x}{\sigma} \right)}^{d+p-1}
        \, \exp\left( -{\left( \frac{x}{\sigma} \right)}^{p} \right)
\\
    &= \frac{p / \sigma} { \Gamma(\frac{d+p}{p})} 
        \, {\left( \frac{x}{\sigma} \right)}^{d+p-1}
        \, \exp\left( -{\left( \frac{x}{\sigma} \right)}^{p} \right)
\\
    &= 
    \pdfGG(x; \sigma, d+p, p)
\end{align*}

Hence, we reach the mapping of GG to GSC at \(\alpha = 0\) 
described in the first line of \eqref{eq:main-gg-by-gsc}:

\begin{align}\label{eq:gg-by-gsc}
\pdfGG(x; \sigma, d, p)
    &= \scN_0(x; \sigma, d-p, p) 
\end{align}
\end{proof}

The GSC mapping of classic distributions at \(\alpha=0\) is illustrated in 
Table \ref{tab:classic-map-zero}.

\addtocounter{table}{-1}
\begin{table}[H]
    \smaller  
    \centering
    \renewcommand{\arraystretch}{1.5}
\begin{longtable}{|p{3.5cm}|p{3.0cm}|l|l|l|l|l|l|l|}
\hline
\multicolumn{1}{|c|}{} & \multicolumn{1}{|c|}{}
    & \multicolumn{3}{c|}{GG: \(\pdfGG(x; \sigma, d, p)\)} 
    & \multicolumn{4}{c|}{GSC: \(\scN_0(x; \sigma, d, p)\)} \\
\cline{3-9}
\multicolumn{1}{|c|}{Distribution (PDF)} 
    & \multicolumn{1}{|c|}{Classic Equiv.} 
    & \(\sigma\) & \(d\) & \(p\)
    & \(\alpha\) & \(\sigma\) & \(d\) & \(p\) \\
\hline
Stretched Exp:
\(\mathcal{E}_\alpha^{(1)}(x)\) 
    &   & 1 & 1 & \(\alpha\)
    & 0 & 1 & \(1-\alpha\) & \(\alpha\) \\
Half-Normal:   \(2 \, \mcN(x)\)
    &   & \(\sqrt{2}\) &  1 & 2 
    & 0 & \(\sqrt{2}\) & -1 & 2 \\
Weibull:  \(\text{Wb}(x;k)\) 
    &   & 1 & \(k\) & \(k\) 
    & 0 & 1 & 0 & \(k\) \\
Exponential: \(2 \, L(x)\) & \(\text{Wb}(x;k=1)\) 
        & 1 & 1 & 1 
    & 0 & 1 & 0 & 1 \\
\(\text{Rayleigh}(x)\) 
    & \(\text{Wb}(\frac{x}{\sqrt{2}};k=2)\)
        & \(\sqrt{2}\) & 2 & 2 
    & 0 & \(\sqrt{2}\) & 0 & 2 \\
Gamma: \(\Gamma(\frac{x}{\sigma}; s)\) 
    &   & \(\sigma\) & \(s\) & 1 
    & 0 & \(\sigma\) & \(s-1\) & 1 \\
\(\chi_k(x)\) & \(\Gamma(\frac{x^2}{2}; \frac{k}{2})\) 
        & \(\sqrt{2}\) & \(k\) & 2 
    & 0 & \(\sqrt{2}\) & \(k-2\) & 2 \\
\(\chi_k^2(x)\) & \(\Gamma(\frac{x}{2};\frac{k}{2})\) 
        & 2 & \(\frac{k}{2}\) & 1 
    & 0 & 2 & \(\frac{k}{2}-1\) & 1 \\
\(\text{GenGamma}(\frac{x}{\sigma}; s,c)\) 
    & \(\Gamma((\frac{x}{\sigma})^{c}; s)\)
        & \(\sigma\) & \(sc\) & \(c\) 
    & 0 & \(\sigma\) & \(c(s-1)\) & \(c\) \\
\,\, or \(\pdfGG(x; \sigma, d, p)\) 
    &   & \(\sigma\) & \(d\) & \(p\) 
    & 0 & \(\sigma\) & \(d-p\) & \(p\) \\
\({\text{IG}}(x;k)\)
    &   & 1 & \(-k\) & \(-1\) 
    & 0 & 1 & \(-k+1\) & \(-1\) \\
\({\text{IWb}}(x;k)\)
    &   & 1 & \(-k\) & \(-k\) 
    & 0 & 1 & 0 & \(-k\) 
\\
\hline
\end{longtable}
\caption{\label{tab:classic-map-zero}
Mapping classic distributions to \(\scN_0(x; \sigma, d, p)\).
\(\mathcal{E}_\alpha^{(1)}(x) = \exp(-{x^\alpha})/\GammaAlpha\) 
is a one-sided exponential power distribution.
}
\end{table}

\subsection{Mapping between GSC and GG at alpha = 1/2}
\label{section:mapping-gg-one-half}

We prove the second line of \eqref{eq:main-gg-by-gsc} as following.

\begin{proof}
From the property mentioned below \eqref{eq:m-wright-series} that
\(M_{\frac{1}{2}}(z) = \frac{2}{z} W_{-\frac{1}{2},0}(-z) = \frac{1}{\sqrt{\pi}} e^{-z^2/4}\),
the exponential function can be re-interpreted as 
\(e^{-z} = \sqrt{\pi} \, M_{\frac{1}{2}}(2\sqrt{z})\).

By a change of variable \(z = {\left( \frac{x}{a} \right)}^{p}\),
GG is mapped to GSC at \(\alpha = 1/2\) as

\begin{align*}
\pdfGG(x; a, d, p) 
    &= \frac{\sqrt{\pi} \, p/a}{\Gamma(\frac{d}{p})} {\left(\frac{x}{a}\right)} ^{d-1} 
     \, M_{\frac{1}{2}}\left( 2 {\left( \frac{x}{a} \right)}^{p/2} \right)
\\
    &= \frac{\sqrt{\pi} \, p/a}{\Gamma(\frac{d}{p})} {\left(\frac{x}{a}\right)} ^{d-1} 
     \, {\left( \frac{x}{a} \right)}^{-p/2} W_{-\frac{1}{2},0}\left( -2 {\left( \frac{x}{a} \right)}^{p/2} \right)
\\
    &= \frac{\sqrt{\pi} \, p/a}{\Gamma(\frac{d}{p})} {\left(\frac{x}{a}\right)} ^{d-p/2-1} 
     \, W_{-\frac{1}{2},0}\left( -{\left( \frac{2^{2/p} \, x}{a} \right)}^{p/2} \right)
\\
    &= \scN_{\frac{1}{2}}\left(x; \sigma=\frac{a}{2^{2/p}}, d=d-\frac{p}{2}, p=\frac{p}{2} \right)
\end{align*}

Hence, we reach the second line of \eqref{eq:main-gg-by-gsc}.
\end{proof}

If a classic distribution can be mapped to \(\scN_0(x;...)\), 
it can also be mapped to \(\scN_{\frac{1}{2}}(x;...)\).
That is how we map the \(\chi\) distribution from
\({\chi}_k(s) = \scN_0( s; \sigma=\sqrt{2}, d=k, p=2) \) to its fractional counterpart
\({\chi}_k(s) = \scN_{\frac{1}{2}}( s; \sigma=\frac{1}{\sqrt{2}}, d=k-1, p=1) \)
in \eqref{eq:chi-at-half-alpha} that eventually leads to \eqref{eq:main-chimean-pdf}.

The \(\alpha = 1/2\) mapping is illustrated in Table \ref{tab:classic-mapping-half-alpha}.

\addtocounter{table}{-1}
\begin{table}[H]
    \small  
    \centering
    \renewcommand{\arraystretch}{1.5}
\begin{longtable}{|p{3.5cm}|p{3.5cm}|l|l|l|l|l|}
\hline
\multicolumn{1}{|c|}{} & \multicolumn{1}{|c|}{}
    & \multicolumn{4}{c|}{GSC: \(\gscNx\)} \\
\cline{3-6}
\multicolumn{1}{|c|}{Distribution (PDF)} 
    & \multicolumn{1}{|c|}{Classic Equiv.} 
    & \(\alpha\) & \(\sigma\) & \(d\) & \(p\) \\
\hline
Weibull:  \(\text{Wb}(x;k)\) & 
    & \(\frac{1}{2}\) & \(2^{-2/k}\) & \(\frac{k}{2}\) & \(\frac{k}{2}\) \\
Gamma: \(\Gamma(\frac{x}{\sigma}; s)\) & 
    & \(\frac{1}{2}\) & \(\sigma/4\) & \(s-\frac{1}{2}\) & \(\frac{1}{2}\) \\
\(\chi_k(x)\) & \(\Gamma(\frac{x^2}{2}; \frac{k}{2})\) 
    & \(\frac{1}{2}\) & \(\frac{1}{\sqrt{2}}\) & \(k-1\) & 1 \\
\(\chi_k^2(x)\) & \(\Gamma(\frac{x}{2};\frac{k}{2})\) 
    & \(\frac{1}{2}\) & \(\frac{1}{2}\) & \((k-1)/2\) & \(\frac{1}{2}\) \\
\(\text{GenGamma}(\frac{x}{\sigma}; s,c)\) 
    & \(\Gamma((\frac{x}{\sigma})^{c}; s)\)
    & \(\frac{1}{2}\) & \(2^{-2/c} \sigma\) & \((s-\frac{1}{2})c\) & \(c/2\) \\
\,\, or \(\pdfGG(x; \sigma, d, p)\) 
    & & \(\frac{1}{2}\) & \(2^{-2/p} \sigma\) & \(d-\frac{p}{2}\) & \(\frac{p}{2}\) \\
\({\text{IG}}(x;k)\)
    & & \(\frac{1}{2}\) & 4 & \(\frac{1}{2}-k\) & \(-\frac{1}{2}\) \\
\({\text{IWb}}(x;k)\)
    & & \(\frac{1}{2}\) & \(2^{2/k}\) & \(-k/2\) & \(-k/2\)
\\
\hline
\end{longtable}
\caption{\label{tab:classic-mapping-half-alpha}Mapping of some classic distributions 
to \(\scN_{\frac{1}{2}}(x; \sigma, d, p)\)}
\end{table}

\subsection{GSC Hankel Integral}
\label{section:gsc-hankel}

The Hankel integral establishes another connection between GSC and GG.
We begin with the Hankel integral of \(F_\alpha(z)\) in \eqref{eq:m-wright-F-hankel}
where \(F_\alpha(z) = W_{-\alpha,0}(-z) =  \frac{1}{2\pi i} \int_{H} dt \, \exp{(t-z t^\alpha)}\).
Hence, for \(\alpha d/p \ne 0\), the Hankel integral of GSC is

\begin{align}
\gscNx &= \notag
    \frac{1}{2\pi i} \int_{H} dt 
    \, C \, {\left( \frac{x}{\sigma} \right)}^{d-1} 
    \, \exp{\left( t- {\left( \frac{x}{\sigma} \right)}^{p} t^\alpha \right)}
\\ &= 
    \frac{1}{2\pi i} \int_{H} dt \, e^t 
    \, C \, {\left( \frac{x}{\sigma} \right)}^{d-1} 
    \, \exp{\left( - {\left( \frac{x \, t^{\alpha/p}}{\sigma } \right)}^{p}  \right)}
\notag 
\\ \label{eq:fcm-hankel-intg}
    &= 
    \,  \Gamma\left(\frac{\alpha d}{p}\right)
    \, 
    \frac{1}{2\pi i} \int_{H} dt 
    \, \left( \frac{e^t }{t^{\alpha d/p}} \right) 
    \, \pdfGG\left( x; a=\frac{\sigma }{ t^{\alpha/p}}, d=d, p=p \right)
\end{align}

This shows that GSC can be obtained from the Hankel integral over its GG counterpart.

\subsection{GSC Moments}
\label{section:gsc-moments}

GSC's moment formula plays a foundational role since FCM and GSaS's moment formulas are 
derived from it. 

\begin{lemma}\label{lemma-gsc-moment}
The \(n\)-th moment of GSC is
\begin{align}\label{eq:gsc-moment}
\E(X^n|\gscNx) &= \sigma^{n} \,
    \frac{\Gamma(\frac{d}{p} \alpha)} {\Gamma(\frac{d}{p})} 
    \frac{\Gamma(\frac{n+d}{p})} {\Gamma(\frac{n+d}{p} \alpha)}
    , & \, \, \text{for} \,\, \alpha \ne 0, d \ne 0.
\end{align}

\end{lemma}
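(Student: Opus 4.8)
The plan is to reduce the computation to the Mellin transform of the Wright function, exactly as in the derivation of the normalization constant $C$ in \eqref{eq:main-gsc-const}. Writing the $n$-th moment directly from the PDF \eqref{eq:main-gsc-pdf},
\[
\E(X^n|\gscNx) = C \int_0^\infty x^n \left(\frac{x}{\sigma}\right)^{d-1} \FWrfn{\alpha}{\left(\frac{x}{\sigma}\right)^p}\,dx ,
\]
one sees that the integrand is the (unnormalized) GSC density with the exponent $d-1$ replaced by $n+d-1$; so the numerator is structurally identical to the integral already evaluated in Section \ref{section:gsc-const}, with $d \mapsto n+d$.

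First I would pull out $x^n=\sigma^n(x/\sigma)^n$ and perform the substitution $u=(x/\sigma)^p$, so that $x/\sigma=u^{1/p}$ and $dx=(\sigma/p)\,u^{1/p-1}\,du$; when $p<0$ (permitted, since we only require $dp\ge 0$) the map is orientation-reversing, and after restoring the limits to $(0,\infty)$ the Jacobian contributes the factor $1/|p|$. This collapses the power of $u$ to $u^{(n+d)/p-1}$ and yields
\[
\E(X^n|\gscNx) = \frac{C\,\sigma^{n+1}}{|p|} \int_0^\infty u^{\frac{n+d}{p}-1} \FWrfn{\alpha}{u}\,du .
\]
Next I would invoke the moment formula of the Wright function \eqref{eq:wright-moments}, namely $\int_0^\infty u^{s-1}\FWrfn{\alpha}{u}\,du = \Gamma(s)/\Gamma(\alpha s)$, applied with $s=(n+d)/p$, producing the factor $\Gamma\big(\tfrac{n+d}{p}\big)\big/\Gamma\big(\tfrac{(n+d)\alpha}{p}\big)$. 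Finally I would substitute $C=\tfrac{|p|}{\sigma}\,\Gamma\big(\tfrac{d\alpha}{p}\big)\big/\Gamma\big(\tfrac{d}{p}\big)$ from \eqref{eq:main-gsc-const}; the $|p|$ and one power of $\sigma$ cancel, leaving $\sigma^n\,\frac{\Gamma(d\alpha/p)}{\Gamma(d/p)}\,\frac{\Gamma((n+d)/p)}{\Gamma((n+d)\alpha/p)}$, which is \eqref{eq:gsc-moment}. The case $n=0$ recovers $\E(1)=1$ as a consistency check, and the stated restriction $\alpha\ne 0,\ d\ne 0$ keeps the gamma factors finite and nonzero.

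The main obstacle is analytic rather than algebraic: justifying that the Mellin integral converges for the particular exponent $s=(n+d)/p$ at hand, and that the interchange/substitution is legitimate. Since $\FWrfn{\alpha}{u}=W_{-\alpha,0}(-u)$ decays like a stretched exponential as $u\to\infty$ and behaves like a power of $u$ as $u\to 0^+$, its Mellin transform is finite precisely on a right half-line of exponents; pulling this back through the substitution gives exactly the range of $(n+d)/p$ for which the $n$-th moment exists (in particular $d\ge 1$ as noted after Definition \ref{def:gsc}, so that all moments are covered). I would state this admissible range explicitly, and point out that outside it \eqref{eq:gsc-moment} is to be read as the analytic continuation in the gamma arguments, consistent with the earlier remark that closed-form moments hold when $d\ge 1$.
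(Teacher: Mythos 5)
Your proposal is correct and follows essentially the same route as the paper's own proof: substitute $z=(x/\sigma)^p$, apply the Wright-function moment formula \eqref{eq:wright-moments} with exponent $(n+d)/p$, and cancel against the normalization constant $C$. Your extra care with $|p|$ for the orientation-reversing case $p<0$ and the remarks on convergence of the Mellin integral are refinements the paper glosses over, but they do not change the argument.
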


\begin{proof}
The integral of the \(n\)-th moment is
\begin{align*}
\E(X^n|\gscNx) 
    &= \int_{0}^{\infty} dx \,
        x^n \, C \, 
        {\left( \frac{x}{\sigma} \right)}^{d-1}
        \, \FWrfn{\alpha}{{\left( \frac{x}{\sigma} \right)}^{p}}
\\
    &= \sigma^{n+1} \, \frac{C}{p}
    \int_{0}^{\infty} dz \,
        \, {z}^{\frac{n+d}{p}-1}
        \, \FWrfn{\alpha}{z} 
    , \quad \where z = {\left( \frac{x}{\sigma} \right)}^{p}
\end{align*}

This is basically the \(\left(\frac{n+d}{p}\right)\)-th moment of 
the Wright function in \eqref{eq:wright-moments}.
We arrive at
\begin{align*}
\E(X^n|\gscNx) 
    &= \sigma^{n+1} \, \frac{C}{p} 
    \frac{\Gamma(\frac{n+d}{p})} {\Gamma(\frac{n+d}{p} \alpha)}
\\ 
    &= \sigma^{n} \,
    \frac{\Gamma(\frac{d}{p} \alpha)} {\Gamma(\frac{d}{p})} 
    \frac{\Gamma(\frac{n+d}{p})} {\Gamma(\frac{n+d}{p} \alpha)}
    , & \,\, \text{for} \,\, \alpha \ne 0, d \ne 0.
\end{align*}

\end{proof}

Although the formula exists, whether a specific moment exists can be complicated
by the choice of \(n\) relative to the shape parameters \(\alpha, d, p\).

We caution a common scenario in this paper that 
such formula needs special handling due to \(\Gamma(z) \approx 1/z\) when \(z \to 0\). 
For instance, in this case, \(d = 0\) or \(n = -d\).
This is documented in \eqref{eq:gamma-zero}. 
There are numerous situations like these, we may not mention them repeatedly in subsequent encounters.

As a validation, when \(\alpha \to 0\), it becomes the moments of 
\(\pdfGG(x; \sigma, d+p, p)\), as expected from Section \ref{section:mapping-gg-0}:
\begin{align*}
\E(X^n|\gscNx)
    &= \sigma^{n} \,  
    \frac{\Gamma(\frac{n+d+p}{p})} {\Gamma(\frac{d+p}{p})}
\end{align*}


When \(\alpha \to 1\), all gamma terms cancel out in \(\E(X^n)\),
leaving it to \(\E(X^n) = \sigma^n\). Therefore, the mean is \(\sigma\),
and the variance is 0, which indicates it is a Dirac delta function \(\delta(x-\sigma)\).

\subsection{Stable Count and Stable Vol Distributions}
\label{section:sc-sv}

GSC traces its root to the discovery of
the stable count distribution in 2017\cite{Lihn:2017} and the stable vol 
distribution in 2020\cite{Lihn:2020}. It is important to reiterate the following relations.

The PDF of the original stable count distribution (SC) is defined as 
\begin{equation}\label{eq:sc}
\scN_\alpha(\nu) := \frac{1}{\GammaAlpha} 
    \frac{1}{\nu} {L}_{\alpha} \left( \frac{1}{\nu} \right)
    = \frac{1}{\GammaAlpha} 
    \FWrfn{\alpha}{{\nu}^{\alpha}}
\end{equation}
where \({L}_{\alpha}(x)\) is the PDF of the one-sided stable distribution.
\(L_\alpha(x) = x^{-1} W_{-\alpha,0}(-x^{-\alpha})\).
SC is subsumed to GSC as \(\scN_\alpha(\nu; \sigma=1,d=1,p=\alpha)\).

The stable vol distribution (SV) is a variant of SC where 
\begin{equation}\label{eq:sv}
V_\alpha(s) := \frac{\sqrt{2\pi} \GammaAlphaN{2}}{\GammaAlpha} 
   \scN_{\frac{\alpha}{2}}(2s^2)
    = \frac{\sqrt{2\pi}}{\GammaAlpha} 
    \FWrfn{\alphahalf}{{(\sqrt{2}s)}^{\alpha}}.
\end{equation}
It is subsumed to GSC as \(\scN_\alphahalf(s; \sigma=1/\sqrt{2},d=1,p=\alpha)\).

The original idea of having two variants, SC and SV, came from 
the so-called "Lambda decomposition": transforming \(e^{-{|z|}^\alpha}\) 
as a product distribution with
the Laplace distribution \(L(x)\) and the normal distribution \(\mcN(x)\):

\begin{align}\label{eq:lambda-decomp}
\frac{1}{2 \GammaAlpha} e^{-{|z|}^\alpha} &= 
        \left\{
        \begin{array}{ll} \displaystyle
            \int_0^\infty \frac{d\nu}{\nu} 
                L\left( \frac{|z|}{\nu} \right) \, \scN_\alpha(\nu)
            \, &, \, \text{for} \,\, 0 < \alpha \le 1.
            \\ \displaystyle
            \int_0^\infty \frac{ds}{s} \mcN\left( \frac{|z|}{s} \right) \, V_\alpha(s)
            \, &, \, \text{for} \,\, 0 < \alpha \le 2.
        \end{array}
        \right.
\end{align}

It turns out the second line in \eqref{eq:lambda-decomp} is very useful.
First, the normal distribution represents the Brownian noise. 
This is the early form of the Gaussian mixture presented in Section \ref{section:mixture}.

Secondly, in the fractional \(\chi\) notation,
\(V_\alpha(x) = \chi_{\alpha, 2}(x)\), which has two degrees of freedom.
This laid the groundwork for FCM. 

Thirdly, it is in fact decomposing the CF of the \(\alpha\)-stable law
into a Gaussian mixture.
It is used in the proof of Section \ref{section:skew-kernel} that follows.

In closing, several known fractional distributions are expressed in GSC
in Table \ref{tab:frac-dist-mapping}.

\addtocounter{table}{-1}
\begin{table}[H]
    \small  
    \centering
    \renewcommand{\arraystretch}{1.5}
\begin{longtable}{|p{4.0cm}|p{3.5cm}|l|l|l|l|l|}
\hline
\multicolumn{1}{|c|}{} & \multicolumn{1}{|c|}{}
    & \multicolumn{4}{c|}{GSC: \(\gscNx\)} \\
\cline{3-6}
\multicolumn{1}{|c|}{Distribution (PDF)} 
    & \multicolumn{1}{|c|}{Wright Equiv.} 
    & \(\alpha\) & \(\sigma\) & \(d\) & \(p\) \\
\hline
One-sided stable: \(L_\alpha(x)\) & \(x^{-1} W_{-\alpha,0}(-x^{-\alpha})\) 
    & \(\alpha\) & 1 & 0 & \(-\alpha\) \\
SC: \(\scN_\alpha(\nu)\)          & & \(\alpha\) & 1 & 1 & \(\alpha\) \\
SV: \(V_\alpha(s) \)              & & \(\frac{\alpha}{2}\) & \(\frac{1}{\sqrt{2}}\) & 1 & \(\alpha\) \\
M-Wright: \(M_\alpha(z)\) & \(\frac{1}{\alpha z} W_{-\alpha,0}(-z)\) 
    & \(\alpha\) & 1 & 0 & 1 
\\
M-Wright: \(\Gamma(\alpha) F_\alpha(z)\) & \(\Gamma(\alpha) W_{-\alpha,0}(-z)\) 
    & \(\alpha\) & 1 & 1 & 1 
\\
\hline
\end{longtable}
\caption{\label{tab:frac-dist-mapping}
    GSC mapping of several known fractional distributions in the literature.
    }
\end{table}

\section{The Skew-Gaussian Kernel and Fractional Chi-1}
\label{section:skew-kernel}

\subsection{Derivation of the Skew-Gaussian Kernel}
\label{section:derive-skew-kernel}

We prove how the kernel \(\gskew\) in \eqref{eq:main-skew-gaussian} 
comes into existence as following.

\begin{proof}
The minus-log of \(\alpha\)-stable CF in Feller's parameterization is defined as
(Section XVII.6, p.581-583 of \cite{Feller:1971}):

\begin{align*}
\psi_\alpha^\theta(\zeta) 
    &= \exp\left( \text{sgn}(\zeta) \frac{i\theta\pi}{2} \right) {|\zeta|}^\alpha
\\
    &= \left( \cos(\Theta) + i \sin(\Theta) \right) {|\zeta|}^\alpha
    , \quad \where \Theta = \text{sgn}(\zeta) \frac{\theta\pi}{2}
\\
    & \quad \text{and} \,\, 
        0 \lt \alpha \le 2, |\theta| \le \min\{\alpha, 2-\alpha\}.
\end{align*}

Put it into the integral of the \(\alpha\)-stable PDF (See (F.35) of \cite{Mainardi:2010}), we get

\begin{align*}
\afstable{x}
    &= \frac{1}{2\pi} \int_{-\infty}^{\infty} d\zeta  
    \, \exp(-\psi_\alpha^\theta(\zeta)) \, e^{-ix\zeta}
\\
    &= \frac{1}{2\pi} \int_{-\infty}^{\infty} d\zeta  \, e^{-\cos(\Theta) {|\zeta|}^\alpha} 
    \, e^{-i (B + x\zeta)}
    , \quad \where B = \sin(\Theta) {|\zeta|}^\alpha
\end{align*}

The imaginary part of the integral cancels out due to the sign of \(\zeta\).
The real part of the integral can be simplified to 
the integral on the positive axis \((\zeta \ge 0)\).
We can remove \(\text{sgn}(\zeta)\) from \(\Theta\). Hence,

\begin{align*}
\afstable{x}
    &= \frac{1}{\pi} \int_{0}^{\infty} d\zeta  \, e^{-\cos(\Theta) {\zeta}^\alpha} 
    \, \cos(B + x\zeta) 
    , \quad \where \Theta = \frac{\theta\pi}{2}, B = \sin(\Theta) {\zeta}^\alpha
\end{align*}

Next, we apply the Lambda decomposition \eqref{eq:lambda-decomp} for the stable-vol \(V_\alpha(\zeta)\)
on the \(e^{-\cos(\Theta) {\zeta}^\alpha}\) term. 
Let \(u = \cos(\Theta)^{1/\alpha} \zeta\), then 
\(e^{-\cos(\Theta) {\zeta}^\alpha} = e^{-{u}^\alpha}\). We have

\begin{align}
\afstable{x} \notag
    &= \int_{0}^{\infty} d\zeta  
    \, \cos(B + x\zeta)
      \, \int_{0}^{\infty} \frac{ds}{s} 
       \left( \frac{1}{\sqrt{2\pi}} e^{-(u/s)^2/2} \right)
       \left( \frac{2 \Gamma(\frac{1}{\alpha}+1)}{\pi}  V_\alpha(s) \right)
\\ \label{eq:astable-pdf-mid}
    &= \int_{0}^{\infty} s \, ds \,
       \left( \frac{1}{\pi s} \int_{0}^{\infty} d\zeta  
            \, \cos(B + x\zeta)  e^{-(u/s)^2/2} \right)
       \left( \frac{2 \Gamma(\frac{1}{\alpha}+1)}{\sqrt{2\pi}} s^{-1} V_\alpha(s) \right)
\end{align}

The terms inside the first big parenthesis of \eqref{eq:astable-pdf-mid} 
is the skew-Gaussian kernel in its raw format:

\begin{align*}
\gskew
    &= \frac{1}{\pi s} \int_{0}^{\infty} d\zeta  
            \, \cos(B + x\zeta)  e^{-(u/s)^2/2}
    , \quad \where u = \cos(\Theta)^{1/\alpha} \zeta
\end{align*}
Make one more change of variable: \(t = u/s = \cos(\Theta)^{1/\alpha} \zeta/s\),
and \(B = \tan(\Theta) {(st)}^\alpha\),
we arrive at the final form presented in \eqref{eq:main-skew-gaussian} as

\begin{align*}
\gskew
    &= \frac{1}{q \pi} \int_{0}^{\infty} dt
        \, \cos\left(\tau\, {(st)}^\alpha + \frac{x}{q}\, st \right)  e^{-t^2/2} 
        \quad (s \ge 0)
\\  & \where \notag
        q = \cos(\theta\pi/2)^{1/\alpha},
        \,\, \tau = \tan(\theta\pi/2)
\end{align*}
\end{proof}

Next, we prove that \(\afstable{x}\) is indeed 
\(\int_{0}^{\infty} s \, ds \, \gskew \, \chibar_{\alpha,1}(s)\).

\begin{proof}
The terms inside the second big parenthesis of \eqref{eq:astable-pdf-mid} is simplified by 
 \eqref{eq:sv}, 
\begin{align} \label{eq:stable-decomp-fcm}
\left( \frac{2 \Gamma(\frac{1}{\alpha}+1)}{\sqrt{2\pi}} s^{-1} V_\alpha(s) \right)
    &=  2 s^{-1} \FWrfn{\alphahalf}{{(\sqrt{2}s)}^{\alpha}}
\\ \notag
    &= \scN_{\frac{\alpha}{2}}(s; \sigma=\frac{1}{\sqrt{2}}, d=0, p=\alpha)
\\ \notag
    &= \chibar_{\alpha,1}(s)
\end{align}

So according to \eqref{eq:stable-decomp-fcm} and \eqref{eq:stable-chi1}, 
the \(\alpha\)-stable PDF \eqref{eq:astable-pdf-mid} becomes a ratio distribution 
of the kernel and an FCM of one degree of freedom:

\begin{align}\label{eq:afstable-decomp-final}
\afstable{x} 
    &= \int_{0}^{\infty} s \, ds \,
       g_{\alpha}^{\theta}(x,s)
       \, \chibar_{\alpha,1}(s)
\end{align}
\end{proof}

\subsection{Some Properties of the Skew Kernel}
\label{section:properties-skew-kernel}

Some properties of \(\gskew\) are analyzed in this section. 
This will help us better understand its behavior intuitively, as well as 
provide a better implementation.

First, the integral inside \(\gskew\) is bounded by the Gaussian integral 
\(\int_{0}^{\infty} dt \, e^{-t^2/2} \), where \(e^{-t^2/2}\) is the upper bound 
to the absolute of the integrand since \(|\cos(z)| \le 1\). Therefore,
the contribution of the integrand decreases rapidly as \(t\) gets larger.
This is the main reason such representation is chosen.
Implementation-wise, it is better to cut off the integral at a certain large \(t\) (e.g. 100).

Indeed, when \(\theta=0\), the Gaussian integral becomes exact, that is,
\(g_{\alpha}^0(x,s) = \mcN(xs)\); and \(L_\alpha^0(x)\) becomes symmetric.
\begin{proof}
When \(\theta=0\), \(q=1\) and \(\tau=0\). Therefore,
\begin{align*}
g_{\alpha}^0(x,s)
    &= \frac{1}{\pi} \int_{0}^{\infty} dt
        \, \cos\left(xst \right)  e^{-t^2/2} 
\end{align*}
From \eqref{eq:gauss-cosine}, this is \(\frac{1}{\sqrt{2\pi}} e^{-(xs)^2/2} = \mcN(xs)\).
\end{proof}

When \(\theta \ne 0\), some parts of \(\gskew\) can be negative.
Hence, it is not a distribution per se.

When \(\alpha=1\), we have 
\(g_{1}^\theta(x,s) = \frac{1}{q} \,\mcN\left(\left(\tau + \frac{x}{q}\right) s\right)\), 
and \(L_{1}^\theta(x)\) becomes symmetric over \(x_0 = -q \tau = -\sin(\theta \pi / 2)\).
That is, the effect of \(\theta\) is no longer skewness, but a shift in location.
\begin{proof}
When \(\alpha=1\), the argument of \(\cos()\) in \eqref{eq:main-skew-gaussian} 
becomes \(\tau\, {(st)} + \frac{x}{q}\, st = Y t\) 
where \(Y = \left(\tau + \frac{x}{q} \right) s\). 
Therefore, from \eqref{eq:gauss-cosine},
\begin{align*}
g_{1}^\theta(x,s)
    &= \frac{1}{q \pi} \int_{0}^{\infty} dt
        \, \cos\left(Y t \right)  e^{-t^2/2} 
\\
    &= \frac{1}{q} \mcN(Y) 
    = \frac{1}{q}\,\mcN\left(\left(\tau + \frac{x}{q}\right) s\right)
\end{align*}
\end{proof}

At \(s = 0\), \(g_{\alpha}^{\theta}(x,0)\) is simply a constant \(1/\sqrt{2\pi} q\).
This is straightforward since \(g_{\alpha}^{\theta}(x,0) = 
\frac{1}{q \pi} \int_{0}^{\infty} dt\,  e^{-t^2/2}\). Furthermore, for a given \((\alpha,\theta)\) pair, 
\(|\gskew| \le 1/\sqrt{2\pi} q\) for all \(x\)'s and \(s\)'s.

Next we prove:
\begin{lemma}
For a given \(s > 0\), \( \int_{-\infty}^{\infty} g_{\alpha}^{\theta}(x,s) \, dx = 1/s\).
\end{lemma}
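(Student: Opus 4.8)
The plan is to evaluate $\int_{-\infty}^{\infty} g_{\alpha}^{\theta}(x,s)\,dx$ as an inverse‑Fourier‑type oscillatory integral, made tractable by a Gaussian regularization so that Fubini applies. For fixed $s>0$ the function $x\mapsto g_{\alpha}^{\theta}(x,s)$ is bounded (by $1/(q\sqrt{2\pi})$, as established just above the lemma), but its integral over $\R$ is only conditionally convergent, so I would first introduce
\[
 I_\varepsilon \;:=\; \int_{-\infty}^{\infty} e^{-\varepsilon x^2/2}\,g_{\alpha}^{\theta}(x,s)\,dx ,\qquad \varepsilon>0,
\]
and read the claimed integral as $\lim_{\varepsilon\to 0^{+}}I_\varepsilon$ (the Abel value), which coincides with the ordinary Lebesgue integral once $g_{\alpha}^{\theta}(\cdot,s)\in L^1(\R)$ is known.

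First I would substitute the definition \eqref{eq:main-skew-gaussian} and note that, with the damping factor present, the $(x,t)$-integrand is dominated by $e^{-\varepsilon x^2/2}e^{-t^2/2}/(q\pi)$, which is jointly integrable; hence Fubini gives
\[
 I_\varepsilon \;=\; \frac{1}{q\pi}\int_{0}^{\infty} e^{-t^2/2}\left[\int_{-\infty}^{\infty} e^{-\varepsilon x^2/2}\cos\!\left(\tau (st)^\alpha + \tfrac{st}{q}\,x\right) dx\right] dt .
\]
Expanding the cosine, the odd (sine) term integrates to zero and the inner integral is the standard Gaussian transform $\int_{\R} e^{-\varepsilon x^2/2}\cos(cx)\,dx=\sqrt{2\pi/\varepsilon}\,e^{-c^2/(2\varepsilon)}$ with $c=st/q$. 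Rescaling the $t$-variable by $v = st/(q\sqrt{\varepsilon})$ to absorb the now-concentrating factor $e^{-s^2t^2/(2q^2\varepsilon)}$ then yields
\[
 I_\varepsilon \;=\; \frac{\sqrt{2\pi}}{\pi s}\int_{0}^{\infty} e^{-v^2/2}\,\cos\!\left(\tau\,q^{\alpha}\,\varepsilon^{\alpha/2}v^{\alpha}\right)\,e^{-q^2\varepsilon v^2/(2s^2)}\,dv .
\]
Finally, letting $\varepsilon\to 0^{+}$ and applying dominated convergence (the integrand is bounded by $e^{-v^2/2}$ and converges pointwise to it) gives $\lim_{\varepsilon\to 0^{+}} I_\varepsilon = \frac{\sqrt{2\pi}}{\pi s}\int_0^\infty e^{-v^2/2}\,dv = \frac{\sqrt{2\pi}}{\pi s}\cdot\frac{\sqrt{2\pi}}{2} = \frac{1}{s}$, as claimed. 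As a sanity check, in the case $\theta=0$ one has $g_{\alpha}^{0}(x,s)=\mcN(xs)$ and $\int_{\R}\mcN(xs)\,dx=1/s$ directly, consistent with the general formula.

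The Gaussian bookkeeping above is routine; the one point that deserves care is the very first reduction. Since $\int_{\R} g_{\alpha}^{\theta}(x,s)\,dx$ is not absolutely convergent, one must either adopt the Abel-regularized reading — which is exactly how one inverts a characteristic function, hence the natural interpretation here — or verify $g_{\alpha}^{\theta}(\cdot,s)\in L^1(\R)$. The latter I would obtain by a non-stationary-phase estimate on the representation $g_{\alpha}^{\theta}(x,s)=\frac{1}{\pi q s}\int_0^\infty \cos\!\left(\tau r^\alpha + \tfrac{x}{q}r\right)e^{-r^2/(2s^2)}\,dr$ (got by the change of variable $r=st$): the phase derivative $\alpha\tau r^{\alpha-1}+x/q$ stays bounded away from $0$ for $|x|$ large, so integration by parts against the smooth weight $e^{-r^2/(2s^2)}$ produces power decay in $x$, hence integrability, after which dominated convergence legitimizes passing from $I_\varepsilon$ to the honest integral. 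This tail estimate is the main obstacle; everything else is mechanical.
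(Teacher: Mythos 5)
Your computation is correct, but it is a genuinely different argument from the one in the paper. The paper proves this lemma \emph{indirectly}: it starts from the decomposition \(\afstable{x} = \int_{0}^{\infty} s\,ds\, \gskew\, \chibar_{\alpha,1}(s)\), integrates both sides over \(x\in\R\), swaps the order of integration, and then argues that since \(\int_{-\infty}^{\infty}\afstable{x}\,dx = 1\) and \(\int_{0}^{\infty}\chibar_{\alpha,1}(s)\,ds = 1\), the inner integral must equal \(1/s\). That route is shorter but logically weaker — equality of two integrals against the single weight \(\chibar_{\alpha,1}\) does not by itself force the pointwise identity \(s\int g\,dx = 1\) — whereas your direct route (Gaussian regularization, Fubini, the identity \(\int_{\R}e^{-\varepsilon x^2/2}\cos(cx)\,dx=\sqrt{2\pi/\varepsilon}\,e^{-c^{2}/(2\varepsilon)}\), rescaling, dominated convergence) actually establishes the value for each fixed \(s>0\) and every admissible \((\alpha,\theta)\), at the cost of having to say in what sense the \(x\)-integral converges. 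Your algebra checks out: the prefactor collapses to \(\sqrt{2\pi}/(\pi s)\) and the limit integral is \(\sqrt{\pi/2}\), giving \(1/s\). One caveat on your closing \(L^1\) sketch: the claim that the phase derivative \(\alpha\tau r^{\alpha-1}+x/q\) stays bounded away from zero for large \(|x|\) is not correct for one sign of \(x\) (when \(\tau\ne 0\) there is always a stationary point \(r_{*}\), which migrates to \(0\) for \(\alpha<1\) and to \(\infty\) for \(\alpha>1\)); integrability in \(x\) should instead come from a van der Corput/stationary-phase bound using the blow-up of \(\phi''(r_{*})\), or one can simply adopt the Abel-regularized reading throughout, which is all the paper's own (heuristic) proof uses anyway.
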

\begin{proof}
We integrate the \(\alpha\)-stable PDF \eqref{eq:afstable-decomp-final} 
by the full range of \(x\), and it should be one:

\begin{align*}
\int_{-\infty}^{\infty} \afstable{x} \, dx
    &=
    \int_{-\infty}^{\infty} \, dx
    \int_{0}^{\infty} s \, ds \,
       g_{\alpha}^{\theta}(x,s)
       \, \chibar_{\alpha,1}(s) = 1
\end{align*}
Rearranging the double integrals, it becomes
\begin{align*}
\int_{0}^{\infty} s \, ds \,
    \left(
        \int_{-\infty}^{\infty} 
            g_{\alpha}^{\theta}(x,s) \, dx
    \right) 
    \, \chibar_{\alpha,1}(s) 
    &= 1
\end{align*}

Since \(\chibar_{\alpha,1}(s)\) is a PDF, we have \(
\int_{0}^{\infty} \chibar_{\alpha,1}(s) \, ds = 1\). 
This forces the integral inside the big parenthesis to be \(1/s\), which is
\( \int_{-\infty}^{\infty} g_{\alpha}^{\theta}(x,s) \, dx = 1/s\).

\end{proof}

\section{FCM: Fractional Chi-Mean Distribution}
\label{section:fcm}

FCM is the core of the innovation in this paper.
We elaborate the thought process that led to FCM in this section. 
There is a theoretical side of thinking. There is also the aspect of design,
mainly on how to choose the scale of standardization such that it is consistent at both 
\(k=1\) and \(k \to \infty\) limits.

\subsection{Derivation of Fractional Chi}
\label{section:frac-chi}

First of all, after many experiments (elaborated here 
and also the subordination results in Appendix \ref{section:subordination}), 
I've come to realization that we need to use 
a ratio distribution \eqref{eq:ratio-dist},
instead of a product distribution \eqref{eq:product-dist}, to describe
the inner structures of both Student's t and \(\alpha\)-stable. 
This is a paradigm change.

Secondly, after many analyses, I've settled down on using a normal distribution, 
instead of other alternatives, such as Laplace or Cauchy, as the unit distribution in the middle. 
This has been formalized into the continuous Gaussian mixture in Section \ref{section:mixture}.

The advantage of a normal distribution is obvious from the perspective of stochastic calculus. 
Whatever distributional equation we come up with can be translated to the random variable generation directly.
Symbolically, \(L(x) = \int_{0}^{\infty} s ds \mcN(sx) S(s)\) in the former
becomes \(dL_t = dW_t / S_t\) in the later, where \(dW_t\) is the Brownian noise.
Section \ref{section-gsc-random} is dedicated to the subject of 
random variable generation, that is, how \(S_t\) can be generated.

Thirdly, for the expression of the normal distribution, we've evaluated two options:
(a) use the notion of variance (which we like to use \(\nu\) for it in \eqref{eq:lambda-decomp});
or (b) use the notion of volatility (that is, squared root of variance, 
which we like to use \(s\) for it in \eqref{eq:lambda-decomp}).
After many thoughts, we didn't chose the variance route. 

We wouldn't say the last point is totally set in stone. 
Searches on "chi-squared distribution" vs "chi distribution" showed that 
people like to talk about sum of square of variables, 
instead of squared root of that sum. 
But we find that our formula is better presented by the later, instead of the former.

During the following discussions, we assume \(k > 0\). 
\(k\) has the ordinary physical meaning as the \emph{degrees of freedom}.

Now let's begin with an expression of Student's t distribution \(t_k\) as the ratio of a normal and 
the root-mean of a chi-squared: \(t_k \sim \mcN / \sqrt{\chi^2_{k} / k}\) 
as described in \cite{JoramSoch:2019t} and \cite{Tong:1990}.

We prefer to combine the root-mean-square together. Therefore, we define the \(\chi\)-mean 
distribution as \(\chibar_k = \sqrt{\chi^2_{k} / k} = \chi_{k} / \sqrt{k}\), 
which turns out to be a GSC:
\begin{align}\label{eq:chi-at-half-alpha}
\chibar_k(s) 
    &:= \scN_{\frac{1}{2}}\left( s; \sigma=\frac{1}{\sqrt{2k}}, d=k-1, p=1 \right) 
\end{align} 

Next, we consider the more complicated matter of \(\alpha\)-stable. 
In Section \ref{section:skew-kernel}, after a lengthy derivation, we arrive at 
\eqref{eq:stable-decomp-fcm}, where we find ourselves dealing with a similar GSC:
\begin{align}\label{eq:stable-chi1}
\chibar_{\alpha,1}(s) 
    &:= \scN_{\frac{\alpha}{2}}(s; \sigma=\frac{1}{\sqrt{2}}, d=0, p=\alpha)
\end{align}

Here \(\chibar_{\alpha,1}\) has a physical interpretation: It is the \(\alpha\)-stable's
\(\chi\) distribution with \textbf{one} degree of freedom. 
This interpretation will be used in \eqref{eq:fcm-prod-assign} below.

By comparing the GSC parameters between \eqref{eq:chi-at-half-alpha} and \eqref{eq:stable-chi1},
it becomes obvious that it is possible to merge \(\chibar_k(s)\) and \(\chibar_{\alpha,1}(s)\)
carefully and come up with a super distribution family nicknamed FCM.
The result is presented in \eqref{eq:frac-chi-mean}.
But it takes quite some reasonings to get there...

\subsection{Fractional Chi-1}
\label{section:frac-chi-1}

In Lemma \ref{lemma-frac-chi1}, we mentioned that \(\chibar_{\alpha,1}\) 
is the fractional version of classic \(\chi_1\) and it is the building block of SaS
since \(P_\alpha = L_\alpha^0 \sim \mcN / \chibar_{\alpha,1}\), or in a distributional form:
\begin{align}
P_\alpha(x) := L_\alpha^0(x) \notag
    &= \int_{0}^{\infty} s \, ds \,
       \mcN(xs) \, \chibar_{\alpha,1}(s)
\end{align}

Compare it to \(t_k \sim \mcN / \chi_k\). We can re-interpret \(P_\alpha\)
as a sort of t-statistics in the fractional space. And \(\chibar_{\alpha,1}\)
is the fractional \emph{chi of one degree of freedom}\footnote{
Wikipedia has a paragraph for this subject: 
\url{https://en.wikipedia.org/wiki/Proofs_related_to_chi-squared_distribution\#Derivation_of_the_pdf_for_one_degree_of_freedom}}.

The PDF of \(\chibar_{\alpha,1}\) has some peculiar behaviors 
(listed under Lemma \ref{lemma-frac-chi1}) 
that span from a Dirac delta function, to a half-normal distribution, to \(1/x\) type. 
This affects the behavior of SaS directly.
We prove them here.

First, we show \(\chibar_{2,1}(x)\) is a Dirac delta function. 
From FCM's moment formula \eqref{eq:fcm-moment},
\(\E(X^n|\chibar_{2,1}) = 2^{-n/2}\). Its first moment is \(\frac{1}{\sqrt{2}}\). 
Its second moment is \(\frac{1}{2}\). Its variance is zero. 
So it is a Dirac delta function at \(\frac{1}{\sqrt{2}}\), 
aka \(\chibar_{2,1}(x) = \delta\left(x - 1/\sqrt{2} \right)\).

Secondly, we show \(\chibar_{1,1}(x)\) is a half-normal distribution. From \eqref{eq:frac-chi-mean},
its PDF is \(2 x^{-1} W_{-\frac{1}{2},0}(-\sqrt{2} x)\). From Section \ref{section:mapping-gg-one-half},
it becomes \(\sqrt{2/\pi} \, e^{-x^2/2}\).

Thirdly, we prove that \(\chibar_{\alpha,1}(x) \approx \alpha/(ex)\), 
and subsequently, \(P_\alpha(x) \approx \chibar_{\alpha,1}(x)/2 \approx \alpha/(2ex)\), 
when \(\alpha\) is very small but \(x\) is not too small.

\begin{proof}
We use the result of GSC at \(\alpha \to 0\) in Section \ref{section:mapping-gg-0}.
Combine \eqref{eq:stable-chi1} with \eqref{eq:gg-by-gsc}, we obtain
\begin{align*}
    \chibar_{\alpha,1}(x) 
    &\to \pdfGG(x; \sigma=\frac{1}{\sqrt{2}}, d=\alpha, p=\alpha)
\\
    &= \frac{\alpha}{x} {(\sqrt{2} x)}^\alpha \exp\left( -{(\sqrt{2} x)}^\alpha \right)
\end{align*}
The condition that \(\alpha\) is very small but \(x\) is not too small means that 
\({(\sqrt{2} x)}^\alpha \approx 1\). That is, \(\alpha |\log(x)| \to 0\).
Under such condition, \(\chibar_{\alpha,1}(x) \approx \alpha/(ex)\).

For \(P_\alpha(x)\) at \(\alpha \to 0\), we have
\begin{align*}
    P_\alpha(x)
    &= \int_{0}^{\infty}  s ds \, \mcN(sx) \, \bar{\chi}_{\alpha,1}(s)
\\
    &= (\alpha / e) \, \int_{0}^{\infty} ds \, \mcN(sx)
\\
    &= \alpha / (2ex) 
\end{align*}

\end{proof}

This example illustrates it is increasingly difficult to calculate \(P_\alpha(x)\) 
through integrating \(s \,\chibar_{\alpha,1}(s)\) numerically
when \(\alpha\) is small (e.g. \(\alpha \lt 0.3\)) and \(x \to 0\).

\subsection{Physical Interpretation of Fractional Chi}
\label{section:frac-chi-physical}

To gain physical understanding on \(\chibar_{\alpha,k}\), let's repeat how the
\(\chi^2\) distribution is derived classically, as described in \cite{JoramSoch:2019chi2},
but adapt it to the fractional interpretation needed for our purpose.

A fractionally generalized chi-distributed random variable \(R\) 
with \(k\) degrees of freedom is defined 
as the squared root of the sum of \(k\) squared random variables of the \(S_\alpha\) distribution,
which shall be clarified later that  \(S_\alpha\) is \(\chibar_{\alpha,1}\). So we have
\begin{align*}
X_1, ..., X_k \sim S_\alpha 
    &\implies
    R := \sqrt{\sum_{i=1}^{k} X_i^2} 
\end{align*}

Let \(x_1, ..., x_k\) be the values of \(X_1, ..., X_k\) and consider \((x_1, ..., x_k)\)
to be a point in the \(k\)-dimensional space. Define the distance to that point: 
\begin{align*}
    r = \sqrt{\sum_{i=1}^{k} x_i^2}
\end{align*}

Let \(f_R(r)\) and \(F_R(r)\) be the PDF and CDF of \(R\). Their relation is
\begin{align*}
    F_R(r+dr) - F_R(r) &= f_R(r) dr
\\
    &= \int_V \prod_{i=1}^{k} S_\alpha(x_i) dx_i
\end{align*}
where \(S_\alpha(x_i)\) is the PDF of the \(S_\alpha\) distribution,
and \(V\) is the volume on the \((k-1)\)-dimensional surface in the \(k\)-space 
at the radius \(r\).

Assume we can refactor the above integral such that 
\begin{align*}
    f_R(r) dr
    &=  
    \left( \int_V dx_1 ... dx_k \right)
    \left( \prod_{i=1}^{k} S_\alpha(x_i) \right)
\end{align*}
where we deal with the two big parentheses separately.

The first part is an integral of the surface area of the \((k-1)\)-dimensional sphere
with radius \(r\), which is
\begin{align}\label{eq:fcm-sphere-area}
    A(r) dr
    &= \left( \int_V dx_1 ... dx_k \right) = 2 \,r^{k-1} \frac{\pi^{k/2}}{\Gamma(k/2)} \, dr
\end{align}

For the second part, we argue it should just be \(\chibar_{\alpha,1}\) in \eqref{eq:stable-chi1}:
\begin{align}\label{eq:fcm-prod-assign}
    \left( \prod_{i=1}^{k} S_\alpha(x_i) \right) 
    &\sim \chibar_{\alpha,1}(r)
    = \frac{2}{r} \, W_{-\frac{\alpha}{2}, 0} \left( -{(\sqrt{2} r )}^\alpha \right) 
\end{align}

Combining \eqref{eq:fcm-sphere-area} and \eqref{eq:fcm-prod-assign}, 
we arrive at the PDF for \(R\) (apart from a normalization constant):
\begin{align*}
    f_R(r)
    &\propto r^{k-2} \, W_{-\frac{\alpha}{2}, 0} \left( -{(\sqrt{2} r )}^\alpha \right) 
\\
    &= \scN_{\frac{\alpha}{2}}(r; \sigma=\frac{1}{\sqrt{2}}, d=k-1, p=\alpha)
\end{align*}
Therefore, \(f_R(r)\) is the fractional-chi \(\chi_{\alpha,k}(r)\) that we seek.

Note that we glossed over the constant term.
We picked up the \(r^{k-1}\) factor from \(A(r)\) but ignored the rest of it.
This "surface area of the \((k-1)\)-dimensional sphere" is more complicated than 
a simple Euclidean sphere.
This topic is left for future research.

\subsection{From Fractional-Chi to Fractional Chi-Mean}
\label{section:frac-chi-to-fcm}

Now we have \(\chi_{\alpha,k}(x) = \scN_{\frac{\alpha}{2}}(x; \sigma=\frac{1}{\sqrt{2}}, d=k-1, p=\alpha)\).
We need to "standardize" its mean, so to speak. 
Our cue is that \(\chibar_k = \chi_{k} / \sqrt{k}\) works for Student's t. 
A simple multiplier of \(1/\sqrt{k}\) for \(\sigma\) suffices.
The question is: What works for \(\chi_{\alpha,k}\) in general?

We examine the mean of \(\chi_{\alpha,k}\). From \eqref{eq:gsc-moment}, its mean is
\begin{align*}
\E(X|\chi_{\alpha,k}) &= 
    \frac{1}{\sqrt{2}} \,
    \frac{\Gamma(\frac{k-1}{2})} {\Gamma(\frac{k-1}{\alpha})} 
    \frac{\Gamma(\frac{k}{\alpha})} {\Gamma(\frac{k}{2})}
\end{align*}
At \(k=1\), \(\E(X|\chi_{\alpha,1}) = \sqrt{\frac{2}{\pi}} \,\Gamma\left(\frac{1}{\alpha}+1\right)\),
which we want to preserve for \(\alpha\)-stable.

Its large \(k\) asymptotic is 
\begin{align}
\E(X|\chi_{\alpha,k}) &\sim 
    k^{1/\alpha-1/2} \,\alpha^{-1/\alpha} ,
    \quad k \gg 1;
\end{align}
whose \(k\) dependency we would like to eliminate so that the mean doesn't go to 0 or infinity.

So our requirement for FCM is to modify the \(\sigma\) parameter in \(\chi_{\alpha,k}\), 
according to the \(k\) component of \(\chi_{\alpha,k}\)'s asymptotic mean,
such that FCM has an asymptotic mean independent of \(k\).

The solution is the following. Define
\begin{align} \label{eq:fcm-sigma}
\sigma_{\alpha,k} &:= 
    \frac{k^{1/2 - 1/\alpha}}{\sqrt{2}}  
\end{align}
which has the desired properties: \(\sigma_{1,k} = 1/\sqrt{2k}\), 
and \(\sigma_{\alpha,1} = 1/\sqrt{2}\), in order to preserve both Student's t
and \(\alpha\)-stable distributions.

Hence, FCM is constructed as
\begin{align} \label{eq:frac-chi-mean}
\chimeanDist(x) 
    &:= \scN_{\frac{\alpha}{2}}(x; \sigma=\sigma_{\alpha,k}, d=k-1, p=\alpha)
\\ \notag
    &= \scN_{\frac{\alpha}{2}}(x; \sigma=\frac{1}{\sqrt{2}} \left( k^{1/2 - 1/\alpha} \right) , d=k-1, p=\alpha)
\\ \notag
    &= \frac{\alpha \Gamma(\frac{k-1}{2})}{\Gamma(\frac{k-1}{\alpha})}
    \, {\left( \frac{\sqrt{2}}{k^{1/2 - 1/\alpha}} \right)}^{k-1} 
    {x}^{k-2} \,
    W_{-\frac{\alpha}{2}, 0} \left( -{
        \left( \frac{\sqrt{2} \,x}{k^{1/2 - 1/\alpha}} \right)
    }^\alpha \right)
\end{align}

To verify, its large \(k\) asymptotic mean doesn't have \(k\) dependency:
(See Section \ref{section:fcm-at-inf-k})
\begin{align*}
\lim_{k \to \infty} \E(X|\chimeanDist) &= \alpha^{-1/\alpha} 
\end{align*}
This meets our design goal.
\(\E(X|\chibar_{\alpha,1}) = \E(X|\chi_{\alpha,1})\) 
and \(\lim_{k \to \infty} \E(X|\chibar_{1,k}) = 1\) as expected.
And it is nice to have \(\E(X|\chibar_{2,k}) = 1/\sqrt{2}\) for every \(k\).

\textbf{Interpretation:} The \(\alpha\)-stable law states that 
\begin{align*}
    X_1 + X_2 + ... + X_k &\sim k^{1/\alpha} X 
\end{align*}
Therefore, the \(\left( k^{1/2 - 1/\alpha} \right) \) term can be viewed as the relative scale (ratio)
between the normal scaling factor \(k^{1/2}\) and \(\alpha\)-stable scaling factor \(k^{1/\alpha}\).
That is what we added to \(\sigma_{\alpha,k}\) as the way to "demean" the \(k\) dependency.

\subsection{FCM Hankel Integral}
\label{section:fcm-hankel}

From \eqref{eq:fcm-hankel-intg}, make the following substitutions: 
\(\alpha/p=1/2, d=k-1, \alpha d/p=(k-1)/2, p=\alpha\),
and \(\sigma=\sigma_{\alpha,k}\).
The Hankel integral of FCM for \(k > 0\) is

\begin{align}
\chimean{x} &= 
    \,  \Gamma\left(\frac{k-1}{2}\right)
    \, 
    \frac{1}{2\pi i} \int_{H} dt 
    \, \left( \frac{e^t }{t^{(k-1)/2}} \right) 
    \, \pdfGG\left( x; a=\frac{\sigma_{\alpha,k}}{\sqrt{t}}, d=k-1, p=\alpha \right)
\end{align}
This expression requires \(k \ne 1\).

\subsection{FCM Moments}
\label{section:fcm-moments}

FCM's moment is a special case of GSC's moment, which is carried out in Section \ref{section:gsc-moments}. 

For \(k > 0\), based on \eqref{eq:main-chimean-pdf} and \eqref{eq:gsc-moment}, 
substitute \(\alpha/p = 1/2, d=k-1, p=\alpha\) 
and \(\sigma=\sigma_{\alpha,k}\), and we obtain FCM's moments as
\begin{align}\label{eq:fcm-moment}
\E(X^n|\chimeanDist) 
    &= {\left( \frac{k^{1/2 - 1/\alpha}}{\sqrt{2}} \right)}^{n} \,
    \frac{\Gamma(\frac{k-1}{2})} {\Gamma(\frac{k-1}{\alpha})} 
    \frac{\Gamma(\frac{n+k-1}{\alpha})} {\Gamma(\frac{n+k-1}{2})}
    , & \,\, \text{for} \,\, \alpha \ne 0, k > 0, k \ne 1.
\end{align}
When \(k = 1\) or \(k+n = 1\), it requires typical handling of gamma functions at zero \eqref{eq:gamma-zero}.

For \(k < 0\), based on \eqref{eq:main-chimean-pdf-neg-k} and \eqref{eq:gsc-moment}, 
substitute \(\alpha/p = -1/2, d=k, p=-\alpha\) 
and \(\sigma=1/\sigma_{\alpha,k}\), and we obtain FCM's moments as
\begin{align}\label{eq:fcm-moment-nge-k}
\E(X^n|\chimeanDist) 
    &= {\left( \frac{{|k|}^{1/2 - 1/\alpha}}{\sqrt{2}} \right)}^{-n} \,
    \frac{\Gamma(\frac{|k|}{2})} {\Gamma(\frac{|k|}{\alpha})} 
    \frac{\Gamma(\frac{|k|-n}{\alpha})} {\Gamma(\frac{|k|-n}{2})}
    , & \,\, \text{for} \,\, \alpha \ne 0, k < 0.
\end{align}
When \(n = |k|\), it requires typical handling of gamma functions at zero 
\eqref{eq:gamma-zero}. Higher moments may not exist when \(n > |k|\).

As a validation, when \(\alpha = 1\), \eqref{eq:fcm-moment} is reduced to 
the moment formula of a \(\chi\) distribution 
with \(k\) degrees of freedom and scale of \(1/\sqrt{k}\)
due to \eqref{eq:gamma-reflect}.

FCM's first moment is directly related to GSaS peak PDF in \eqref{eq:gsas-peak-pdf}.
Numeric calculation of FCM's moments is important in that it directly impacts the accuracy
of GAS PDF calculation, especially at small \(\alpha\) and when \(x\) is near zero.


\subsection{FCM Reflection Formula}

We prove the reflection formula shown in \eqref{eq:main-fcm-reflection}
and \eqref{eq:main-fcm-moment-reflection}.
\begin{align} \notag
\E(X^n|\chibar_{\alpha,-k}) 
    &= \frac{\E(X^{-n+1}|\chimeanDist)} {\E(X|\chimeanDist) },
    \,\, k > 0.
\end{align}
The approach is to show that the moments evaluated from the RHS via \eqref{eq:fcm-moment}
is the same as 
those of the LHS from \eqref{eq:fcm-moment-nge-k} with \(k\) replaced with \(-k\).

\begin{proof}
Assume \(k > 0\), the \(n\)-th moment on the LHS is
\begin{align*}
\E(X^n|\text{LHS}) 
    &= \E(X^n|\chibar_{\alpha,-k}) 
    = {\left( \frac{{k}^{1/2 - 1/\alpha}}{\sqrt{2}} \right)}^{-n} \,
    \frac{\Gamma(\frac{k}{2})} {\Gamma(\frac{k}{\alpha})} 
    \frac{\Gamma(\frac{k-n}{\alpha})} {\Gamma(\frac{k-n}{2})}
\end{align*}

The \(n\)-th moment on the RHS is
\begin{align*}
\E(X^n|\text{RHS})
    &= \int_0^\infty \frac{x^{n-3} }{\E(X|\chimeanDist) } 
       \, \chimeanDist\left(\frac{1}{x}\right) \,dx
\end{align*}
Make a change of variable \(t = 1/x\), we get
\begin{align*}
\E(X^n|\text{RHS})
    &= \int_0^\infty \frac{t^{-n+1} }{\E(X|\chimeanDist) } 
       \, \chimeanDist\left(t\right) \,dt
\\
    &= \frac{\E(X^{-n+1}|\chimeanDist)} {\E(X|\chimeanDist) } 
\end{align*}
Hence, \eqref{eq:main-fcm-moment-reflection} is true 
if \(\E(X^n|\text{RHS})\) = \(\E(X^n|\text{LHS})\).

Plug \eqref{eq:fcm-moment} to both the numerator and denominator,
\begin{align*}
\E(X^n|\text{RHS})
    &= \left( 
        {\left( \frac{{|k|}^{1/2 - 1/\alpha}}{\sqrt{2}} \right)}^{-n+1} \,
        \frac{\Gamma(\frac{k-1}{2})} {\Gamma(\frac{k-1}{\alpha})} 
        \frac{\Gamma(\frac{-n+k}{\alpha})} {\Gamma(\frac{-n+k}{2})}
    \right) \bigg/
    \left( 
        {\left( \frac{{|k|}^{1/2 - 1/\alpha}}{\sqrt{2}} \right)} \,
        \frac{\Gamma(\frac{k-1}{2})} {\Gamma(\frac{k-1}{\alpha})}
        \frac{\Gamma(\frac{k}{\alpha})} {\Gamma(\frac{k}{2})}  
    \right)
\\
    &= 
    {\left( \frac{{|k|}^{1/2 - 1/\alpha}}{\sqrt{2}} \right)}^{-n} \,
        \frac{\Gamma(\frac{k}{2})} {\Gamma(\frac{k}{\alpha})} 
        \frac{\Gamma(\frac{k-n}{\alpha})} {\Gamma(\frac{k-n}{2})}
\end{align*}
It is clear that \(\E(X^n|\text{RHS})\) and \(\E(X^n|\text{LHS})\) are the same.

\end{proof}


\subsection{FCM at Infinite Degrees of Freedom}
\label{section:fcm-at-inf-k}

First, we establish the following lemma:
\begin{lemma}

We use \eqref{eq:gamma-infty} to simplify \eqref{eq:fcm-moment} and
obtain the asymptotic value of the \(n\)-th moment for \(k \gg 1\):
\begin{align} \label{eq:fcm-moment-large-k}
\E(X^n|\chimeanDist) 
    &\sim 
    {\left( \frac{{|k|}^{1/2 - 1/\alpha}}{\sqrt{2}} \right)}^{n} \,
    {\left(\frac{k}{2}\right)}^{-n/2} 
    {\left(\frac{k}{\alpha}\right)}^{n/\alpha} 
    =
    {\alpha}^{-n/\alpha} 
\end{align}
It follows from the reflection rule that 
\(\E(X^n|\chibar_{\alpha,-k}) \sim {\alpha}^{n/\alpha} \).
Notice that the asymptotic value has no dependency on \(k\).

\end{lemma}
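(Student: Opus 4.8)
The plan is to start from the closed-form moment formula \eqref{eq:fcm-moment}, which for $\alpha \neq 0$, $k > 0$, $k \neq 1$ reads $\E(X^n|\chimeanDist) = \left( k^{1/2-1/\alpha}/\sqrt{2} \right)^{n}\, \frac{\Gamma(\frac{k-1}{2})}{\Gamma(\frac{k-1}{\alpha})}\, \frac{\Gamma(\frac{n+k-1}{\alpha})}{\Gamma(\frac{n+k-1}{2})}$, and to let $k \to \infty$ with $\alpha$ held fixed in $(0,2]$. First I would regroup the four gamma factors by their denominator: pair $\Gamma(\frac{n+k-1}{\alpha})$ with $\Gamma(\frac{k-1}{\alpha})$, and $\Gamma(\frac{k-1}{2})$ with $\Gamma(\frac{n+k-1}{2})$, so that each resulting quotient is a ratio of two gamma values whose arguments differ by a fixed shift while both tend to $+\infty$. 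This is exactly the regime covered by \eqref{eq:gamma-infty}, the ratio asymptotic $\Gamma(z+a)/\Gamma(z+b) \sim z^{a-b}$ as $z \to \infty$.

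Concretely, writing $\frac{n+k-1}{\alpha} = \frac{k}{\alpha} + \frac{n-1}{\alpha}$ and $\frac{k-1}{\alpha} = \frac{k}{\alpha} - \frac{1}{\alpha}$ gives $\Gamma(\frac{n+k-1}{\alpha})/\Gamma(\frac{k-1}{\alpha}) \sim \left(\frac{k}{\alpha}\right)^{n/\alpha}$, and similarly $\Gamma(\frac{k-1}{2})/\Gamma(\frac{n+k-1}{2}) \sim \left(\frac{k}{2}\right)^{-n/2}$. Multiplying the three pieces --- the explicit prefactor $\left( k^{1/2-1/\alpha}/\sqrt{2} \right)^{n}$, the factor $\left(\frac{k}{2}\right)^{-n/2}$, and the factor $\left(\frac{k}{\alpha}\right)^{n/\alpha}$ --- and collecting the powers of $k$ yields $k^{n/2-n/\alpha}\cdot k^{-n/2}\cdot k^{n/\alpha} = k^{0}$, while the powers of $2$ cancel ($2^{-n/2}\cdot 2^{n/2} = 1$), leaving exactly $\alpha^{-n/\alpha}$, which is \eqref{eq:fcm-moment-large-k}. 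The only points worth a sentence of care are that a finite product of asymptotic equivalences is again an asymptotic equivalence (each factor, divided by its claimed power, tends to a finite nonzero limit), and that for $k$ large enough no gamma argument is a nonpositive integer, so \eqref{eq:fcm-moment} and \eqref{eq:gamma-infty} are legitimately in force; the degenerate case $\alpha = 2$, where both gamma ratios are identically $1$, is consistent with $\alpha^{-n/\alpha} = 2^{-n/2}$.

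For the negative-$k$ statement I would invoke the FCM reflection identity \eqref{eq:main-fcm-moment-reflection}, namely $\E(X^n|\chibar_{\alpha,-k}) = \E(X^{-n+1}|\chimeanDist)/\E(X|\chimeanDist)$ for $k > 0$, and substitute the asymptotics just obtained. Since \eqref{eq:fcm-moment-large-k} holds for an arbitrary order, including the possibly-negative order $-n+1$, the numerator is $\sim \alpha^{-(1-n)/\alpha} = \alpha^{(n-1)/\alpha}$ and the denominator is $\sim \alpha^{-1/\alpha}$, so the quotient is $\sim \alpha^{(n-1)/\alpha + 1/\alpha} = \alpha^{n/\alpha}$. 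There is no genuine obstacle anywhere in this argument: it is a chain of substitutions into already-established formulas, and the only real work is the bookkeeping of the powers of $k$ and of $2$, together with the standard observation that these finitely many $k \to \infty$ equivalences may be multiplied and divided freely.
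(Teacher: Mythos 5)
Your proposal is correct and follows essentially the same route as the paper: apply the gamma-ratio asymptotic \eqref{eq:gamma-infty} to the two gamma quotients in \eqref{eq:fcm-moment}, observe that all powers of $k$ and $2$ cancel against the prefactor $\left(k^{1/2-1/\alpha}/\sqrt{2}\right)^n$, and obtain the negative-$k$ statement from the reflection identity \eqref{eq:main-fcm-moment-reflection}. Your write-up is in fact more careful than the paper's (which essentially just asserts the cancellation), particularly in checking that the equivalences may be multiplied and that the gamma arguments are eventually large and positive.
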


The proof for Lemma \ref{lemma-fcm-delta1} becomes very easy,
which states that FCM becomes a delta function at \(x = m_{\alpha}\) when \(k \gg 1\),
where \(m_{\alpha}\) is its first moment.
We prove it by showing 
\(m_{\alpha}\) is \({\alpha}^{-1/\alpha}\) 
and its variance is zero.

\begin{proof}

Let \(m_{\alpha} = {\alpha}^{-1/\alpha}\), we have 
\(\E(X^n|\chimeanDist) \sim {(m_{\alpha})}^n\).
Hence, the first moment is obviously 
\begin{align*}
\E(X|\chimeanDist) 
    \sim {\alpha}^{-1/\alpha}
\end{align*}

The second moment is \(\E(X^2|\chimeanDist) \sim m_{\alpha}^2\).
Therefore, its variance is \(\E(X^2|\chimeanDist) - m_{\alpha}^2 = 0\).
FCM becomes a delta function as \(k \to \infty\).

On the negative side, the reflection rule ensures the same proof applies
such that
\(\E(X^2|\chibar_{\alpha,-k}) - \E(X|\chibar_{\alpha,-k})^2 = 0\).
FCM becomes a delta function as \(-k \to -\infty\).

\end{proof}

\section{GSaS: Generalized Symmetric alpha-Stable Distribution}
\label{section-gsas}

The PDF of a GSaS takes the form of an elegant ratio distribution:
\begin{align*}
\gsas{x} 
    &:= \int_{0}^{\infty} s \, ds \,
      \mcN(xs) \, \chimeanDist(s)
\end{align*}
This two-sided distribution is one of the most important outcomes of this work.
We prove some closed form results in this section.

\subsection{GSaS Peak PDF}

The peak PDF of a GSaS at \(x=0\) comes from the first moment of FCM 
in Section \ref{section:fcm-moments}:
\begin{align}
\gsas{0} \notag
    &= \frac{1}{\sqrt{2\pi}} \int_{0}^{\infty} s \, ds \, \chimeanDist(s)
    = \frac{1}{\sqrt{2\pi}} \E(X|\chimeanDist) 
\\ \label{eq:gsas-peak-pdf}
    &= 
    \left\{
    \begin{array}{ll} \displaystyle
        \frac{k^{1/2 - 1/\alpha}}{2 \sqrt{\pi}} \,
        \frac{\Gamma(\frac{k-1}{2})} {\Gamma(\frac{k-1}{\alpha})} 
        \frac{\Gamma(\frac{k}{\alpha})} {\Gamma(\frac{k}{2})}
        , & \,\, k > 0.
    \\[0.2in] \displaystyle
        \frac{1}{\sqrt{\pi} {\left( {|k|}^{1/2 - 1/\alpha} \right)}} \,
        \frac{\Gamma(\frac{|k|}{2})} {\Gamma(\frac{|k|}{\alpha})} 
        \frac{\Gamma(\frac{|k|-1}{\alpha})} {\Gamma(\frac{|k|-1}{2})}
        , & \,\, k < 0.
    \end{array}
    \right.
\end{align}
When \(k = {\pm}1\), it requires the typical special handling of the gamma function at zero.

The \textbf{standardized peak PDF} is the peak PDF \(\gsas{0}\) multiplied by 
the standard deviation, \(\sqrt{\E(X^2|\gsasDist)}\), if it exists.
We argue that the standardized peak PDF is an important statistic since it is invariant 
to the scale of the sample data. In finance, this means that the standardized peak PDF
measures the similarity of two data sets
taken from two periods of different volatility environments,
or from two different assets altogether.

Aligning the peak PDF was often impossible in the past due to the limited 
one dimensional parameter space in either \(\alpha\) or \(k\).
GSaS overcomes it with a two-dimensional parameter space. 
This opens up much greater flexibility,
e.g., to match the standard deviation, the peak density, and the kurtosis simultaneously.

The first validation of \(\gsas{0}\) is that
it becomes the peak PDF of the original \(\alpha\)-stable distribution when \(k \to 1\):
\begin{align*}
L_{\alpha,1}(0) = P_{\alpha}(0)
    &= \frac{1}{\pi} \,
    \Gamma\left( 1 + 1/\alpha \right)
\end{align*}

The second validation is that
it becomes the peak PDF of the exponential power distribution when \(k \to -1\):
\begin{align*}
L_{\alpha,-1}(0) = \mathcal{E}_{\alpha}(0)
    &= \frac{1}{2 \Gamma\left( 1 + 1/\alpha \right)} 
\end{align*}

\subsection{GSaS Moments}

GSaS has closed form solution for all the moments. This is a great improvement over 
the original \(\alpha\)-stable distribution.

\begin{lemma} \label{lemma-gsas-moments}
The \(n\)-th moment of GSaS is
\begin{align}\label{eq:gsas-moment}
\E(X^n|\gsasDist) 
&= \frac{2^{n/2}}{\sqrt{\pi}} 
    \, \Gamma\left(\frac{n+1}{2}\right) \, 
    \E(X^{-n}|\chimeanDist) ,
    \quad k \ne 0
\end{align}
when \(n\) is even. The odd moments are zero.

Its closed form is
\begin{align}\label{eq:gsas-moment-expanded}
\E(X^n|\gsasDist) 
    &= 
    \left\{
    \begin{array}{ll} \displaystyle
        \frac{2^{n}}{\sqrt{\pi} \left( k^{1/2 - 1/\alpha} \right)^n} \,
        \Gamma\left(\frac{n+1}{2}\right) \,
        \left( 
            \frac{\Gamma(\frac{k-1}{2})} {\Gamma(\frac{k-1}{\alpha})} 
            \frac{\Gamma(\frac{k-n-1}{\alpha})} {\Gamma(\frac{k-n-1}{2} )}
        \right) 
        , & \,\, k > 0.
    \\[0.2in] \displaystyle
        \frac{ {( {|k|}^{1/2 - 1/\alpha} )}^{n} }{\sqrt{\pi}} \,
        \Gamma\left(\frac{n+1}{2}\right) \,
        \left( 
            \frac{\Gamma(\frac{|k|}{2})} {\Gamma(\frac{|k|}{\alpha})} 
            \frac{\Gamma(\frac{|k|+n}{\alpha})} {\Gamma(\frac{|k|+n}{2})}
        \right) 
        , & \,\, k < 0.
    \end{array}
    \right.
\end{align}
When \(k = {\pm}1, n+1, -n\), it requires the typical special handling of the gamma function at zero.

This formula is continuous and well-behaves when \(k\) is 
\emph{sufficiently large}, but it gets very complicated when \(k\) is small. 
This is a complexity inherent from the \(\alpha\)-stable distribution.

\end{lemma}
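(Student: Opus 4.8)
The plan is to evaluate the moment integral directly from the defining mixture \eqref{eq:main-gsas-def} and peel it into a Gaussian factor times a negative-order FCM moment. Starting from
\begin{align*}
\E(X^n|\gsasDist)
  &= \int_{-\infty}^{\infty} x^n \, \gsas{x} \, dx
   = \int_{-\infty}^{\infty} x^n \, dx \int_{0}^{\infty} s \, \mcN(xs) \, \chimeanDist(s) \, ds ,
\end{align*}
I would interchange the order of integration and substitute $y = xs$ in the inner integral for fixed $s > 0$, which turns it into $s^{-n-1}\int_{-\infty}^{\infty} y^n \,\mcN(y)\,dy$; collecting the powers of $s$ then gives
\begin{align*}
\E(X^n|\gsasDist) &= \E(Y^n|\mcN)\,\int_{0}^{\infty} s^{-n}\,\chimeanDist(s)\,ds = \E(Y^n|\mcN)\,\E(X^{-n}|\chimeanDist).
\end{align*}
Since $\mcN$ is symmetric, $\E(Y^n|\mcN)=0$ for odd $n$, so the odd moments of GSaS vanish, as they must ($\gsas{x}$ being an even function of $x$). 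For even $n$, the standard Gaussian moment $\E(Y^n|\mcN) = \frac{2^{n/2}}{\sqrt{\pi}}\,\Gamma\!\big(\tfrac{n+1}{2}\big)$ is obtained from $\frac{1}{\sqrt{2\pi}}\int_{-\infty}^{\infty} y^n e^{-y^2/2}\,dy$ by the substitution $u=y^2/2$ and the definition of $\Gamma$; plugging it in establishes \eqref{eq:gsas-moment}.

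To reach the fully expanded form \eqref{eq:gsas-moment-expanded}, I would insert the FCM moment formulas with $-n$ in place of $n$: for $k>0$ this is \eqref{eq:fcm-moment}, for $k<0$ it is \eqref{eq:fcm-moment-nge-k}. In the $k>0$ case the FCM factor contributes $\big(k^{1/2-1/\alpha}/\sqrt{2}\big)^{-n}$, and the hidden $2^{n/2}$ there combines with the $2^{n/2}$ from the Gaussian to give an overall $2^{n}$, leaving the $k$-power as $(k^{1/2-1/\alpha})^{-n}$ and the gamma quotient as $\frac{\Gamma(\frac{k-1}{2})}{\Gamma(\frac{k-1}{\alpha})}\frac{\Gamma(\frac{k-n-1}{\alpha})}{\Gamma(\frac{k-n-1}{2})}$; the $k<0$ case is the mirror image, where instead the two powers of two cancel and only $(|k|^{1/2-1/\alpha})^{n}$ survives, with gamma quotient $\frac{\Gamma(\frac{|k|}{2})}{\Gamma(\frac{|k|}{\alpha})}\frac{\Gamma(\frac{|k|+n}{\alpha})}{\Gamma(\frac{|k|+n}{2})}$. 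The degenerate values $k=\pm 1$, $k=n+1$, $k=-n$ push an argument of one of the gamma functions in the quotient onto a non-positive integer, and are dispatched by the $\Gamma(z)\sim 1/z$ expansion used elsewhere in the paper (cf. \eqref{eq:gamma-zero}).

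The one genuinely delicate point — and the one that decides when a given even moment is actually finite — is the convergence of $\E(X^{-n}|\chimeanDist)=\int_0^\infty s^{-n}\chimeanDist(s)\,ds$; once that integral converges, the interchange of integrals is automatic by Tonelli, the integrand $|x|^n s\,\mcN(xs)\,\chimeanDist(s)$ being nonnegative. For $k>0$ the large-$s$ tail is harmless because $\chimeanDist$ decays faster than any power there, and the binding constraint comes from $s\to 0^+$, where the leading term of the series for $W_{-\alpha/2,0}$ forces $\chimeanDist(s)$ to vanish like $s^{\,k-2+\alpha}$, so $s^{-n}\chimeanDist(s)$ is integrable near the origin precisely when $k>n+1-\alpha$. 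For $k<0$ the reflection rule \eqref{eq:main-fcm-reflection} rewrites $\E(X^{-n}|\chimeanDist)$ in terms of a positive-order moment of $\chibar_{\alpha,|k|}$, which always exists, so no constraint arises there — consistent with $L_{\alpha,k}=\gepDist$ having all moments. This is exactly why \eqref{eq:gsas-moment-expanded} is clean and continuous for large $k$ but tangled for small $k$: one must then check case by case whether $\tfrac{k-n-1}{\alpha}$ and its companions land on poles of $\Gamma$, which is the inherited $\alpha$-stable complexity noted just below the statement. I would therefore carry out the proof under the hypothesis $k>n+1-\alpha$ (or $k<0$) and interpret \eqref{eq:gsas-moment-expanded} for the remaining $k$ as the analytic continuation of the gamma quotients.
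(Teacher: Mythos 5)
Your proof follows the same route as the paper's: interchange the order of integration, recognize the inner integral as the $n$-th Gaussian moment $\frac{2^{n/2}}{\sqrt{\pi}}\,s^{-n-1}\,\Gamma\!\left(\frac{n+1}{2}\right)$, identify the remaining integral as $\E(X^{-n}|\chimeanDist)$, and then substitute \eqref{eq:fcm-moment} and \eqref{eq:fcm-moment-nge-k} with $n$ replaced by $-n$; your bookkeeping of the powers of two in both the $k>0$ and $k<0$ branches is correct. Your added discussion of when $\E(X^{-n}|\chimeanDist)$ actually converges (Tonelli, the $s\to 0^+$ behavior for $k>0$, the reflection rule for $k<0$) is a welcome sharpening that the paper leaves implicit.
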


\begin{proof}
The integral of the \(n\)-th moment is
\begin{align*}
\E(X^n|\gsasDist) 
    &= \int_{-\infty}^{\infty} dx \,
        x^n \, 
        \int_{0}^{\infty} s \, ds \,
            \mcN(xs) \, \chimeanDist(s)
\\
    &=  \int_{0}^{\infty} s \, ds \,
        \left( \int_{-\infty}^{\infty} dx \, x^n \,  \mcN(xs) \right) 
        \, \chimeanDist(s)
\end{align*}

Inside the big parenthesis is the \(n\)-th moment of a normal distribution
(see \eqref{eq:HN-moments}), which is 
\(\frac{2^{n/2}}{\sqrt{\pi}}  s^{-n-1} \Gamma\left(\frac{n+1}{2}\right)\).
Therefore, we get
\begin{align*}
\E(X^n|\gsasDist) 
    &= \frac{2^{n/2}}{\sqrt{\pi}} 
    \, \Gamma\left(\frac{n+1}{2}\right) \, 
    \int_{0}^{\infty} s^{-n} \, ds \, \chimean{s}
\end{align*}
The remaining integral is the "\(-n\)"-th moment of FCM.  
Hence, we get \eqref{eq:gsas-moment}.

To obtain the first line of \eqref{eq:gsas-moment-expanded},
copy \eqref{eq:fcm-moment} into \eqref{eq:gsas-moment}, 
and replace every \(n\) with \(-n\). 

Similarly, copy \eqref{eq:fcm-moment-nge-k} into \eqref{eq:gsas-moment}, 
replace every \(n\) with \(-n\), 
and we reach the second line of \eqref{eq:gsas-moment-expanded}.

\end{proof}

\textbf{GSaS Variance:} 
The second moment \(m_2\), aka the variance, is
\begin{align*}
m_2 = \E(X^2|\gsasDist) 
    &= \E(X^{-2}|\chimeanDist)
\end{align*}
In general, \(m_2 \ne 1\) for finite \(k\).

At \(k \gg 1\), according to \eqref{eq:fcm-moment-large-k},
the variance of the normal-like distribution is
\begin{align*}
\lim_{k \to \infty} m_2 &= {\alpha}^{2/\alpha}
\\
\lim_{k \to -\infty} m_2 &= {\alpha}^{-2/\alpha}
\end{align*}
Both are independent of \(k\). 
This is the intended result from Section \ref{section:fcm-at-inf-k}.

\subsection{GSaS Kurtosis}
Let \(m_4 = \E(X^4|\gsasDist)\), it is straightforward to get
the kurtosis as
\begin{align} \label{eq:gsas-kurtosis-mnt}
\frac{m_4}{m_2^2}
    &= 3 \, \frac{\E(X^{-4}|\chimeanDist)} {\E(X^{-2}|\chimeanDist)^2}
\end{align}
The excess kurtosis is \(\text{exKurt} = m_4/m_2^2 - 3\).

For \(k > 0\), the closed form is
\begin{align}
\frac{m_4}{m_2^2}
    &= \displaystyle \notag
3 \, { 
    \left( 
        \frac{\Gamma\left(\frac{k-1}{2}\right)}      {\Gamma\left(\frac{k-1}{\alpha}\right)} 
        \frac{\Gamma\left(\frac{k-5}{\alpha}\right)} {\Gamma\left(\frac{k-5}{2} \right)}
    \right) 
} \bigg/ {
    {\left( 
        \frac{\Gamma\left(\frac{k-1}{2}\right)}      {\Gamma\left(\frac{k-1}{\alpha}\right)} 
        \frac{\Gamma\left(\frac{k-3}{\alpha}\right)} {\Gamma\left(\frac{k-3}{2} \right)}
    \right)}^2
}
\\ \label{eq:kurtosis-exact}
    &=
    3 \,
    \left(\frac{k-5}{k-3}\right)
    \frac{
        \Gamma(s(k-5)) \Gamma(s(k-1))
    }{  \Gamma(s(k-3))^2 }
    , \quad \where \, s = 1/\alpha.
\end{align}

We prove the linear relation of excess kurtosis in \eqref{eq:gsas-linear-kurtosis} 
for \(k \gg 1\) as follows.
\begin{proof}
In Figure \ref{fig:gsas_ex_kurt_by_s}, notice the singular point at \(s=1/2, k=3\).
Hence we make a change of variable by \(k_3 := k - 3\):
\begin{align*}
\frac{m_4}{m_2^2}
    &=
    3 \,
    \left( 1-\frac{2}{k_3} \right)
    \frac{ \Gamma(s(k_3-2)) \Gamma(s(k_3+2)) }
    { \Gamma(s k_3)^2 }
\end{align*}

Next, the gamma functions are expanded via the Sterling's formula \eqref{eq:gamma-sterling}
for large \(k_3\). It is simplified to
\begin{align*}
\frac{m_4}{m_2^2}
    &=
    3 \,
    {\left( 1-\frac{2}{k_3} \right)}^{s k_3-2s+1/2} 
    \, {\left( 1+\frac{2}{k_3} \right)}^{s k_3+2s-1/2}
\end{align*}

Take the log on both sides, and retain the Taylor series of \(\log(1+x)\) 
up to \(k_3^{-2}\). We have
\begin{align*}
\log{\left(\frac{m_4}{3 \, m_2^2}\right)}
    &=
    \frac{4}{k_3} {\left( s -\frac{1}{2} \right)}
\end{align*}

Since \(m_4/m_2^2 = 3 + \text{exKurt}\) where \(\text{exKurt}\) is the excess kurtosis,
we arrive at the proof that
\begin{align*}
{\left( s -\frac{1}{2} \right)} 
    &= 
    \left(\frac{k-3}{4}\right)
    \log{\left(1 + \frac{\text{exKurt}}{3}\right)}
\end{align*}

\end{proof}

As \(k \to \infty\), the excess kurtosis approaches zero.
GSaS becomes a normal distribution asymptotically,
\(\mcN(0,m_2)\).
Only when \(\alpha = 1\), \(\lim_{k \to \pm\infty} m_2 = 1\).
It becomes a \(\mcN(0,1)\).

\subsection{GSaS CDF}\label{section:gsas-cdf}
The CDF formula of GSaS is straightforward since 
\(\int_{-\infty}^{x} \, \mcN\left(z\right) dz\) 
\(= \frac{1}{2}+\frac{1}{2} \erf\left(\frac{x}{\sqrt{2}}\right)\). 

\begin{lemma}
The CDF of a GSaS is 
\begin{align}\label{eq:gsas-cdf}
\Phi[L_{\alpha,k}](x) 
    &:= \int_{-\infty}^{x} \, L_{\alpha,k}\left(z\right) dz
    = \frac{1}{2} + 
       \frac{1}{2}
    \int_{0}^{\infty} \, ds \, \erf\left( \frac{xs}{\sqrt{2}} \right)
        \, \chimeanDist(s)
\end{align}
\end{lemma}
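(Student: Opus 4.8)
The plan is to start from the defining ratio-distribution integral \eqref{eq:main-gsas-def} for $\gsasDist$, integrate it in $z$ over the half-line $(-\infty,x]$, and then swap the order of the two integrations. First I would write
\begin{align*}
\Phi[L_{\alpha,k}](x)
    &= \int_{-\infty}^{x} dz \int_{0}^{\infty} s\, ds\, \mcN(zs)\, \chimeanDist(s),
\end{align*}
and observe that the integrand $s\,\mcN(zs)\,\chimeanDist(s)$ is nonnegative on $(-\infty,x]\times(0,\infty)$, so Tonelli's theorem licenses the interchange of the integrals with no additional hypothesis; the double integral is finite because $\int_{-\infty}^{\infty} L_{\alpha,k}(z)\,dz = 1$.

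Next I would evaluate the inner $z$-integral at fixed $s>0$. The substitution $u = zs$ (hence $du = s\,dz$) transforms $\int_{-\infty}^{x} s\,\mcN(zs)\,dz$ into $\int_{-\infty}^{xs}\mcN(u)\,du$, which by the elementary normal-CDF identity recalled just before the lemma equals $\tfrac12 + \tfrac12\erf(xs/\sqrt{2})$. Substituting this back gives
\begin{align*}
\Phi[L_{\alpha,k}](x)
    &= \int_{0}^{\infty} ds\, \chimeanDist(s)\left( \frac{1}{2} + \frac{1}{2}\erf\left(\frac{xs}{\sqrt{2}}\right) \right).
\end{align*}
Because $\chimeanDist$ is a probability density, the constant term integrates to $\tfrac12$; pulling that term out of the integral yields exactly \eqref{eq:gsas-cdf}.

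The computation is essentially routine, so there is no real obstacle here; the single point that deserves a sentence of justification is the interchange of the order of integration, which, as noted, is covered by Tonelli since the integrand has a fixed sign (the boundedness of $\erf$ also makes the final one-dimensional integral manifestly convergent). As a sanity check one can note that the formula behaves correctly in the degenerate regimes: when $k \to \pm\infty$, Lemmas \ref{lemma-gsas-clm} and \ref{lemma-fcm-delta1} make $\chimeanDist$ collapse to a Dirac mass and \eqref{eq:gsas-cdf} reduces to a rescaled standard normal CDF, while at $x = 0$ the $\erf$ term vanishes and $\Phi[L_{\alpha,k}](0) = \tfrac12$, consistent with the symmetry of $\gsasDist$.
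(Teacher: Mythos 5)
Your proof is correct and follows essentially the same route as the paper's: write the CDF as the double integral from the defining ratio-distribution form, swap the order of integration, substitute $u = zs$ to reduce the inner integral to the standard normal CDF expressed via $\erf$, and use the fact that $\chimeanDist$ is a probability density to extract the constant $\tfrac12$. Your explicit appeal to Tonelli for the interchange is a small added justification the paper leaves implicit, but the argument is otherwise identical.
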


\begin{proof}
We begin with the double integral via the PDF, such that
\begin{align*}
\Phi[L_{\alpha,k}](x) 
    &= \int_{-\infty}^{x} \, L_{\alpha,k}\left(z\right) dz
\\
    &= \int_{-\infty}^{x} \, dz
      \int_{0}^{\infty} s \, ds \,
      \mcN(zs) \, \chimeanDist(s)
\end{align*}
The order of the two integrals is switched, and we make a change of variable \(u = zs\):
\begin{align*}
\Phi[L_{\alpha,k}](x) 
    &= \int_{0}^{\infty} s \, ds \,
      \left( \int_{-\infty}^{x} \, dz \mcN(zs) \right) 
      \, \chimeanDist(s)
\\
    &= \int_{0}^{\infty} \, ds \,
      \left( \int_{-\infty}^{xs} \, du \mcN(u) \right) 
      \, \chimeanDist(s)
\end{align*}
The inner integral is replaced with the error function:
\begin{align*}
\Phi[L_{\alpha,k}](x)
    &= \int_{0}^{\infty} \, ds \,
      \left( \frac{1}{2}+ \frac{1}{2} \erf\left( \frac{xs}{\sqrt{2}} \right) \right) 
      \, \chimeanDist(s)
\\
    &= \frac{1}{2} + 
       \frac{1}{2}
    \int_{0}^{\infty} \, ds \, \erf\left( \frac{xs}{\sqrt{2}} \right)
        \, \chimeanDist(s)
\end{align*}
We arrive at the desired formula for the CDF.
\end{proof}

\subsection{Fractional Hypergeometric Function}\label{section:frac-hyper}

Due to the connection between the CDF of Student's t and 
the Gauss hypergeometric function 
\(\,_{2}F_{1}\left( a,b;c; z \right)\) (defined in 15.2.1 of DLMF\cite{NIST:DLMF}), 
we propose a framework to 
extend \(\,_{2}F_{1}\) fractionally.

\begin{definition}[Fractional hypergeometric function (FHF)]
Let \(M(b,c; z)\) be the confluent hypergeometric function, or called the Kummer function,
defined in 13.2.2 of DLMF.
(Note we avoid using \(a\) because it is confusing to \(\alpha\).)

The fractional hypergeometric function is defined in a ratio-distribution style as
\begin{align}
    M_{\alpha,k}(b,c; x) 
    &:= \sqrt{\frac{k}{2\pi}}
    \int_0^\infty s\,ds \, M\left(b,c; \frac{x k s^2}{2} \right) \, 
    \chimeanDist(s)
\end{align}

Note that \(x\) usually takes a negative value. This is due to the asymptotic behavior:
\(M(b,c; z) \sim e^z z^{b-c} / \Gamma(b)\) as \(z \to \infty\) (13.2.23 of DLMF).

\qedbar
\end{definition}

\begin{lemma}
The CDF of a GSaS can be expressed by the fractional hypergeometric function as
\begin{align}\label{eq:gsas-cdf-hyper}
\Phi[L_{\alpha,k}](x) 
    &= 
    \frac{1}{2} + 
    \frac{x}{\sqrt{k}} 
    M_{\alpha,k}\left( \frac{1}{2}, \frac{3}{2}; -\frac{x^2}{k} \right) 
\end{align}
\end{lemma}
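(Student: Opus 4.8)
The plan is to deduce \eqref{eq:gsas-cdf-hyper} directly from the CDF representation \eqref{eq:gsas-cdf} together with the classical identity linking the error function to Kummer's confluent hypergeometric function. No integral manipulation beyond \eqref{eq:gsas-cdf} is required: the work is a pointwise substitution inside the $ds$-integral, followed by careful matching of scale factors against the definition of $M_{\alpha,k}$.

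First I would record the scalar identity
\begin{align*}
\erf(z) &= \frac{2z}{\sqrt{\pi}}\, M\!\left(\tfrac12,\tfrac32;-z^{2}\right), \qquad z\in\R .
\end{align*}
This is immediate from power series: since $(1/2)_n/(3/2)_n = 1/(2n+1)$, one has $M(\tfrac12,\tfrac32;-z^{2}) = \sum_{n\ge 0}(-1)^n z^{2n}/(n!\,(2n+1))$, and integrating $e^{-t^{2}}=\sum_{n}(-t^{2})^{n}/n!$ term by term over $[0,z]$ reproduces $\tfrac{\sqrt{\pi}}{2z}\erf(z)$.

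Next, the integrand of \eqref{eq:gsas-cdf} carries $\erf\!\left(\tfrac{xs}{\sqrt2}\right)$; taking $z = xs/\sqrt{2}$ in the identity gives $\tfrac12\erf\!\left(\tfrac{xs}{\sqrt2}\right) = \tfrac{xs}{\sqrt{2\pi}}\,M\!\left(\tfrac12,\tfrac32;-\tfrac{x^{2}s^{2}}{2}\right)$, so that \eqref{eq:gsas-cdf} becomes
\begin{align*}
\Phi[L_{\alpha,k}](x) &= \frac12 + \frac{x}{\sqrt{2\pi}}\int_{0}^{\infty} s\,ds\; M\!\left(\tfrac12,\tfrac32;-\tfrac{x^{2}s^{2}}{2}\right)\chimeanDist(s).
\end{align*}
I would then match this against the definition of $M_{\alpha,k}$: its built-in Kummer argument is $\tfrac{x k s^{2}}{2}$, so choosing the FHF argument to be $-x^{2}/k$ turns it into $-\tfrac{x^{2}s^{2}}{2}$, and hence
\begin{align*}
\frac{x}{\sqrt{k}}\,M_{\alpha,k}\!\left(\tfrac12,\tfrac32;-\tfrac{x^{2}}{k}\right) &= \frac{x}{\sqrt{k}}\sqrt{\tfrac{k}{2\pi}}\int_{0}^{\infty} s\,ds\; M\!\left(\tfrac12,\tfrac32;-\tfrac{x^{2}s^{2}}{2}\right)\chimeanDist(s) = \frac{x}{\sqrt{2\pi}}\int_{0}^{\infty} s\,ds\; M\!\left(\tfrac12,\tfrac32;-\tfrac{x^{2}s^{2}}{2}\right)\chimeanDist(s),
\end{align*}
since $\tfrac{x}{\sqrt k}\sqrt{\tfrac{k}{2\pi}}=\tfrac{x}{\sqrt{2\pi}}$. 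Comparing the two displays yields \eqref{eq:gsas-cdf-hyper}.

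The analytic content is light: the only thing to check is convergence of the FHF integral, which holds because $0\le M(\tfrac12,\tfrac32;-u)\le 1$ for $u\ge 0$ (equivalently $s\,M(\tfrac12,\tfrac32;-\tfrac{x^{2}s^{2}}{2}) = \tfrac{\sqrt{\pi}}{\sqrt2\,x}\erf(\tfrac{xs}{\sqrt2})$ is bounded in $s$), so the integrand is dominated by $s\,\chimeanDist(s)$, whose integral is FCM's first moment \eqref{eq:fcm-moment}. I expect the only genuinely error-prone step to be the constant/scale bookkeeping in the penultimate display — reconciling the $ks^{2}/2$ hard-wired into $M_{\alpha,k}$ and the $\sqrt{k/2\pi}$ prefactor with the external $x/\sqrt{k}$ so that exactly the $\tfrac12\erf$ term of \eqref{eq:gsas-cdf} is reproduced — so I would write that comparison out explicitly rather than leave it implicit.
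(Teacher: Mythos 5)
Your proposal is correct and follows the same route as the paper: substitute the identity $\erf(z)=\frac{2z}{\sqrt{\pi}}M(\tfrac12,\tfrac32;-z^2)$ into \eqref{eq:gsas-cdf} and match the argument $-x^2/k$ and prefactor $x/\sqrt{k}$ against the definition of $M_{\alpha,k}$. Your explicit verification of the scale bookkeeping and the convergence remark are more detailed than the paper's one-line proof, but the underlying argument is identical.
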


\begin{proof}
The error function can be expressed by the Kummer function as (13.6.7 of DLMF)
\begin{align*}
    \erf(x) &= \frac{2x}{\sqrt{\pi}} \, M\left( \frac{1}{2}, \frac{3}{2}; -x^2\right) 
\end{align*}
Substitute it to \eqref{eq:gsas-cdf}, we obtain \eqref{eq:gsas-cdf-hyper}.
\end{proof}

\begin{lemma}
The fractional hypergeometric function subsumes the Gauss hypergeometric function at \(\alpha=1\):
\begin{align} \label{eq:hyp2f1-equiv}
    M_{1,k}(b,c; x) \, B\left(\frac{k}{2}, \frac{1}{2} \right)
    &= 
    \,_{2}F_{1}\left( b, \frac{k+1}{2}; c; x \right)
\end{align}
where \(B(a,b)\) is the beta function (8.17.3	of DLMF).
Notice that \(k\) plays the role of the second parameter in \(\,_{2}F_{1}\).
 
\end{lemma}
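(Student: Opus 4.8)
The plan is to reduce the claim to a classical Laplace-type integral representation of the Gauss hypergeometric function, via an explicit change of variable. First I would specialize the defining integral of $M_{\alpha,k}$ to $\alpha=1$. Since $\chibar_{1,k}$ is the ordinary $\chi$ distribution of $k$ degrees of freedom rescaled by $1/\sqrt{k}$ (as noted in the text following \eqref{eq:main-chimean-pdf}, and consistent with $W_{-1/2,0}(-z)=\frac{z}{2\sqrt{\pi}}e^{-z^2/4}$ together with the Legendre duplication formula), its density is $\chibar_{1,k}(s) = \frac{k^{k/2}}{2^{k/2-1}\,\Gamma(k/2)}\, s^{k-1} e^{-ks^2/2}$. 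Verifying that this matches \eqref{eq:main-chimean-pdf} at $\alpha=1$ is the only slightly fiddly bookkeeping step.

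Next I would substitute this density into the definition of $M_{\alpha,k}$ and change variables by $u = ks^2/2$, so that $s\,ds = du/k$ and $s^2 = 2u/k$. All the powers of $2$ and $k$ telescope, leaving the clean formula
\[
M_{1,k}(b,c;x) = \frac{1}{\sqrt{\pi}\,\Gamma(k/2)} \int_0^\infty e^{-u}\, u^{(k-1)/2}\, M(b,c;xu)\, du ,
\]
valid for $k > -1$ and for $x$ in the domain of convergence (in particular $x<0$, as in \eqref{eq:gsas-cdf-hyper}).

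The third step is the analytic heart: I would invoke the standard identity
\[
\int_0^\infty e^{-u}\, u^{s-1}\, M(b,c;xu)\, du = \Gamma(s)\, {}_{2}F_{1}(b,s;c;x) \qquad (\Re s > 0),
\]
which is proved by expanding $M(b,c;xu)=\sum_{n\ge 0} \frac{(b)_n}{(c)_n\,n!}(xu)^n$, integrating term by term against $e^{-u}u^{s-1}$ to produce $\Gamma(s+n)$, and using $\Gamma(s+n)/\Gamma(s) = (s)_n$ so that the resulting sum is precisely ${}_{2}F_{1}(b,s;c;x)$ (absolute convergence justifies the interchange). Applying it with $s=(k+1)/2$ gives $M_{1,k}(b,c;x) = \dfrac{\Gamma\!\left(\frac{k+1}{2}\right)}{\sqrt{\pi}\,\Gamma(k/2)}\,{}_{2}F_{1}\!\left(b,\tfrac{k+1}{2};c;x\right)$.

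Finally I would multiply by $B\!\left(\tfrac{k}{2},\tfrac12\right) = \Gamma(k/2)\Gamma(1/2)/\Gamma\!\left(\tfrac{k+1}{2}\right) = \sqrt{\pi}\,\Gamma(k/2)/\Gamma\!\left(\tfrac{k+1}{2}\right)$; the gamma factors cancel exactly, leaving ${}_{2}F_{1}\!\left(b,\tfrac{k+1}{2};c;x\right)$, which is \eqref{eq:hyp2f1-equiv}. The main obstacle here is not conceptual but clerical: pinning down the explicit $\alpha=1$ form of $\chibar_{1,k}$ and carrying the constants correctly through the substitution; the hypergeometric integral identity itself is textbook and its only subtlety is the convergence condition, which holds for the negative arguments that actually arise.
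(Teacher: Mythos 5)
Your proposal is correct and follows essentially the same route as the paper: both reduce the $\alpha=1$ integral, via the explicit $\chi$-density $\chibar_{1,k}(s) = \frac{k^{k/2}}{2^{k/2-1}\Gamma(k/2)}s^{k-1}e^{-ks^2/2}$ and a quadratic change of variable, to the Laplace transform of the Kummer function (DLMF 13.10.3), then cancel the gamma factors against $B(\tfrac{k}{2},\tfrac12)$. The only difference is cosmetic: you prove the Laplace-transform identity by term-by-term integration where the paper simply cites DLMF.
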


\begin{proof}
Since \(\chibar_{1,k}(s) = \frac{k^{k/2}}{2^{k/2-1} \Gamma(k/2)} s^{k-1} e^{-k s^2/2}\),
\eqref{eq:hyp2f1-equiv} becomes
\begin{align*}
    \int_0^\infty ds \, M\left(b,c; \frac{x k s^2}{2} \right) \, s^{k} \, e^{-k s^2/2} 
    &= \frac{1}{2} \Gamma\left(\frac{k+1}{2}\right) \, 
    \left(\frac{2}{k}\right)^{(k+1)/2}
    \,_{2}F_{1}\left( b, \frac{k+1}{2}, c; x \right)
\end{align*}
We make two changes of variable selectively, \(a = (k+1)/2\) and \(z = k/2\), such that
\begin{align*}
    \int_0^\infty ds \, M\left(b,c; x z s^2 \right) \, s^{2a-1} \, e^{-z s^2} 
    &= \frac{1}{2} \Gamma(a) \, z^{-a} \,
    \,_{2}F_{1}\left( b, a, c; x \right)
\end{align*}
Lastly, let \(t = s^2\), and \(x = q/z\), we get
\begin{align*}
    \int_0^\infty dt \, M\left(b,c; qt \right) \, t^{a-1} \, e^{-z t} 
    &= \Gamma(a) \, z^{-a} \,
    \,_{2}F_{1}\left( b, a, c; q/z \right)
\end{align*}
which is the Laplace transform of the Kummer function 
mentioned in 13.10.3 of DLMF.
Therefore, this Lemma is essentially a disguised Laplace transform.

\end{proof}

It is then straightforward to obtain the known expression of the CDF of Student's t as\footnote{
See also Wikipedia at \url{https://en.wikipedia.org/wiki/Student\%27s_t-distribution\#Cumulative_distribution_function}
}
\begin{align}\label{eq:t-cdf-hyp2f1}
\Phi[L_{1,k}](x) 
    &= 
    \frac{1}{2} + 
    \frac{x}{\sqrt{k}} \, \frac{1}{ B(\frac{k}{2}, \frac{1}{2})}
    \,_{2}F_{1}\left( \frac{1}{2}, \frac{k+1}{2}; \frac{3}{2}; -\frac{x^2}{k} \right)
    && (x^2 < k)
\end{align}

Next, the CDF of Student's t can also be expressed in the regularized incomplete beta function
\(I_p(a,b)\), where \(0 \le p \le 1\) 
(See 8.17.2 in DLMF) such that
\begin{align}\label{eq:t-cdf-betainc}
\Phi[L_{1,k}](x) 
    &= 
    1 - \frac{1}{2} I_p\left( \frac{k}{2}, \frac{1}{2} \right)
    = \frac{1}{2} + \frac{1}{2} I_q\left( \frac{1}{2}, \frac{k}{2} \right)
    && (x \ge 0)
\\ \notag
    & \quad \where p = \frac{k}{x^2 + k}, q = 1-p.
\end{align}
This is an application of the following equation (by combining 8.17.7 and 15.8.1 in DLMF):
\begin{align*}
    I_q(a,b)
    &= 
    \frac{1}{a B(a,b)} \, \left(\frac{q}{p} \right)^{a} \, 
    \,_{2}F_{1}\left(a, a+b, a+1; -\frac{q}{p} \right)
\end{align*}
where we assign \(a = 1/2\), \(b = k/2\), and \(q/p = x^2/k\).

\subsection{GSaS Characteristic Function}\label{section:gsas-cf}

We explore the CF of GSaS that was mentioned in \eqref{eq:main-gsas-cf}:
\begin{align*}
\phi_{\alpha,k}(\zeta) &= 
    \sqrt{2\pi} \int_{0}^{\infty} \, ds \,
        \mcN\left(\frac{\zeta}{s}\right)
        \, \chimeanDist(s)
    && (k > 0)
\end{align*}
It forms a high transcendental function,
fractionally extended such that it subsumes both 
the stretched exponential function \(e^{-\zeta^\alpha}\) in \eqref{eq:gsas-cf-alpha-piece},
and the \emph{modified Bessel function of the second kind}\footnote{
See also Wikipedia at \url{https://en.wikipedia.org/wiki/Bessel_function\#Modified_Bessel_functions}.
} \(K_{\frac{k}{2}}(\zeta \sqrt{k})\) in \eqref{eq:gsas-cf-k-piece}
(See \textbf{scipy.special.kn} and (11.117) in \cite{Arfken:1985}).

The first part is straightforward. Set \(k=1\), 
the LHS by definition of the \(\alpha\)-stable distribution is
\begin{align}\label{eq:gsas-cf-alpha-piece}
\phi_{\alpha,1}(\zeta) = e^{-\zeta^\alpha} 
    &= \sqrt{2\pi} \int_{0}^{\infty} \, ds \,
      \mcN\left(\frac{\zeta}{s}\right)
      \, \chibar_{\alpha,1}(s)
    && (\zeta \ge 0)
\\ \notag
    &= \int_{0}^{\infty} \, \frac{ds}{s} \,
      \mcN\left(\frac{\zeta}{s}\right)
      \, \left( \sqrt{2\pi}  s \, \chibar_{\alpha,1}(s) \right)
\end{align}
The second line is essentially the same as the second line of \eqref{eq:lambda-decomp}.
Note the term in the parenthesis: \(\sqrt{2\pi}  s \, \chibar_{\alpha,1}(s) \propto V_\alpha(s)\).
And \(V_\alpha(s) = \chi_{\alpha, 2}(s)\).
We can see that, dimensionally speaking, the \(s^k\) term is correct.

The second part is based on the CF of Student's t,
\(\phi_{1,k}(\zeta)\), such that
\begin{align}\label{eq:gsas-cf-k-piece}
\phi_{1,k}(\zeta) = 
    \frac{t^{k/2} }{ 2^{k/2-1} \, \Gamma(\frac{k}{2})} K_{\frac{k}{2}}(t)
    &= \sqrt{2\pi} \int_{0}^{\infty} \, ds \,
      \mcN\left(\frac{\zeta}{s}\right)
      \, \chibar_{1,k}(s) 
    , \quad \where t = \zeta \sqrt{k}, \zeta \ge 0
\end{align}

The complex normalization constant between \(\phi_{1,k}(\zeta)\) and \(K_{\frac{k}{2}}(t)\) is
to ensure the general requirement of a CF such that \(\phi_{\alpha,k}(0) = 1\). 
The \(t^{k/2}\) term is due to the divergent asymptotic of \(K_{\frac{k}{2}}(t)\) at \(t=0\)
(See (11.124) in \cite{Arfken:1985}):
\begin{align*}
\lim_{t \to 0} K_\nu(t) &= 
    2^{\nu-1} \, \Gamma(\nu) \, t^{-\nu} 
    , \quad \nu = \frac{k}{2} > 0
\end{align*}

\subsection{GSaS Series Representation of PDF and Tail Behavior}\label{section:gsas-tail}

The PDF of GSaS \((k > 0)\) has two known series representations. 
One in term of \(x\) and the other in terms of \(1/x\). 
Both have issues in terms of radius of convergence.
More future work is needed in this area.

Here we present the \(1/x\) result to investigate the tail behavior. 
This is a working example of the four-parameter Wright function \eqref{eq:wright-fn-4ways}.
(The small \(x\) series is presented in Section \ref{section-gsc-mgf}.)

\begin{lemma}
The series representation of GSaS PDF in terms of  \(1/x\) is
\begin{align} \label{eq:gsas-pdf-series}
\gsas{x} 
    &= 
    \,\left(
        \frac{S_{\alpha,k} }{2 \sqrt{\pi}} 
    \right)
    \, \frac{1}{x}
    {\left( \frac{x}{\gsasSigma} \right)}^{-k+1} 
    \, W \left[ \begin{matrix}
        \frac{\alpha}{2} ,& \frac{k}{2}
        \\ -\frac{\alpha }{2} ,& 0
    \end{matrix}\right]
    \left(-\left( \frac{x}{\gsasSigma} \right)^{-\alpha} \right)
\\ \notag &= 
    \,\left(
        \frac{S_{\alpha,k} }{2 \sqrt{\pi} \, \gsasSigma} 
    \right)
    {\left( \frac{x}{\gsasSigma} \right)}^{-k} 
    \, \sum_{n=1}^\infty 
        \frac{\Gamma\left( \frac{\alpha n}{2} + \frac{k}{2} \right)} {\Gamma(n+1) \, \Gamma(-\frac{\alpha n}{2})}
    \left[-\left( \frac{x}{\gsasSigma} \right)^{-\alpha} \right]^n
\\ \label{eq:gsas-pdf-symbols}
    & \where 
    S_{\alpha,k} := \frac{\alpha \Gamma(\frac{k-1}{2})}{\Gamma(\frac{k-1}{\alpha})} 
    \,\, \text{and} \,\,
    \gsasSigma := \frac{\sqrt{2}}{\fcmSigma} 
        = 2 \big/ \left( k^{1/2 - 1/\alpha} \right)   
\end{align}
This series works better for smaller \(\alpha\) where the Pareto tails are more prominent.
\end{lemma}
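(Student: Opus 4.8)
The plan is to expand the Wright function sitting inside the FCM factor as its defining power series and integrate it term by term against the Gaussian. Starting from the ratio-distribution definition $\gsas{x}=\int_0^\infty s\,ds\,\mcN(xs)\,\chimean{s}$ and inserting the explicit PDF from \eqref{eq:main-chimean-pdf}, which has the shape $\chimean{s}=S_{\alpha,k}\,\fcmSigma^{-(k-1)}\,s^{k-2}\,\FWrfn{\alphahalf}{(s/\fcmSigma)^{\alpha}}$ with $S_{\alpha,k}=\alpha\Gamma(\tfrac{k-1}{2})/\Gamma(\tfrac{k-1}{\alpha})$, I would substitute the series $\FWrfn{\alphahalf}{z}=\sum_{m\ge0}(-z)^m/(m!\,\Gamma(-\tfrac{\alpha m}{2}))$ with $z=(s/\fcmSigma)^{\alpha}$. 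After interchanging sum and integral, each term carries the elementary Gaussian moment integral $\int_0^\infty s^{k-1+\alpha m}e^{-x^2s^2/2}\,ds=\tfrac12\,(x^2/2)^{-(k+\alpha m)/2}\,\Gamma(\tfrac{k+\alpha m}{2})$, which converges for every $m$ since $k>0$.

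Next I would collect the $m$-independent and $m$-dependent constants. Writing $\fcmSigma^{-1}=\gsasSigma/\sqrt2$, the powers of $2$ accumulated from $\fcmSigma^{-(k-1)}$, $\fcmSigma^{-\alpha m}$, the Gaussian prefactor $1/\sqrt{2\pi}$, and the $2^{(k+\alpha m)/2}$ produced by the moment integral all cancel down to the single factor $1/(2\sqrt\pi)$, independent of $m$, while the surviving $s$-free factor $\alpha\Gamma(\tfrac{k-1}{2})/\Gamma(\tfrac{k-1}{\alpha})$ is exactly $S_{\alpha,k}$. Using $(-1)^m(\gsasSigma/x)^{\alpha m}=\big[-(x/\gsasSigma)^{-\alpha}\big]^m$, pulling out $\gsasSigma^{k-1}x^{-k}=\gsasSigma^{-1}(x/\gsasSigma)^{-k}$, and noting the $m=0$ term drops because $1/\Gamma(0)=0$, one lands precisely on the series in the second displayed line. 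Recognizing $\sum_{n\ge1}\Gamma(\tfrac{\alpha n}{2}+\tfrac k2)\,w^n/(\Gamma(n+1)\,\Gamma(-\tfrac{\alpha n}{2}))$ as the four-parameter Wright function $W\!\left[\begin{matrix}\alphahalf,&\tfrac k2\\-\alphahalf,&0\end{matrix}\right](w)$ of \eqref{eq:wright-fn-4ways} (whose $n=0$ coefficient vanishes for the same reason), and using $x^{-1}(x/\gsasSigma)^{-k+1}=\gsasSigma^{-1}(x/\gsasSigma)^{-k}$, yields the compact first line; the symbol identities $S_{\alpha,k}$ and $\gsasSigma=\sqrt2/\fcmSigma$ in \eqref{eq:gsas-pdf-symbols} fall out of the same bookkeeping.

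The one genuinely delicate point --- and the reason the statement flags "issues in terms of radius of convergence" --- is justifying the term-by-term integration, equivalently the convergence of the resulting $1/x$-series. A Stirling estimate of the coefficient $\Gamma(\tfrac{\alpha n}{2}+\tfrac k2)/(n!\,\Gamma(-\tfrac{\alpha n}{2}))$, using the reflection formula to rewrite the reciprocal Gamma as $\Gamma(1+\tfrac{\alpha n}{2})\sin(\pi\alpha n/2)/\pi$, shows its modulus behaves like $\exp\!\big(n[(\alpha-1)\log n + O(1)]\big)$: for $\alpha\le1$ it decays super-geometrically, so the interchange is legitimate by dominated convergence and the identity is a genuine convergent-series equality (absolutely for $\alpha<1$, and for $\alpha=1$ on the Student's-t convergence region $(x/\gsasSigma)^{\alpha}>\tfrac12$); for $1<\alpha<2$ the series diverges, and the identity must be read as an asymptotic expansion as $x\to\infty$, with remainder $O(x^{-k-\alpha N})$ after $N$ terms. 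The rigorous way to cover all $\alpha$ uniformly is to start instead from the FCM Hankel representation \eqref{eq:fcm-hankel-intg}, carry the Gaussian integral inside the contour, and expand there (equivalently, apply Watson's lemma to the Mellin transform of $\gsasDist$). This $\alpha$-split is precisely what underlies the closing remark that the series "works better for smaller $\alpha$", where the leading Pareto term $\propto x^{-k-\alpha}$ already dominates.
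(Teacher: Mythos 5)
Your proposal follows essentially the same route as the paper's proof: insert the power series of \(W_{-\frac{\alpha}{2},0}\) from the FCM factor into the ratio-distribution integral, integrate term by term using the Gaussian moment formula, and repackage the surviving sum as the four-parameter Wright function, with the same constant bookkeeping landing on \(S_{\alpha,k}/(2\sqrt{\pi}\,\gsasSigma)\). Your added Stirling analysis of the coefficients (super-geometric decay for \(\alpha<1\), divergence for \(\alpha>1\) where the series must be read asymptotically) goes beyond the paper, which merely flags the radius-of-convergence issue as future work, and is a welcome sharpening.
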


\begin{proof}
From \eqref{eq:frac-chi-mean}, we get
\begin{align*} 
\gsas{x} &= 
    S_{\alpha,k}
    \int_{0}^{\infty} \, ds \,
      \mcN(xs) \, 
    {\left( \frac{s}{\fcmSigma} \right)}^{k-1} \,
    W_{\frac{\alpha}{2}, 0} \left( -{
        \left( \frac{s}{\fcmSigma} \right)
    }^\alpha \right)
\\ &= 
    S_{\alpha,k} \fcmSigma \,
    \int_{0}^{\infty} \, dz \,
      \mcN(xz \, \fcmSigma) \, 
    {z}^{k-1} \,
    W_{\frac{\alpha}{2}, 0} \left( -{z}^\alpha \right)
\end{align*}
where \( z = s / \fcmSigma\).

Substitute the Wright function with its series representation,
\begin{align*} 
\gsas{x} &= 
    \frac{\sqrt{2} S_{\alpha,k}}{\gsasSigma} \,
    \, \int_{0}^{\infty} dz
    \, \left(
        \frac{1}{\sqrt{2 \pi}} e^{-{(zx / \gsasSigma)}^2} \right) 
        \, 
        \sum_{n=1}^\infty 
        \frac{{(-1)}^n {z}^{\alpha n + k - 1} } {n!\,\Gamma(-\frac{\alpha n}{2})}
\end{align*}

Integrate term by term, each integral is just the moment formula of a normal distribution.
\begin{align*} 
\gsas{x} &= 
    \frac{ S_{\alpha,k}  }{\sqrt{\pi} \gsasSigma} \,
    \, \sum_{n=1}^\infty 
        \frac{1}{2} 
        \frac{{(-1)}^n  } {\Gamma(n+1) \, \Gamma(-\frac{\alpha n}{2})}
    \,  \Gamma\left( \frac{\alpha n + k}{2} \right)
    {\left( \frac{\gsasSigma}{x} \right)}^{\alpha n + k} 
\end{align*}
Re-arrange the term inside the summation, we arrive at the second line of \eqref{eq:gsas-pdf-series}.
Then replace the summation with the four-parameter Wright function \eqref{eq:wright-fn-4ways}.
\end{proof}

Take the first term of the summation, 
we observe that the tail behavior of GSaS is proportional to \(x^{-(\alpha+k)}\).
This illustrates the effect of \(k\) in which larger \(k\) makes the tail lighter.

To validate the Pareto tail behavior, when \(k \to 1\), 
it becomes \(\alpha c_\alpha x^{-(1+\alpha)}\), where 
\(c_\alpha = \Gamma(\alpha) \sin(\frac{\alpha \pi}{2}) / \pi\), 
consistent with (3.43) of Nolan (2020)\cite{Nolan:2020}.  

\section{GEP: Generalized Exponential Power Distribution}
\label{section:gen-exp-pow-dist}

The exponential power distribution 
\(\mathcal{E}_\alpha(x) = e^{-{|x|}^\alpha}/2\GammaAlpha\)
is mentioned in Definition \ref{def:gexppow}.
It has been used to model asset returns in finance,
as shown in the LHS of \eqref{eq:lambda-decomp}.
Its usefulness stems from its simplicity.
Unlike its \(\alpha\)-stable counterpart, the existence of moments is never an issue. 

GEP \(\gepDist\) enriches \(\mathcal{E}_\alpha\) with the \emph{degree of freedom} parameter \(k\).
We have shown that \(\gepDist = L_{\alpha,-k}\)
by the clever design of FCM. Therefore its properties 
can be obtained indirectly from those of GSaS.
We still carry out some proofs here based on its product distribution definition 
\eqref{eq:main-gexppow-pdf}, for completeness sake as well as for validation.

Its peak PDF is simply 
\(\gepDist(0) = \frac{1}{\sqrt{2\pi}} / \E(X|{\chimeanDist})\). 
It can be standardized
by multiplying with its standard deviation, which is derived below.

All the moments of \(\gepDist\) exist. 
The reader can take a closer look at Figure \ref{fig:gexppow_ex_kurt_by_s}.
All the kurtosis contours are continuous on the half plane of \(k > 0\).
Its moment formula is straightforward:

\begin{lemma}
The \(n\)-th moment of \(\gepDist\) is
\begin{align}\label{eq:gexppow-moment}
\E(X^n|\gepDist) &= 
    \frac{\E(X^n|\mcN) \, \E(X^{n+1}|\chimeanDist)} {\E(X|\chimeanDist)}
\end{align}
where \(n\) is even. All the odd moments are zero.
\end{lemma}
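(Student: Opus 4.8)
The plan is to compute the moment integral directly from the product-distribution definition \eqref{eq:main-gexppow-pdf}, using the same Fubini-then-rescale technique that produced the GSaS moment formula in Lemma~\ref{lemma-gsas-moments}. Starting from
\begin{align*}
\E(X^n|\gepDist)
    &= \frac{1}{\E(X|\chimeanDist)} \int_{-\infty}^{\infty} dx \, x^n
       \int_{0}^{\infty} ds \, \mcN\!\left(\frac{x}{s}\right) \chimeanDist(s),
\end{align*}
I would swap the order of integration, which is legitimate because the integrand is non-negative when $n$ is even (Tonelli), while for odd $n$ the inner $x$-integral vanishes by the symmetry of $\mcN$, immediately yielding the claimed "the odd moments are zero."

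Next I would evaluate the inner integral by the change of variable $u = x/s$, which turns $\int_{-\infty}^{\infty} x^n \, \mcN(x/s)\, dx$ into $s^{n+1} \int_{-\infty}^{\infty} u^n \, \mcN(u)\, du = s^{n+1}\,\E(X^n|\mcN)$; this is just the Gaussian-moment rescaling identity used in the proof of Lemma~\ref{lemma-gsas-moments}. Substituting back leaves
\begin{align*}
\E(X^n|\gepDist)
    &= \frac{\E(X^n|\mcN)}{\E(X|\chimeanDist)} \int_{0}^{\infty} s^{\,n+1}\,\chimeanDist(s)\,ds
     = \frac{\E(X^n|\mcN)\,\E(X^{n+1}|\chimeanDist)}{\E(X|\chimeanDist)},
\end{align*}
since the remaining $s$-integral is by definition the $(n+1)$-th moment of FCM. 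This is exactly \eqref{eq:gexppow-moment}.

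The steps are all routine; the only points needing care are (a) the Fubini/Tonelli interchange, which is clean because FCM with $k>0$ is a genuine probability density on $(0,\infty)$ (the Wright function $W_{-\alpha/2,0}(-z)$ in \eqref{eq:main-chimean-pdf} decays fast enough that, by Lemma~\ref{lemma-gsc-moment}, all positive moments $\E(X^{n+1}|\chimeanDist)$ are finite), so that $\E(X^n|\gepDist)$ genuinely exists for every even $n$; and (b) keeping track of the $1/\E(X|\chimeanDist)$ normalization inherited from \eqref{eq:main-gexppow-pdf}. As a consistency cross-check I would confirm agreement with the identity $\gepDist = L_{\alpha,-k}$ from Definition~\ref{def:gexppow}: substituting $k \to -k$ into the GSaS moment formula \eqref{eq:gsas-moment} and invoking the FCM reflection relation \eqref{eq:main-fcm-moment-reflection} must reproduce the same expression, which can then be written in fully closed form via \eqref{eq:fcm-moment} if an explicit version is wanted.
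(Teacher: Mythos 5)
Your proposal is correct and follows essentially the same route as the paper: interchange the order of integration, substitute $z = x/s$ to factor the double integral into the $n$-th Gaussian moment times the $(n+1)$-th FCM moment, and divide by the normalization $\E(X|\chimeanDist)$. The added remarks on Tonelli, the vanishing odd moments, and the cross-check against $L_{\alpha,-k}$ are sound but beyond what the paper spells out.
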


\begin{proof}
The integral of the \(n\)-th moment from \eqref{eq:main-gexppow-pdf} is
\begin{align*}
\E(X^n|\gepDist)
    &= \frac{1}{\E(X|\chimeanDist)} 
    \, \int_{-\infty}^{\infty} x^n \, dx
    \, \int_{0}^{\infty} \, ds \,
      \mcN\left(\frac{x}{s}\right)
      \, \chimeanDist(s)
\end{align*}

With a change of variable \(z = x/s\), we can separate the two integrals:
\begin{align*}
\E(X^n|\gepDist)
    &= \frac{1}{\E(X|\chimeanDist)} 
    \, \int_{-\infty}^{\infty} z^n \mcN\left(z\right) \, dz
    \, \int_{0}^{\infty} \, s^{n+1}
      \, \chimeanDist(s) \, ds 
\\
    &= 
    \frac{\E(X^n|\mcN) \, \E(X^{n+1}|\chimeanDist)} {\E(X|\chimeanDist)}
\end{align*}

\end{proof}

The excess kurtosis of \(\gepDist\) in \eqref{eq:gexppow-exkurt}
is derived from \( \E(X^4|\gepDist) / \E(X^2|\gepDist)^2 \).

\begin{center}\qedbar\end{center}

The CDF of \(\gepDist\) is derived below by following
the same logic in Section \ref{section:gsas-cdf}.

\begin{lemma}
The CDF of \(\gepDist\) is
\begin{align}\label{eq:gexppow-cdf}
\Phi[\gepDist](x) 
    &:= \int_{-\infty}^{x} \, \mathcal{E}_{\alpha,k}\left(z\right) dz
    = \frac{1}{2} + 
    \frac{1}{2 \E(X|\chimeanDist)}
    \int_{0}^{\infty} \, s \, ds \, \erf\left( \frac{x}{\sqrt{2}\, s} \right)
        \, \chimeanDist(s)
\end{align}
\end{lemma}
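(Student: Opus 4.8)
The plan is to mirror the argument used for the GSaS CDF in Section~\ref{section:gsas-cdf}, since GEP is defined by the product-distribution integral \eqref{eq:main-gexppow-pdf}. First I would write the CDF as the iterated integral obtained by inserting the PDF:
\begin{align*}
\Phi[\gepDist](x)
    &= \int_{-\infty}^{x} dz \, \frac{1}{\E(X|\chimeanDist)}
       \int_{0}^{\infty} ds \, \mcN\left(\frac{z}{s}\right) \chimeanDist(s).
\end{align*}
Next I would interchange the order of the two integrals. This is legitimate because $\mcN$ is bounded and nonnegative and $\chimeanDist$ is a probability density, so Tonelli's theorem applies. After the swap the inner integral is $\int_{-\infty}^{x} \mcN(z/s)\,dz$; substituting $u = z/s$ (with $s>0$) turns it into $s\int_{-\infty}^{x/s}\mcN(u)\,du = s\big(\tfrac12 + \tfrac12\erf(x/\sqrt{2}\,s)\big)$, using the standard Gaussian-CDF identity already quoted in Section~\ref{section:gsas-cdf}.

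Substituting back gives
\begin{align*}
\Phi[\gepDist](x)
    &= \frac{1}{\E(X|\chimeanDist)}
    \int_{0}^{\infty} s\, ds \, \chimeanDist(s)
    \left( \frac{1}{2} + \frac{1}{2}\erf\left(\frac{x}{\sqrt{2}\,s}\right) \right).
\end{align*}
The constant piece contributes $\tfrac{1}{2}\,\E(X|\chimeanDist)^{-1}\int_0^\infty s\,\chimeanDist(s)\,ds = \tfrac12$, because $\int_0^\infty s\,\chimeanDist(s)\,ds$ is exactly the first moment $\E(X|\chimeanDist)$ — the same normalization constant that makes $\gepDist$ integrate to one. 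The remaining piece is precisely the claimed integral, so \eqref{eq:gexppow-cdf} follows.

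The only real subtlety — and thus the \emph{hard part} — is the bookkeeping around the normalization constant $\E(X|\chimeanDist)$: one must recognize that the $s$-weighted mass of FCM is exactly its first moment, which is what cancels against the $1/\E(X|\chimeanDist)$ prefactor to produce the clean $\tfrac12$ offset. Everything else is the routine Fubini-plus-change-of-variable template. As a sanity check I would verify the limits $\Phi[\gepDist](-\infty) = 0$ and $\Phi[\gepDist](+\infty) = 1$, which hold since $\erf(\mp\infty) = \mp 1$ and, again, $\int_0^\infty s\,\chimeanDist(s)\,ds = \E(X|\chimeanDist)$.
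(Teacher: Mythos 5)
Your proposal is correct and follows essentially the same route as the paper's own proof: insert the PDF, swap the integrals, substitute $u = z/s$ to bring in the error function, and let the $s$-weighted mass of FCM (its first moment) cancel the $1/\E(X|\chimeanDist)$ prefactor to give the $\tfrac{1}{2}$ offset. No gaps.
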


\begin{proof}
We begin with the double integral via the PDF, such that
\begin{align*}
\Phi[\gepDist](x)
    &= \frac{1}{\E(X|\chimeanDist)} 
      \int_{-\infty}^{x} \, dz
      \int_{0}^{\infty} \, ds \,
      \mcN\left(\frac{z}{s}\right) \, \chimeanDist(s)
\end{align*}
The order of the two integrals are switched, and we make a change of variable \(u = z/s\):
\begin{align*}
\Phi[\gepDist](x)
    &= \frac{1}{\E(X|\chimeanDist)} 
      \int_{0}^{\infty} \, ds \,
      \left( \int_{-\infty}^{x} \, dz \, \mcN\left(\frac{z}{s}\right) \right) 
      \, \chimeanDist(s)
\\
    &= \frac{1}{\E(X|\chimeanDist)} 
      \int_{0}^{\infty} \, s \, ds \,
      \left( \int_{-\infty}^{x/s} \, du \, \mcN(u) \right) 
      \, \chimeanDist(s)
\end{align*}
The inner integral is replaced with the error function:
\begin{align*}
\Phi[\gepDist](x)
    &= \frac{1}{\E(X|\chimeanDist)} 
      \int_{0}^{\infty} \, s \, ds \,
      \left[ \frac{1}{2}+\frac{1}{2} \erf\left( \frac{x}{\sqrt{2}\, s} \right) \right]
      \, \chimeanDist(s)
\\
    &= \frac{1}{2} + 
       \frac{1}{2 \E(X|\chimeanDist)}
    \int_{0}^{\infty} \, s \, ds \, \erf\left( \frac{x}{\sqrt{2}\, s} \right)
        \, \chimeanDist(s)
\end{align*}
We arrive at the desired formula for the CDF.
\end{proof}
 
\section{GAS: Generalized Alpha-Stable Distribution}
\label{section:gas}

The invention of GAS is experimental. 
Since GSaS is very successful, it is a natural next step to add skewness to it. 

We take a simple approach -- 
The separation of concerns between the skew Gaussian kernel \(\gskew\) and fractional chi-1 
in \(\afstableDist\) is extended along the FCM side,
as shown below from the first line to the second line:
\begin{align} \notag
\afstable{x} = L_{\alpha,1}^\theta(x)
    &= \int_{0}^{\infty} s \, ds \,
      g_{\alpha}^{\theta}(x,s) \, \chibar_{\alpha,1}(s)
\\
\implies \quad \quad \label{eq:gas-pdf}
\gas{x}
    &:= \int_{0}^{\infty} s \, ds \,
      \gskew \, \chimean{s}
\end{align}
The skewness of GAS comes from \(\gskew\) when \(\theta \ne 0\).
When \(\theta = 0\), \(\gskew = \mcN(xs)\) and GAS becomes GSaS.

Analyzing the behavior of skewness in depth when \(|k| > 1\) is a complicated topic. 
\(\theta\) is confined by \(\gas{x} > 0\) for all \(x \in \R\).
When \(|\theta|\) is small, the skew PDF behaves well. 
But the valid range of \(\theta\)
appears to be \emph{much smaller} than the \emph{Feller-Takayasu diamond}: 
\(\theta \le \min\{\alpha, 2-\alpha\}\) \cite{Mainardi:2007, Mainardi:2020}.
The larger \(k\) is, the smaller range of \(\theta\).
This issue is left for future research.


\begin{lemma}(GAS Symmetry Relation)
The following symmetry relation holds, which is an extension from 
\(\alpha\)-stable (See Section 4 in \cite{Mainardi:2007}):
\begin{align*}
\gas{-x}
    &= L_{\alpha,k}^{-\theta}(x)
\end{align*}
Therefore, one only needs to consider the case of \(x > 0\).
\end{lemma}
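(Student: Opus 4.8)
The plan is to push the reflection $x \mapsto -x$ through the defining integral \eqref{eq:main-gas-def} and thereby reduce the whole statement to a parity property of the skew-Gaussian kernel. Since the mixing weight $\bar\chi_{\alpha,k}(s)$ depends on neither $x$ nor $\theta$, it suffices to prove that for every $s \ge 0$,
\begin{align*}
g_{\alpha}^{\theta}(-x,s) &= g_{\alpha}^{-\theta}(x,s).
\end{align*}
Granting this, one integrates both sides against $s\,\bar\chi_{\alpha,k}(s)\,ds$ and obtains $\gas{-x} = L_{\alpha,k}^{-\theta}(x)$ immediately; the restriction to $x > 0$ then follows because the $\theta \mapsto -\theta$ map is an involution.

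To establish the kernel identity I would use the explicit representation \eqref{eq:main-skew-gaussian}. Replacing $\theta$ by $-\theta$ leaves $q = \cos(\theta\pi/2)^{1/\alpha}$ unchanged, because cosine is even, and flips the sign of $\tau = \tan(\theta\pi/2)$, because tangent is odd. Consequently the argument of the cosine in $g_{\alpha}^{-\theta}(x,s)$ is $-\tau\,(st)^\alpha + (x/q)\,st$, while the argument appearing in $g_{\alpha}^{\theta}(-x,s)$ is $\tau\,(st)^\alpha - (x/q)\,st$. These two expressions are negatives of each other, and since $\cos$ is even the integrands — hence the integrals over $t \in (0,\infty)$, carrying the common prefactor $1/(q\pi)$ and the common weight $e^{-t^2/2}$ — coincide. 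This is exactly the statement that at the level of the $\alpha$-stable law, negating both the skewness angle and the abscissa is a symmetry of the density (Section 4 of \cite{Mainardi:2007}), now transported through the fixed Gaussian-mixing weight $\bar\chi_{\alpha,k}$.

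The only point that needs a separate word is the boundary value $\theta = 1$, which is excluded from \eqref{eq:main-skew-gaussian}. By the Feller–Takayasu constraint $\theta \le \min\{\alpha,\,2-\alpha\}$ this forces $\alpha = 1$, and there (as recorded in Section \ref{section:properties-skew-kernel}) the parameter $\theta$ acts as a pure location shift rather than a genuine skewness, so the claimed relation follows from the corresponding shift identity for $\mcN$, or simply by continuity in $\theta$ from the interior. I do not expect a real obstacle in this lemma: the substance is a one-line evenness argument for $\cos$, and the only things to be careful about are tracking the parities of $q$ and $\tau$ correctly and not overlooking the degenerate $\theta = 1$ case.
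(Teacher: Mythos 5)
Your proof is correct and follows essentially the same route as the paper's: reduce the statement to the kernel identity \(g_{\alpha}^{\theta}(-x,s) = g_{\alpha}^{-\theta}(x,s)\) (since FCM carries no \(x\) or \(\theta\) dependence) and verify it from the evenness of cosine in \(q\) and oddness of tangent in \(\tau\). Your extra remark on the degenerate \(\theta=1\) case is a minor refinement the paper omits but does not change the argument.
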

\begin{proof}
Both \(\alpha, \theta\) only show up in the skew kernel, not in FCM.
We just need to prove the relation: \(g_{\alpha}^{\theta}(-x,s) = g_{\alpha}^{-\theta}(x,s)\).

This is easy to show since the cosine function is symmetric 
and the tangent function is asymmetric. 
In \eqref{eq:main-skew-gaussian}, when \(x\) and \(\tau\) change sign simultaneously,
the term \(\cos\left(\tau\, {(st)}^\alpha + \frac{x}{q}\, st \right)\) remains the same.

\end{proof}

Two subjects are unresolved for GAS as of this writing.
First, the analytic form for the moments of GAS is wanting. 
Second, the \(k\) extension of a well-known reciprocal relation 
between \(\alpha\) and \(1/\alpha\) is unknown yet. When \(k = 1\), Feller first showed 
\begin{align*}
\frac{1}{x^{\alpha+1}} \, L_{1/\alpha}^{\theta}(x^{-\alpha})
    &= L_{\alpha}^{\theta^{*}}(x),
    \quad \theta^{*} = \alpha (\theta + 1) - 1,
\end{align*}
where \(1/2 \le \alpha \lt 1\) and \(x > 0\)
(See Section 4 in \cite{Mainardi:2007} and Lemma 2 in p. 583 of \cite{Feller:1971}).
What is the relation when \(k > 1\)?

\subsection{GAS PDF at Small Alpha}
\label{section:gas-pdf-small-alpha}

When \(\alpha\) is small and \(\theta \ne 0\), the GAS integral \eqref{eq:gas-pdf}
can be difficult to converge.
Its behavior can be better understood by splitting the skew kernel in the integral into two parts: 
(a) The Gaussian kernel for small \(s\); (b) The tail kernel for large \(s\).
The integral over the former is much easier to converge than the later.

Let \(g_{\alpha}^\theta(x,s) = n(x,s,q) + h_{\alpha}^\theta(x,s)\),
where
\begin{align*}
n(x,s,q)
    &= \frac{1}{q \pi} \int_{0}^{\infty} dt
        \, \cos\left(\frac{x}{q}\, st \right)  e^{-t^2/2} 
        \quad (s \ge 0)
\end{align*}
and
\begin{align}
h_{\alpha}^\theta(x,s)
    &= \frac{1}{q \pi} \int_{0}^{\infty} dt
        \, \left[
            \cos\left(\tau\, {(st)}^\alpha + \frac{x}{q}\, st \right)
            - \cos\left(\frac{x}{q}\, st \right)
        \right] e^{-t^2/2} 
        \quad (s \ge 0)
\end{align}

Note that \(n(x,s,q)\) can be simplified by the Gaussian integral to
\begin{align*}
n(x,s,q)
    &= \frac{1}{q} \, g_{\alpha}^0\left(\frac{x}{q},s\right)
     = \frac{1}{q} \, \mcN\left(\frac{xs}{q}\right)
\end{align*}
Hence, \eqref{eq:gas-pdf} can be rewritten as
\begin{align}
\gas{x}
    &= \frac{1}{q} \, \int_{0}^{\infty} s \, ds \,
       \mcN\left(\frac{xs}{q}\right) \, \chimean{s}
     + \int_{0}^{\infty} s \, ds \,
       h_{\alpha}^\theta(x,s) \, \chimean{s}
\end{align}

The first integral converges quickly over several multiples of \(q/x\)
(assuming \(x \ne 0\)), 
and most weight of \(\gas{x}\) can be obtained from it.

However, the second integral can be hard to converge. 
It gets its contribution when \(\tau \ne 0\)
and \(s\) is away from zero. 
The tails of both \(h_{\alpha}^\theta(x,s)\) and  \(\chimean{s}\) can decay very slowly
when \(\alpha\) is small.
The larger \(|\tau|\) is, the slower the decay.

Empirically, at \(\alpha = 0.1\), the integral over \(s\) may have to 
extend to \(10^{3}\) in order to achieve \(10^{-2}\) relative tolerance.
And calculating \(h_{\alpha}^\theta(x,s)\) (and/or \(g_{\alpha}^\theta(x,s)\))
at large \(s\) is time-consuming.

\section{Generation of Random Variables for GSC, FCM, and GSaS}\label{section-gsc-random}

In this section, we address the issue of generating random numbers for a GSC.
An established stochastic framework has been written on the Wikipedia page\footnote{
See \url{https://en.wikipedia.org/wiki/Stable_count_distribution\#Generation_of_Random_Variables}}
of the stable count distribution\cite{Wikipedia_StableCountDistribution}. 
That framework is formalized and enlarged to GSC here.
If one can generate random numbers \(\{S_t\}\) for GSC, then it is straightforward to 
simulate random numbers \(\{X_t\}\) for GSaS by a ratio distribution:
\( X_t = \mcN / S_t \), where \(\mcN\) is a standard normal variable.

Given a user-specific volatility \(\sigma_u\) describing how fast \(S_t\) should change,
and a scale parameter \(\theta_u\) (default to 1),
we assume the random variable \(S_t\) should evolve by the following 
generalized \emph{Feller square-root process}\cite{Feller:1951}:
\begin{align}\label{eq:feller-sqare-root}
    dS_t = \sigma_u^2 \, \mu\left(\frac{S_t}{\theta_u} \right) \, dt 
        + \sigma_u \sqrt{S_t} \, dW_t
\end{align}
As \(t \to \infty\), \(\{S_t\}\) will distribute like the equilibrium distribution that 
\(\mu(x)\) is designated for. 

\subsection{Generation of Random Variables for GSC}\label{section-gsc-rv}

\begin{lemma}
The \(\mu(x)\) solution is derived from the Fokker-Planck equation
with \(\gscNx\) as the equilibrium distribution, which yields (assume \(\theta_u=1\))
\begin{align}\label{eq:mu-by-gsc}
\mu(x) 
    &= 
    \frac{ (x\frac{d}{dx} + 1) \, \gscNx }{ 2 \, \gscNx }
\\ \notag 
    &= 
    \frac{x}{2} \frac{d}{dx} [\log \gscNx ] + \frac{1}{2}
\end{align}
The second line is elegant, but only useful in limited cases.
\end{lemma}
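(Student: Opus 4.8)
The plan is to invert the stationary Fokker--Planck (forward Kolmogorov) equation attached to the diffusion \eqref{eq:feller-sqare-root}, following the standard recipe already established for the stable count distribution. Take $\theta_u = 1$ and abbreviate the target equilibrium density by $p(x) := \gscNx$. In \eqref{eq:feller-sqare-root} the drift coefficient is $a(x) = \sigma_u^2\,\mu(x)$ and the diffusion coefficient is $b(x)^2 = \sigma_u^2\, x$, so any stationary density must satisfy
\begin{align*}
0 &= -\frac{d}{dx}\bigl[\sigma_u^2\,\mu(x)\,p(x)\bigr] + \frac12\,\frac{d^2}{dx^2}\bigl[\sigma_u^2\, x\, p(x)\bigr].
\end{align*}
The common factor $\sigma_u^2$ drops out, so the relation between $\mu$ and $p$ is independent of the user-supplied volatility.

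Next I would integrate this identity once from $0$ to $x$, obtaining
\begin{align*}
\mu(x)\,p(x) - \frac12\,\frac{d}{dx}\bigl[x\, p(x)\bigr] &= J,
\end{align*}
where $J$ is the constant stationary probability current. The crux is that $J = 0$: \eqref{eq:feller-sqare-root} describes a genuine one-sided equilibrium on $(0,\infty)$ with no flux through either endpoint. Verifying this carefully is where the real work sits --- one must show that $x\,p(x)$ (and hence the flux) vanishes both as $x \to 0^+$, using the small-$x$ power-law behavior of $\gscNx$ read off from the M-Wright form \eqref{eq:pdf-gsc-mw} together with the standing constraint $dp \ge 0$, and as $x \to \infty$, using the GG-style super-exponential decay of $\FWrfn{\alpha}{(x/\sigma)^p}$ inherited from Section~\ref{section:gsc}. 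This boundary analysis --- and the fact that it is what ultimately restricts the admissible parameter ranges --- is the main obstacle; everything else is bookkeeping.

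Finally, with $J = 0$ the identity reads $\mu(x)\,p(x) = \tfrac12\,\frac{d}{dx}[x\,p(x)]$. Expanding $\frac{d}{dx}[x\,p(x)] = p(x) + x\,p'(x) = \bigl(x\tfrac{d}{dx} + 1\bigr)\,p(x)$ and dividing by $p(x)$, which is strictly positive on the support, yields the first displayed form $\mu(x) = \frac{(x\frac{d}{dx}+1)\,\gscNx}{2\,\gscNx}$. Rewriting the quotient as $\frac{x\,p'(x) + p(x)}{2\,p(x)} = \frac{x}{2}\,\frac{p'(x)}{p(x)} + \frac12 = \frac{x}{2}\,\frac{d}{dx}\bigl[\log p(x)\bigr] + \frac12$ gives the second, logarithmic-derivative form, completing the proof.
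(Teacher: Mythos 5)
Your proposal is correct and follows essentially the same route as the paper: write the stationary Fokker--Planck equation for \eqref{eq:feller-sqare-root}, cancel \(\sigma_u^2\), integrate once, argue the probability current vanishes, and rearrange. The only difference is cosmetic — the paper integrates from \(x\) to \(\infty\) and disposes of the constant by noting \(\gscNx\) is a PDF, whereas you integrate from \(0\) and justify the zero flux by boundary analysis at both endpoints, which is a slightly more careful version of the same step.
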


\begin{proof}
As \(t \to \infty\), the stationary Fokker-Planck equation of \eqref{eq:feller-sqare-root} is 
\begin{align*}
\frac{d}{dx} \left[
    \sigma_u^2 \, \mu\left(x \right) \gscNx \right] 
    &= 
    \frac{1}{2} \frac{d^2}{dx^2} \left[
        \sigma_u^2 x \, \gscNx \right]
\end{align*}

First, \(\sigma_u^2\) cancels out from both sides and is irrelevant to the solution.
Second, apply \(\int_x^{\infty} dx\) to both sides. And we know the constant 
at \(x=\infty\) should be zero, because \(\gscNx\) is the PDF of a distribution. Hence,
\begin{align*}
    \mu\left(x \right) \gscNx 
    &= 
    \frac{1}{2} \frac{d}{dx} \left[
         x \, \gscNx \right]
\end{align*}
Move \(\gscNx\) from LHS to RHS, we obtain \eqref{eq:mu-by-gsc}.
The second line is from the simple application of 
\(\frac{d}{dx} (\log f(x)) = \frac{1}{f(x)} \frac{d}{dx} f(x)\).

\end{proof}

Note that this model subsumes the renown Cox–Ingersoll–Ross (CIR) model\cite{CIR:1985} because 
CIR's equilibrium distribution is a gamma distribution which is subsumed 
by GSC at \(\alpha=0, p=1\).

To simplify the symbology going forward, we define
\begin{align} \label{eq:q-wright-ratio}
Q_{\alpha}(z)  &:= 
        - \frac{W_{-\alpha,-1}\left(-z\right)}
               {W_{-\alpha,0 }\left(-z\right)}
\end{align}

\begin{lemma}
The mean-reverting solution of a GSC in terms of the Wright functions is
\begin{align}\label{eq:mu-sol-adv}
\mu(x) 
    &=  -\frac{p}{2\alpha} 
            \frac{W_{-\alpha,-1}\left(-{\left( \frac{x}{\sigma} \right)}^{p}\right)}
                 {W_{-\alpha,0 }\left(-{\left( \frac{x}{\sigma} \right)}^{p}\right)}
        +
        \left( \frac{d}{2} - \frac{p}{2\alpha} \right) 
    & \alpha \in (0,1)
\\ \notag
    &=  \frac{p}{2\alpha} Q_{\alpha}\left({\left( \frac{x}{\sigma} \right)}^{p}\right)
        +
        \left( \frac{d}{2} - \frac{p}{2\alpha} \right) 
\end{align}
\end{lemma}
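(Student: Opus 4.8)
The plan is to differentiate the logarithm of the GSC density directly. Starting from the compact identity $\mu(x) = \tfrac{x}{2}\,\tfrac{d}{dx}\big[\log \gscNx\big] + \tfrac12$ of \eqref{eq:mu-by-gsc}, write $\gscNx = C\,(x/\sigma)^{d-1}\,\FWrfn{\alpha}{(x/\sigma)^{p}}$ from \eqref{eq:main-gsc-pdf}. Taking the logarithm splits it into the constant $\log C$, the power piece $(d-1)\log(x/\sigma)$, and the Wright piece $\log \FWrfn{\alpha}{(x/\sigma)^{p}}$; the first drops out under $\tfrac{d}{dx}$, the second contributes $(d-1)/x$, so the whole computation reduces to the logarithmic derivative of the Wright piece.

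For that piece I would substitute $u = (x/\sigma)^{p}$, so that $u' = (p/x)\,u$, and apply the chain rule, leaving $\tfrac{d}{du}\FWrfn{\alpha}{u}\big/\FWrfn{\alpha}{u}$ to be evaluated. The key ingredient is the derivative recurrence for the Wright function, $\lambda z\,W_{\lambda,\delta}'(z) = W_{\lambda,\delta-1}(z) - (\delta-1)\,W_{\lambda,\delta}(z)$, which follows at once from the defining series \eqref{eq:wright-fn} together with $\Gamma(\lambda n+\delta) = (\lambda n+\delta-1)\,\Gamma(\lambda n+\delta-1)$ (term-by-term differentiation is legitimate since $W_{\lambda,\delta}$ is entire for $\lambda>-1$). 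Setting $\lambda=-\alpha$, $\delta=0$ and evaluating at $z=-u$, while tracking the extra minus sign produced by differentiating $W_{-\alpha,0}(-u)$ with respect to $u$, yields $\tfrac{d}{du}\FWrfn{\alpha}{u} = -\tfrac{1}{\alpha u}\big(W_{-\alpha,-1}(-u) + \FWrfn{\alpha}{u}\big)$, hence $\tfrac{d}{du}\FWrfn{\alpha}{u}\big/\FWrfn{\alpha}{u} = -\tfrac{1}{\alpha u}\big(1 + W_{-\alpha,-1}(-u)/\FWrfn{\alpha}{u}\big)$ and therefore $\tfrac{d}{dx}\log\FWrfn{\alpha}{u} = -\tfrac{p}{\alpha x}\big(1 + W_{-\alpha,-1}(-u)/\FWrfn{\alpha}{u}\big)$.

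Assembling the pieces, $\mu(x) = \tfrac{x}{2}\big[(d-1)/x - \tfrac{p}{\alpha x}\big(1 + W_{-\alpha,-1}(-u)/\FWrfn{\alpha}{u}\big)\big] + \tfrac12$; the $1/x$ factors cancel and after collecting constants one gets $\mu(x) = \tfrac{d}{2} - \tfrac{p}{2\alpha} - \tfrac{p}{2\alpha}\,W_{-\alpha,-1}(-u)/\FWrfn{\alpha}{u}$ with $u = (x/\sigma)^{p}$, which is precisely the first line of \eqref{eq:mu-sol-adv}; the second line is then immediate from the definition \eqref{eq:q-wright-ratio} of $Q_\alpha$. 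The hypothesis $\alpha\in(0,1)$ is exactly what makes $W_{-\alpha,0}(-\,\cdot\,)$ a genuine, non-degenerate positive density so that the logarithmic derivative and the recurrence both apply; at $\alpha=1$ the density collapses to $\delta(x-\sigma)$ and the expression degenerates. I do not expect an analytic obstacle — the argument is a direct differentiation — so the only real risk is a bookkeeping slip with the signs and with the index shift $\delta\mapsto\delta-1$ when differentiating $W_{-\alpha,0}(-u)$; that is the step I would double-check most carefully.
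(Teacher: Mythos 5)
Your proof is correct and follows essentially the same route as the paper's: both reduce to the logarithmic derivative of the $x^{d-1}$ and Wright factors and both hinge on the same recurrence $-\alpha z\,\frac{d}{dz}W_{-\alpha,0}(-z) = W_{-\alpha,-1}(-z)+W_{-\alpha,0}(-z)$ obtained from \eqref{eq:wright-recurrence} and \eqref{eq:wright-recurrence-diff}. The only difference is presentational — you use the log-derivative form of \eqref{eq:mu-by-gsc} with the chain rule in $u=(x/\sigma)^p$, while the paper applies the product rule after the substitution $z=(x/\sigma)^p$ — and your sign bookkeeping checks out.
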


\begin{proof}

From \eqref{eq:mu-by-gsc}, we make a change of variable,
\(z = {\left( \frac{x}{\sigma} \right)}^{p}\),
and \(d' = (d-1)/p + 1\). Then 
\begin{align*}
    \gscNx &= \scN_\alpha(z; \sigma=1, d', p=1)
\end{align*}
The change of variable also yields \(pz\frac{d}{dz} = x\frac{d}{dx}\), 
which simplifies \(\mu(x)\) to
\begin{align}\label{eq:mu-sol-replaced}
\mu(x) &= 
    \frac{ (pz\frac{d}{dz} + 1) \, \scN_\alpha(z; \sigma=1, d', p=1) }
    { 2 \, \scN_\alpha(z; \sigma=1, d', p=1) }
\end{align} 

The critical algebra is 
\begin{align*}
    & pz\frac{d}{dz} \scN_\alpha(z; \sigma=1, d', p=1)
\\
    &= 
        C p(d'-1) {z}^{d'-1} \, W_{-\alpha,0} \left( -z \right)
        +
        C p({z}^{d'-1}) \, \left( z\frac{d}{dz} W_{-\alpha,0}(-z) \right)
\\
    &= 
    C {z}^{d'-1} \left[
        p(d'-1)  \, W_{-\alpha,0} \left( -z \right)
        +
        \frac{p}{-\alpha} \left( W_{-\alpha,-1}(-z) + W_{-\alpha,0}(-z) \right)
    \right]
\\
    &= 
        -\frac{p}{\alpha} C {z}^{d'-1} W_{-\alpha,-1}(-z)
        +
        \left( pd'-\frac{p}{\alpha}-p \right) \, C {z}^{d'-1} \, W_{-\alpha,0} \left( -z \right)
\end{align*}
We use the following recurrence relation:
\begin{align} \label{eq:mu-numerator-recur-rel}
-\alpha z\frac{d}{dz} W_{-\alpha,0}(-z) 
    &= W_{-\alpha,-1}(-z) + W_{-\alpha,0}(-z)
\end{align}
to move from the second line to the third line above.
(To derive it, set \(\lambda=-\alpha, \mu=0\) 
in \eqref{eq:wright-recurrence} and \eqref{eq:wright-recurrence-diff}).

The \(\mu(x)\) solution becomes
\begin{align*}
\mu(x) 
    &= 
    \frac{1}{2} \left[
        \frac{-p}{\alpha} \frac{W_{-\alpha,-1}(-z)}{W_{-\alpha,0}(-z)}
        +
        \left( pd'-\frac{p}{\alpha}-p \right) 
    \right]
    +
    \frac{1}{2}
\\
    &= 
        -\frac{p}{2\alpha} 
            \frac{W_{-\alpha,-1}\left(-{\left( \frac{x}{\sigma} \right)}^{p}\right)}
                 {W_{-\alpha,0 }\left(-{\left( \frac{x}{\sigma} \right)}^{p}\right)}
        +
        \left( \frac{d}{2} - \frac{p}{2\alpha} \right) 
\end{align*}
which is the intended solution.

\end{proof}

The next question is how to compute \(Q_{\alpha}(z)\).
In addition to going through the series representation of \(W_{-\alpha,-1}(-z)\), 
the following lemma addresses such issue.

\begin{lemma}
There are two methods to compute \(Q_{\alpha}(z)\) depending on the knowledge of
\(F_\alpha(z) \) or \(M_\alpha(z) \). They are
\begin{align} \label{eq:mu-wright-ratio-f}
Q_{\alpha}(z) =
-\frac{W_{-\alpha,-1}\left(-z\right)}
      {W_{-\alpha,0 }\left(-z\right)}
    &=  
    1 + \frac{ \alpha z \, \frac{d}{dz} F_\alpha(z) }{F_\alpha(z)}
\\ \label{eq:mu-wright-ratio-m}
    &=  
    (\alpha + 1) + \frac{ \alpha z \, \frac{d}{dz} M_\alpha(z) }{M_\alpha(z)}
\end{align}

The second line \eqref{eq:mu-wright-ratio-m} allows for high-precision implementation
through \(M_\alpha(z) \) and its first derivative \eqref{eq:m-wright-series-diff}, 
both of which have nice analytic properties.

\end{lemma}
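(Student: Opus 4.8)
The plan is to prove the identity \eqref{eq:mu-wright-ratio-f} first and then obtain \eqref{eq:mu-wright-ratio-m} from it, since $M_\alpha$ and $F_\alpha$ differ only by the factor $\alpha z$ via $M_\alpha(z) = F_\alpha(z)/(\alpha z)$ as recalled in Section~\ref{section:main-recap-wright}. For the first line, I would start from the definition $F_\alpha(z) = W_{-\alpha,0}(-z) = \sum_{n\ge 0} (-z)^n/(n!\,\Gamma(-\alpha n))$ and differentiate term by term. Differentiating and multiplying by $\alpha z$ produces a series whose $n$-th term carries a factor $\alpha n/\Gamma(-\alpha n) = 1/\Gamma(-\alpha n - 1)$ after using the functional equation $\Gamma(w+1) = w\,\Gamma(w)$ with $w = -\alpha n - 1$; that is exactly the coefficient pattern of $W_{-\alpha,-1}(-z)$. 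So $\alpha z\,F_\alpha'(z)$ equals $-W_{-\alpha,-1}(-z)$ up to an additive multiple of $F_\alpha(z)$ coming from the $n=0$ bookkeeping, which rearranges precisely into $\alpha z\,F_\alpha'(z) = -W_{-\alpha,-1}(-z) - W_{-\alpha,0}(-z)$, i.e. the recurrence \eqref{eq:mu-numerator-recur-rel} already used in the proof of \eqref{eq:mu-sol-adv}. Dividing through by $W_{-\alpha,0}(-z) = F_\alpha(z)$ and solving for the ratio yields \eqref{eq:mu-wright-ratio-f}.

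For the second line, I would substitute $F_\alpha(z) = \alpha z\,M_\alpha(z)$ into the formula just obtained. Then $\frac{d}{dz}F_\alpha(z) = \alpha M_\alpha(z) + \alpha z\,M_\alpha'(z)$, so
\begin{align*}
\frac{\alpha z\,F_\alpha'(z)}{F_\alpha(z)}
    &= \frac{\alpha z\,\bigl(\alpha M_\alpha(z) + \alpha z\,M_\alpha'(z)\bigr)}{\alpha z\,M_\alpha(z)}
     = \alpha + \frac{\alpha z\,M_\alpha'(z)}{M_\alpha(z)}.
\end{align*}
Adding the leading $1$ from \eqref{eq:mu-wright-ratio-f} gives $(\alpha+1) + \alpha z\,M_\alpha'(z)/M_\alpha(z)$, which is \eqref{eq:mu-wright-ratio-m}. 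This step is purely algebraic once the first identity is in hand.

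I expect the only delicate point to be the careful handling of the $n=0$ term and the index shift in the term-by-term differentiation: the series for $W_{-\alpha,0}(-z)$ has a constant term $1/\Gamma(0)$, and the gamma-function reflection $\alpha n/\Gamma(-\alpha n) = 1/\Gamma(-\alpha n - 1)$ must be applied with attention to the values where $\Gamma$ has poles (so that the corresponding terms vanish consistently on both sides). Since the paper has already invoked the recurrence \eqref{eq:mu-numerator-recur-rel} — obtained by specializing the general Wright recurrences \eqref{eq:wright-recurrence} and \eqref{eq:wright-recurrence-diff} with $\lambda = -\alpha,\ \mu = 0$ — the cleanest route is actually to \emph{assume} that recurrence rather than re-derive it, and simply rewrite $Q_\alpha(z) = -W_{-\alpha,-1}(-z)/W_{-\alpha,0}(-z)$ by solving \eqref{eq:mu-numerator-recur-rel} for $W_{-\alpha,-1}(-z)$. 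That reduces the whole lemma to two short substitutions, and the ``main obstacle'' is really just presentational bookkeeping rather than any genuine analytic difficulty; convergence is not an issue because $F_\alpha$ and $M_\alpha$ are entire for $\alpha \in (0,1)$, so term-by-term differentiation is justified throughout.
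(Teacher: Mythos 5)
Your proposal is correct and follows essentially the same route as the paper: both reduce the first identity to the recurrence \eqref{eq:mu-numerator-recur-rel} (solving it for \(W_{-\alpha,-1}(-z)\) and dividing by \(F_\alpha(z)\)), and both obtain the second line by the purely algebraic substitution \(F_\alpha(z)=\alpha z\,M_\alpha(z)\). The only blemish is in your optional series re-derivation of the recurrence, where the intermediate claim \(\alpha n/\Gamma(-\alpha n)=1/\Gamma(-\alpha n-1)\) is not an equality (the functional equation gives \(\alpha n/\Gamma(-\alpha n)=-1/\Gamma(-\alpha n-1)-1/\Gamma(-\alpha n)\), which is where the extra \(-F_\alpha\) term actually comes from, not from the \(n=0\) bookkeeping) --- but since you land on the correct recurrence and in any case recommend assuming it as the paper does, the argument stands.
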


\begin{proof}
Note that \eqref{eq:mu-numerator-recur-rel} is the same as 
\begin{align*}
W_{-\alpha,-1}(-z) &=
    -\left( \alpha z\frac{d}{dz} + 1 \right) W_{-\alpha,0}(-z) 
\end{align*}
Divide both sides by \(F_\alpha(z) = W_{-\alpha,0}(-z) \), 
we arrive at \eqref{eq:mu-wright-ratio-f}.

To move from \eqref{eq:mu-wright-ratio-f} to \eqref{eq:mu-wright-ratio-m},
simply replace \(F_\alpha(z)\) by \(\alpha z \,M_\alpha(z) \), and expand
the derivative into it.

\end{proof}

To validate \(\mu(x)\), first we note that, by setting \(\sigma=1, d=1, p=\alpha\), 
it is reduced to that of a SC. 
Secondly, by setting \(\sigma=1/\sqrt{2}, d=1, p/\alpha=2\)
and \(\alpha\) replaced with \(\alpha/2\), 
it is reduced to that of a SV. 
Both are known solutions on the Wikipedia page\cite{Wikipedia_StableCountDistribution}.

\textbf{Alternatively}, when the closed form of \(\scN_\alpha(z; \sigma=1, d', p=1)\) 
in \eqref{eq:mu-sol-replaced} is known, 
it is better to use the log approach to obtain \(\mu(x)\):

\begin{align}
\mu(x) &= 
    \frac{pz}{2} \, \frac{d}{dz} \left[ \log \scN_\alpha(z; \sigma=1, d', p=1) \right] + \frac{1}{2}
\end{align} 

For instance, when \(\alpha = 1/2\), \(\scN_\alpha(z; \sigma=1, d', p=1) =
\frac{C}{2 \sqrt{\pi}} \, z^{d'} \, e^{-z^2/4}\).
We have the closed-form solution:
\begin{align}\label{eq:mu-sol-poly}
\mu_{\alpha = \frac{1}{2}}\left(x\right) &= 
    -\frac{p}{4} {\left( \frac{x}{\sigma} \right)}^{2p} + \frac{d + p}{2}  
\end{align}
which is reduced to the known results for SC and SV:
\begin{align}
\mu_{\alpha = \frac{1}{2}}\left(x\right) &= 
    \left\{ 
    \begin{array}{ll}
        \frac{1}{8} (6-x) , &\quad \text{for} \, \text{SC};
    \\ 
        1 -\frac{x^2}{2}  , &\quad \text{for} \, \text{SV}.
    \end{array}
    \right.
\end{align}
Both agree with what have been derived on the Wikipedia page\cite{Wikipedia_StableCountDistribution}.

\subsection{Generation of Random Variables for FCM}\label{section-fcm-rv}

Obviously, what really matters for GSaS is the solution of FCM, denoted as \(\mu_{\alpha, k}(x)\).
Note that from this point on, \(\alpha \in (0,2)\).

To further simplify the symbology for FCM, define
\begin{align} \notag
Q_{\alpha}^{(\chi)}(z)  
    &= Q_{\frac{\alpha}{2}}\left( z^{\alpha} \right)
\end{align}
Assume \(k > 0\), we set \(\sigma=\sigma_{\alpha,k}, d=k-1, p/\alpha=2\)
and \(\alpha\) replaced with \(\alpha/2\) in \eqref{eq:mu-sol-adv}. 
We get
\begin{align} \label{eq:mu-fcm-gsas}
\mu_{\alpha, k}(x)  
    &= Q_{\alpha}^{(\chi)}\left( \frac{x}{\sigma_{\alpha,k}} \right)
        +
        \left( \frac{k-3}{2} \right) 
\end{align}
For validation, \(\mu_{1,k}\left(x\right) = k (1-x^2) / 2\) can be used to simulate Student's t. 
And \(\mu_{\alpha, 1}(x)\) provides a method to simulate an SaS \(L_{\alpha,1}(x)\):
\begin{align*} 
\mu_{\alpha, 1}(x)  &= 
         Q_{\alpha}^{(\chi)}\left( \sqrt{2} x \right) -1
\end{align*}

Fig. \ref{fig:sp500-sim} shows a simulation of random variables based on the \((\alpha,k)\) parameters 
obtained from the fit of the S\&P 500 daily log returns.
The rest of parameters are in the caption of the figure. 
First, as outlined above, \(\mu_{\alpha, k}(s)\) is calculated analytically, as shown in the right chart.
Second, it enables the GSC simulation \(\{S_t\}\) as shown in the left chart.
Third, GSaS \(\{X_t\}\) is simulated via 
\( X_t = \mcN / S_t \), where \(\mcN\) is drawn from a standard normal variable.

Simulation is performed daily. Sampling duration is 200,000 years. 
The red areas are the histograms from simulated data. 
The blue lines are from theoretical formulas. They match nicely.

\begin{figure}[htp]
    \centering
    \includegraphics[width=6in]{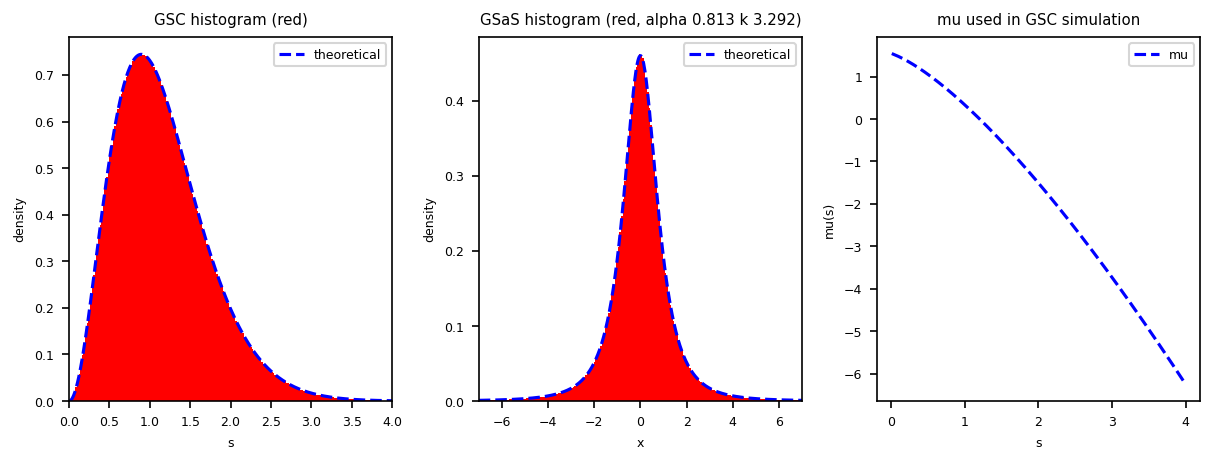}
    \caption{Simulation of random variables based on the \((\alpha,k)\) parameters 
    obtained from the fit of the S\&P 500 daily log returns. 
    The red areas are the histograms from simulated data. 
    The blue lines are from theoretical formulas.
    The settings of the simulation are \(\alpha = 0.813, k = 3.292, dt=1/365, \sigma_u = 0.85\). 
    Sampling duration is 200,000 years. The simulation takes 11 minutes in python.
    \(\mu_{\alpha, k}(s)\) is discretized to 0.01 and cached to increase performance. 
    }
    \label{fig:sp500-sim}
\end{figure}

\subsection{Generation of Random Variables for Inverse FCM}\label{section-fcm-inv-rv}

The simulation of GEP is complicated by two options: ratio vs product.
If we go with the ratio distribution, we have to deal 
evaluating \(F_\alpha(z)\) at very large \(z\),
which is technically more difficult.

To simulate \(\gepDist\), we set \(\sigma=1/\sigma_{\alpha,k}, d = -k, p/\alpha=-2\)
and \(\alpha\) replaced with \(\alpha/2\) in \eqref{eq:mu-sol-adv}, we get
\begin{align} \label{eq:mu-fcm-gep}
\mu_{\alpha, -k}(x)  &= 
    - Q_{\alpha}^{(\chi)} \left( {\left( x \,\sigma_{\alpha,k} \right)}^{-1} \right)
    +
    \left( 1 - \frac{k}{2} \right)
    && (k > 0)
\end{align}
The \(x^{-1}\) term is the added complication.

To avoid such complication, it is more straightforward to take the product distribution route, 
that is, \( X_t = S_t \, \mcN \). And \(S_t\) is generated from an inverse FCM:
\begin{align} \label{eq:mu-fcm-inverse}
\mu_{\alpha, -k}^\dagger(x)  &= 
    Q_{\alpha}^{(\chi)} \left( \frac{x} {\sigma_{\alpha,k}} \right)
    +
    \left( \frac{k}{2}-1 \right)
    && (k > 0)
\end{align}

And \(\mu_{\alpha, -1}^\dagger(x)\) provides a method to simulate 
an exponential power distribution \(\mathcal{E}_{\alpha}\):
\begin{align*} 
\mu_{\alpha, -1}^\dagger(x)  &= 
    Q_{\alpha}^{(\chi)} \left( \sqrt{2} x \right)
    - \frac{1}{2}
\end{align*}
As is expected, this is identical to \(\mu(x)\) of SV.

The simplest validation is \(\mu_{1, -1}^\dagger(x) = 1-\frac{x^2}{2}\) which can simulate 
an exponential distribution \(\mathcal{E}_{1}\).
On the other hand, to simulate \(\mathcal{E}_{1}\) by a ratio distribution,
the polynomial solution is \(\mu_{1, -1}(x) = 1/(2 x^2) - 1\).
The divergence at \(x=0\) is a very different behavior.

In summary, \(\mu_{\alpha, k}(x)\) is an \emph{amazing function} that generates
not only \(\chimeanDist\) but also \(\gsasDist\). 
In some cases, they are just simple polynomials. This is very impressive.


\section{Acknowledgement}

I would like to thank Professor John Mulvey at Princeton University for his guidance and unwavering
support. It took many years and several stages of advancements to get to this point 
since we first discussed the heavy-tail distribution in 2017. 
Thank Professor Francesco Mainardi at the University of Bologna,
although I have not met him personally, for his decades-long work on the Wright function.
Thank Professor Petter Kolm at New York University for introducing
the concept of multivariate elliptical distribution.

My wonderful son, Zachary A. Lihn\footnote{
Zachary A. Lihn. Department of Mathematics, Columbia University, New York, NY 10027, USA.
\href{mailto:zal2111@columbia.edu}{zal2111@columbia.edu}
}
has provided valuable technical support for writing this paper.
I also thank him for several stimulating discussions.


\appendix

\section{GSC Moment Generation Function and Implementation}
\label{section:gsc-rv}

\subsection{From GSC MGF to GSaS series representation}\label{section-gsc-mgf}

Since we have the analytic solution of GSC moments \(\E(X^n)\) from \eqref{eq:gsc-moment}, 
the moment generating function (MGF) can be defined from the sum of its moments:
\begin{align*}
M_{\text{\tiny GSC}}(t)
    &:= \E\left( e^{tX} \right) = \sum_{n=0}^{\infty} \frac{\E(X^n)}{n!}
\\
    &=
    \frac{\Gamma(\frac{d}{p} \alpha)} {\Gamma(\frac{d}{p})} \,
    \sum_{n=0}^{\infty} 
    {(\sigma t)}^n \, 
    \frac{\Gamma(\frac{n+d}{p})} { \Gamma(n+1) \Gamma(\frac{n+d}{p} \alpha)}
\\
    &=
    \frac{\Gamma(\frac{d}{p} \alpha)} {\Gamma(\frac{d}{p})} \,
    W \left[ 
        \begin{matrix}
        \frac{1}{p},& \frac{d}{p}
        \\
        \frac{\alpha}{p},& \frac{\alpha d}{p}
        \end{matrix}
    \right](\sigma t)
\end{align*}
Here we utilize the four-parameter Wright function defined in \eqref{eq:wright-fn-4ways}.
The characteristic function of GSC is simply \(\phi_{\text{\tiny GSC}}(z) = M_{\text{\tiny GSC}}(iz)\).


We would like to present an extraordinary approach to derive the small \(x\) 
series representation of GSaS PDF using the MGF of GSC.
The MGF formula above can be transformed to a ratio distribution with a normal variable, 
which resembles how GSaS is constructed in \eqref{eq:main-gsas-def}.

We show that GSC has an inherent closeness that can take
a Laplace transform-like integral to a Gaussian mixture.

\begin{lemma}
The series representation of GSaS PDF in terms of  \(x\) is
\begin{align} \label{eq:gsas-pdf-x-series}
\gsasDist(x) &=
    \frac{1}{\gsasSigma \,\sqrt{\pi}}
    \frac{\Gamma(\frac{k-1}{2})} {\Gamma(\frac{k-1}{\alpha})} \,
\begingroup
\renewcommand*{\arraystretch}{1.2}
    W \left[ 
        \begin{matrix}
        \frac{2}{\alpha},& \frac{k}{\alpha}
        \\
        1,& \frac{k}{2}
        \end{matrix}
    \right] 
\endgroup 
    \left(-\left(\frac{x}{\gsasSigma}\right)^2 \right)
\end{align}
where \(W[...](x)\) is the four-parameter Wright function \eqref{eq:wright-fn-4ways}.

\end{lemma}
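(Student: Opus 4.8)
The plan is to mimic the derivation of the $1/x$ series in \eqref{eq:gsas-pdf-series}, but now substitute the \emph{ascending} (small-argument) series of the M-Wright function $W_{-\alpha/2,0}(-z)$ into the ratio-distribution integral \eqref{eq:main-gsas-def}, integrating term by term against the moments of the normal distribution. Since the Wright function $W_{-\alpha/2,0}(-z) = \sum_{n\ge 0} (-z)^n / (n!\,\Gamma(-\tfrac{\alpha}{2}n))$ is entire in $z$, the interchange of sum and integral will be legitimate for each fixed $x$, so the only real work is bookkeeping with gamma functions.

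First I would start from the FCM PDF in the form given at the end of \eqref{eq:frac-chi-mean},
\begin{align*}
\chimeanDist(s) &= S_{\alpha,k}\,
 \left(\frac{s}{\fcmSigma}\right)^{k-1}
 W_{-\frac{\alpha}{2},0}\!\left(-\left(\frac{s}{\fcmSigma}\right)^{\alpha}\right),
 \qquad S_{\alpha,k} = \frac{\alpha\,\Gamma(\frac{k-1}{2})}{\Gamma(\frac{k-1}{\alpha})},
\end{align*}
plug it into $\gsasDist(x) = \int_0^\infty s\,ds\,\mcN(xs)\,\chimeanDist(s)$, and rescale $z = s/\fcmSigma$ to absorb $\fcmSigma$. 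Then I would expand the Wright function in its power series and integrate each term: the $n$-th term contributes $\int_0^\infty z^{\alpha n + k}\,\mcN(xz\fcmSigma)\,dz$, which is just a moment of a half-normal and evaluates to a constant times $\Gamma\!\left(\frac{\alpha n + k +1}{2}\right)\big/ (x\fcmSigma)^{\alpha n + k +1}$ by \eqref{eq:HN-moments}. Collecting the $x$-dependence, this produces a series in powers of $x^{\alpha}$; after re-indexing (using $\gsasSigma = \sqrt{2}/\fcmSigma$ to clean up the powers of $2$ and $\sqrt{2}$), one should land on a series whose general term is $\Gamma\!\left(\frac{\alpha n + k}{2} \text{-ish}\right)$ over $\Gamma(n+1)\Gamma(-\tfrac{\alpha}{2}n)$, i.e.\ exactly the shape of a four-parameter Wright function.

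The step I expect to require the most care is the final re-indexing and matching of parameters so that the collected series is literally $W\!\left[\begin{smallmatrix}2/\alpha,& k/\alpha\\ 1,& k/2\end{smallmatrix}\right]\!\left(-(x/\gsasSigma)^2\right)$ rather than a series in $x^\alpha$. The subtlety: the raw expansion naturally lives in powers of $x^\alpha$ (one power per Wright-series index $n$), but the claimed answer is a series in $x^2$. The resolution must be a change of summation variable that rescales the index — reading the definition \eqref{eq:wright-fn-4ways} of the four-parameter Wright function $W\!\left[\begin{smallmatrix}a,&b\\\lambda,&\mu\end{smallmatrix}\right](z) = \sum_m z^m \Gamma(am+b)/(m!\,\Gamma(\lambda m + \mu))$ with $\lambda = 1$, $\mu = k/2$, $a = 2/\alpha$, $b = k/\alpha$, one sees the $m$-th term carries $\Gamma(\frac{2m+k}{\alpha})/(m!\,\Gamma(m+\frac{k}{2}))$; I would need to check that the Gaussian-moment gamma $\Gamma(\frac{\alpha n + k + 1}{2})$ combined with the $1/\Gamma(-\frac{\alpha}{2}n)$ from the Wright series can be rewritten in this form, which should follow from the Legendre duplication formula applied to $\Gamma(\frac{\alpha n+k+1}{2})$ and a reflection-formula manipulation of $1/\Gamma(-\frac{\alpha}{2}n)$. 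Once those two gamma identities are in hand, the normalization constant $\frac{1}{\gsasSigma\sqrt\pi}\frac{\Gamma((k-1)/2)}{\Gamma((k-1)/\alpha)}$ falls out by tracking the $n=0$ term, and the proof closes. A useful sanity check along the way is to set $\alpha = 1$ and confirm the series reduces to the known $x$-expansion of Student's $t_k$, and to set $k = 1$ to recover the SaS case.
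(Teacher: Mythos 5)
There is a genuine gap, and it sits exactly where you flagged the ``most care'' step. Expanding the Wright function inside \(\chimeanDist(s)\) and integrating term by term against \(\mcN(xs)\) produces terms
\(\int_0^\infty z^{\alpha n + k}\,\mcN(xz\,\fcmSigma)\,dz \propto \Gamma\!\left(\tfrac{\alpha n + k+1}{2}\right) x^{-(\alpha n + k + 1)}\),
i.e.\ a series in \emph{descending} powers \(x^{-\alpha n - k}\). That is precisely the derivation of the large-\(x\) tail series \eqref{eq:gsas-pdf-series} already carried out in Section \ref{section:gsas-tail}; it is not the ascending series \eqref{eq:gsas-pdf-x-series}, whose terms are \(x^{2m}\). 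Your proposed rescue --- ``a change of summation variable that rescales the index'' together with duplication and reflection formulas --- cannot work: for generic (irrational) \(\alpha\) the exponent sets \(\{-\alpha n - k - 1 : n \ge 0\}\) and \(\{2m : m \ge 0\}\) are disjoint, so no re-indexing or gamma-function identity turns one power series into the other. The two expansions are genuinely different representations of the same function, one valid near \(x=0\) and one near \(x=\infty\), exactly as with Feller's two series for the stable density itself.

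The missing idea is to expand the \emph{other} factor: write
\(\mcN(xs) = \tfrac{1}{\sqrt{2\pi}}\sum_{m\ge 0} \tfrac{(-x^2 s^2/2)^m}{m!}\)
and integrate term by term against \(\chimeanDist(s)\,s\,ds\). Each term is then the positive moment \(\E(s^{2m+1}\,|\,\chimeanDist)\), which by \eqref{eq:fcm-moment} carries the ratio \(\Gamma(\tfrac{2m+k}{\alpha})/\Gamma(\tfrac{2m+k}{2}) = \Gamma(\tfrac{2}{\alpha}m+\tfrac{k}{\alpha})/\Gamma(m+\tfrac{k}{2})\), which is exactly the \(m\)-th coefficient of \(W\!\left[\begin{smallmatrix}2/\alpha,& k/\alpha\\ 1,& k/2\end{smallmatrix}\right]\) evaluated at \(-(x/\gsasSigma)^2\); the prefactor then follows from the \(m=0\) term. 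This is what the paper does, packaged differently: it evaluates the MGF of GSC at \(t \mapsto -\sigma t^2/2\) (which is the same Gaussian expansion summed against the GSC moments) and uses a closure property of GSC under the substitution \(x = s^2/\sigma\) to identify the resulting mixing density with \(\chimeanDist\). Your sanity checks at \(\alpha=1\) and \(k=1\) are sensible but do not repair the derivation.
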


\begin{proof}
Equate the series form of MGF with its integral form,  
\begin{align*}
M_{\text{\tiny GSC}}(t)
    &= 
    \int_{0}^{\infty} dx \, e^{tx} \, \gscNx
\\
    &=
    \frac{\Gamma(\frac{d}{p} \alpha)} {\Gamma(\frac{d}{p})} \,
    W \left[ 
        \begin{matrix}
        \frac{1}{p},& \frac{d}{p}
        \\
        \frac{\alpha}{p},& \frac{\alpha d}{p}
        \end{matrix}
    \right](\sigma t)
\end{align*}

Replace \(t\) with \(-\sigma t^2/2\) and \(x = s^2 / \sigma\), we get the Gaussian mixture: 
\begin{align*}
M_{\text{\tiny GSC}}\left(-\frac{\sigma t^2}{2} \right)
    &=
    \int_{0}^{\infty} \, s \, ds \, e^{-(ts)^2/2} \, 
        \left[ \frac{2}{\sigma} \,\scN_\alpha(s^2 / \sigma; \sigma, d, p) \right]
\\
    &=
    \frac{\Gamma(\frac{d}{p} \alpha)} {\Gamma(\frac{d}{p})} \,
    W \left[ 
        \begin{matrix}
        \frac{1}{p},& \frac{d}{p}
        \\
        \frac{\alpha}{p},& \frac{\alpha d}{p}
        \end{matrix}
    \right] \left(-\frac{(\sigma t)^2}{2} \right)
\end{align*}

Note that GSC is close under such transform:
\begin{align*}
\scN_\alpha(s^2 / \sigma; \sigma, d, p) 
    &= \left[
        \frac{|p|}{\sigma} 
        \frac{\Gamma(\frac{\alpha d}{p})} {\Gamma(\frac{d}{p})} 
    \right] \left[ 
        \frac{\sigma}{|2p|}
        \frac{\Gamma(\frac{2d-1}{2p})} {\Gamma(\frac{\alpha (2d-1)}{2p})}
    \right]
    \scN_\alpha(s; \sigma, 2d-1, 2p)
\\
    &= \left[
        \frac{1}{2} 
        \frac{\Gamma(\frac{\alpha d}{p})} {\Gamma(\frac{d}{p})} 
        \frac{\Gamma(\frac{d-1/2}{p})} {\Gamma(\frac{\alpha (d-1/2)}{p})}
    \right]
    \scN_\alpha(s; \sigma, 2d-1, 2p)
\end{align*}

Now we can replace \(t\) with \(x\), and define the new PDF as:
\begin{align*}
f(x) &=
    \int_{0}^{\infty} \, s \, ds \, \mcN(xs) \, 
        \scN_\alpha(s; \sigma, 2d-1, 2p)
\\
    &=
    \frac{\sigma}{\sqrt{2\pi}}
    \frac{\Gamma(\frac{\alpha (d-1/2)}{p})} {\Gamma(\frac{d-1/2}{p})} \,
    W \left[ 
        \begin{matrix}
        \frac{1}{p},& \frac{d}{p}
        \\
        \frac{\alpha}{p},& \frac{\alpha d}{p}
        \end{matrix}
    \right] \left(-\frac{(\sigma x)^2}{2} \right)
\end{align*}

We are in a good place to derive the series representation of GSaS in terms of \(x\),
which is skipped in Section \ref{section:gsas-tail}.
But it takes some tongue-twisting substitutions to accomplish.

To create \(\chimeanDist\) \((k > 0)\) in the integrand of \(f(x)\) above, 
both \(\alpha\) and \(p\) become \(\alpha/2\), thus \(\alpha/p = 1\).
And \(d = k/2, \sigma = \sigma_{\alpha,k} = \sqrt{2}/\gsasSigma\) from \eqref{eq:gsas-pdf-symbols}. 
With these substitutions, 
\(\scN_\alpha(s; \sigma, 2d-1, 2p)\) becomes \(\chimeanDist(s)\); and
\(f(x)\) is transformed to \(\gsasDist(x)\), 
which is \eqref{eq:gsas-pdf-x-series}.
\end{proof}

To validate, note that \(W[...](0) = \Gamma(\frac{k}{\alpha}) / \Gamma(\frac{k}{2})\).
It is straightforward to see that 
\(L_{\alpha,k}(0)\) in \eqref{eq:gsas-pdf-x-series}
is the same as \(L_{\alpha,k}(0)\) in \eqref{eq:gsas-peak-pdf}.

\subsection{PDF Implementation of GSC and Large-x Asymptotic}\label{section-gsc-impl}

The PDF of GSC could be converted to the PDF of SC, which is in turn implemented by the \texttt{scipy.stats.levy{\textunderscore}stable} package.
The direct computation on the Wright function can be safely bypassed. We have
\begin{align*}
\gscNx &= 
        C \, {\left( \frac{x}{\sigma} \right)}^{d-1}
        \, W_{-\alpha,0} \left( -{\nu}^{\alpha} \right) 
        \, , \where \,
        \nu = {\left( \frac{x}{\sigma} \right)}^{p/\alpha} 
\\ &= 
        C \,\Gamma\left( \frac{1}{\alpha}+1 \right) 
        \, {\left( \frac{x}{\sigma} \right)}^{d-1}
        \, \scN_\alpha \left(  {\left( \frac{x}{\sigma} \right)}^{p/\alpha}  \right) 
\end{align*}
which is due to \eqref{eq:sc}.

A naive numeric implementation has its limitation, especially
it can lose numerical precision quickly
when \(d\) is large (e.g. \(d > 60\)).
Since we often want to check the behavior of certain scenarios 
involving \(d \to \infty\), it is highly desirable the implementation
can handle such scenario. 

For the asymptotic at large \(x\), especially when \(d\) is also large, 
\(\scN_\alpha(...)\) could be replaced with its asymptotic 
from Section 4.4.2 of \cite{Lihn:2020}:
\begin{align*}
\scN_\alpha(\nu) &\approx 
    B(\alpha) \, \nu^{\frac{\alpha}{2(1-\alpha)}} 
    e^{-A(\alpha) \, \nu^{\frac{\alpha}{1-\alpha}}}
    , \quad \nu \gg 1
\\ & \where \,
    A(\alpha) = (1-\alpha) \, \alpha^{\frac{\alpha}{1-\alpha}},
    B(\alpha) = 
        \frac{B'(\alpha)}{\Gamma\left( \frac{1}{\alpha}+1 \right)},
    B'(\alpha) =
        \frac{\alpha^{\frac{1}{2(1-\alpha)}}}{\sqrt{2\pi (1-\alpha)}}  
\end{align*}
However, this formula doesn't work for small \(\alpha\) because of another competing limit 
per Section \ref{section:mapping-gg-0}.
This is embedded in the fact that \(A(\alpha)\) and \(B(\alpha)\) are undefined at \(\alpha=0\).

Therefore, as long as \(\alpha\) is not too small, GSC at large \(x\) becomes a GG-variant:
\begin{align}
\gscNx &\approx \notag 
        B(\alpha) C  \,\Gamma\left( \frac{1}{\alpha}+1 \right) 
        \, {\left( \frac{x}{\sigma} \right)}^{d-1}
        \, \nu^{\frac{\alpha}{2(1-\alpha)}} 
        \, e^{-A(\alpha) \, \nu^{\frac{\alpha}{1-\alpha}}}
        , \quad \nu = {\left( \frac{x}{\sigma} \right)}^{p/\alpha}
\\ &= \label{eq:gsc-asymp}
        B'(\alpha) C \,  
        \, {\left( \frac{x}{\sigma} \right)}^{d + \frac{p}{2(1-\alpha)} -1}
        \, \exp\left({-A(\alpha) \, {\left( \frac{x}{\sigma} \right)}^{\frac{p}{1-\alpha}}}\right)
\end{align}

\subsection{Large-x Asymptotic of FCM}\label{section-fcm-asymp}

Large-x Asymptotic of FCM follows naturally from \eqref{eq:main-chimean-pdf} and \eqref{eq:gsc-asymp}.
For an FCM,
\begin{align*}
C &= 
    \frac{\alpha}{ \fcmSigma } 
    \frac{\Gamma(\frac{k-1}{2})}{\Gamma(\frac{k-1}{\alpha})}
    , \quad \where
    \fcmSigma := 
        \frac{{|k|}^{1/2 - 1/\alpha}}{\sqrt{2}}  
\end{align*}

Therefore, at large \(x\), an FCM of \(k > 0\) becomes 
\begin{align}
\chimean{x} &\approx \label{eq:fcm-asymp}
        B'\left(\frac{\alpha}{2}\right) C \,
        \, {\left( \frac{x}{\fcmSigma} \right)}^{k + \frac{\alpha}{2-\alpha} -2}
        \, \exp\left(
            -A\left(\frac{\alpha}{2}\right) \, 
            {\left( \frac{x}{\fcmSigma} \right)}^{\frac{2\alpha}{2-\alpha}}
        \right)
\end{align}
This formula provides insight to the tail behavior of the FCM.

As a validation, the asymptotic formula becomes exact at \(\alpha=1\),
where \(\chimeanDist\) becomes \(\chi_k/\sqrt{k}\).
\section{Subordination Results}\label{section:subordination}

\begin{lemma}\label{lemma-sub-pdf-cf}
Assume \(f(x), g(x), \scN(x)\) are the PDFs from three distributions. 
\(f(x), g(x)\) are two-sided; and \(\scN(x)\) is one-sided. 
They form the following pair of transform:

\begin{align}\label{eq:sub-pdf-cf-pair}
f(x)
    &= \int_{0}^{\infty} s \, ds \,
      g(xs) \, \scN(s)
    && (x \in \mathbb{R})
\\ \label{eq:sub-pdf-cf-pair2}
\text{CF}\{f\}(\zeta)
    &= \int_{0}^{\infty} \, ds \,
      \left[ \text{CF}\{g\}\left(\frac{\zeta}{s}\right) \right]
      \, \scN(s)
    && (\zeta \in \mathbb{R})
\\ \label{eq:sub-pdf-cf-pair2a}
    &= \int_{0}^{\infty} s \, ds \,
      \left[ \text{CF}\{g\}\left(\zeta s\right) \right]
      \scN_{\phi}(s),
    && \where 
      \scN_{\phi}(s) := s^{-3} \, \scN\left(\frac{1}{s}\right)
\end{align}
This is an extended form of Lemma \ref{lemma-mix-cf-dist}.

\end{lemma}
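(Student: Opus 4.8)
The plan is to compute $\text{CF}\{f\}(\zeta)$ straight from \eqref{eq:sub-pdf-cf-pair}, interchange the two integrals, and collapse the $x$-integral by a scaling substitution, mirroring the argument of Lemma \ref{lemma-mix-cf-dist}. Concretely, I would write
\begin{align*}
\text{CF}\{f\}(\zeta)
    &= \int_{-\infty}^{\infty} dx \, e^{i\zeta x}
       \int_{0}^{\infty} s \, ds \, g(xs) \, \scN(s) .
\end{align*}
Because $|e^{i\zeta x}| = 1$ and $\int_{-\infty}^{\infty} |g(xs)|\,dx = 1/s$ for each fixed $s > 0$ (with $g$ a PDF), the double integral of the absolute value equals $\int_{0}^{\infty} s \cdot (1/s)\,\scN(s)\,ds = 1 < \infty$, so Fubini's theorem permits swapping the order. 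Evaluating the inner $x$-integral via $u = xs$ (so $dx = du/s$) gives $\int_{-\infty}^{\infty} dx\, e^{i\zeta x} g(xs) = \tfrac{1}{s}\,\text{CF}\{g\}(\zeta/s)$; the $1/s$ cancels the $s\,ds$, producing
\begin{align*}
\text{CF}\{f\}(\zeta)
    &= \int_{0}^{\infty} ds \, \text{CF}\{g\}\!\left(\frac{\zeta}{s}\right) \scN(s),
\end{align*}
which is \eqref{eq:sub-pdf-cf-pair2}.

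To reach \eqref{eq:sub-pdf-cf-pair2a} I would apply the inverse-distribution change of variable $t = 1/s$ of Lemma \ref{lemma-mix--ratio-vs-prod}, i.e. $ds = -dt/t^2$ and $\text{CF}\{g\}(\zeta/s) = \text{CF}\{g\}(\zeta t)$:
\begin{align*}
\int_{0}^{\infty} ds \, \text{CF}\{g\}\!\left(\frac{\zeta}{s}\right) \scN(s)
    &= \int_{0}^{\infty} \frac{dt}{t^{2}} \, \text{CF}\{g\}(\zeta t) \, \scN\!\left(\frac{1}{t}\right)
     = \int_{0}^{\infty} t \, dt \, \text{CF}\{g\}(\zeta t) \, \big( t^{-3}\scN(t^{-1}) \big),
\end{align*}
which identifies the kernel as $\scN_{\phi}(t) = t^{-3}\scN(1/t)$, exactly as claimed. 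This step is just Lemma \ref{lemma-mix--ratio-vs-prod} read with $\mcN$ replaced by $\text{CF}\{g\}$, which is why the present statement is billed as an extension of Lemma \ref{lemma-mix-cf-dist}.

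The only genuine subtlety is the Fubini justification in the first step; everything after that is bookkeeping with changes of variables, and the scaling/inversion substitutions are routine. If one prefers to sidestep integrability assumptions on $g$, an alternative is purely probabilistic: \eqref{eq:sub-pdf-cf-pair} says $f$ is the density of $X/S$ with $X \sim g$ and $S \sim \scN$ independent, so $\text{CF}\{f\}(\zeta) = \E\big[e^{i\zeta X/S}\big] = \E\big[\text{CF}\{g\}(\zeta/S)\big]$ by conditioning on $S$ — legitimate since $\text{CF}\{g\}$ is bounded by $1$ — and writing out the law of $S$ (resp. of $1/S$) yields \eqref{eq:sub-pdf-cf-pair2} (resp. \eqref{eq:sub-pdf-cf-pair2a}) with no further hypotheses.
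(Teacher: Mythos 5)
Your proof is correct and follows essentially the same route as the paper: apply the CF transform to \eqref{eq:sub-pdf-cf-pair}, swap the integrals, substitute \(u = xs\) to get \(s^{-1}\,\text{CF}\{g\}(\zeta/s)\) for \eqref{eq:sub-pdf-cf-pair2}, and then substitute \(t = 1/s\) for \eqref{eq:sub-pdf-cf-pair2a}. The explicit Fubini justification and the probabilistic alternative are welcome extras the paper omits, but the core argument is the same.
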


\begin{proof}
For a given distribution \(f(x)\), its CF transform is
\( \text{CF}\{f\}(\zeta) := \int_{-\infty}^{\infty} dx  \, \exp(i\zeta x) \, f(x)\). 
And for \(g(x)\), its CF transform is
\( \text{CF}\{g\}(\zeta) := \int_{-\infty}^{\infty} dx  \, \exp(i\zeta x) \, g(x)\). 
Apply them to \eqref{eq:sub-pdf-cf-pair}, we get 
\begin{align} \label{eq:sub-pdf-f-integral}
\text{CF}\{f\}(\zeta)
    &= \int_{0}^{\infty} s \, ds \,
    \left( \int_{-\infty}^{\infty} dx  \, \exp(i\zeta x) \,  g(xs) \right)
      \, \scN(s)
\end{align}
Let \(u = xs\), the integral inside the parenthesis becomes
\begin{align*}
\int_{-\infty}^{\infty} dx  \, \exp\left( i\zeta x \right) \,  g(xs)
    &= s^{-1} \int_{-\infty}^{\infty} du  \, \exp\left(i \frac{\zeta u}{s} \right) \,  g(u)
\\ 
    &= s^{-1} \, \text{CF}\{g\}\left(\frac{\zeta}{s}\right)
\end{align*}
Put it back to \eqref{eq:sub-pdf-f-integral}, we arrive at \eqref{eq:sub-pdf-cf-pair2}.

To move from the product distribution form of \eqref{eq:sub-pdf-cf-pair2} 
to the ratio distribution form of \eqref{eq:sub-pdf-cf-pair2a}, let \(t = 1/s\),
then \(ds / s = - dt / t\). Furthermore, \(s \, ds = -t^{-3} \, dt\).
It is straightforward to obtain \eqref{eq:sub-pdf-cf-pair2a}.

\end{proof}

The continuous Gaussian mixture in Section \ref{section:mixture} is 
a special case where \(g(x) = \mcN(x)\). 
The CF of \(\mcN(x)\) is still itself: \(\text{CF}\{\mcN\}(\zeta) = \sqrt{2\pi} \, \mcN(\zeta)\).
Another case we encountered is the Laplace-Cauchy pair: 
\(\text{CF}\{L\}(\zeta) = \pi \,\text{Cauchy}(\zeta)\).

If \(g(x)\) and \(\text{CF}\{g\}(\zeta)\) are well understood functions, 
this lemma captures many relations
between \(f(x)\) and \(\scN(x)\), for instance, when \(\scN(x)\) is 
a high transcendental function such as GSC.

When \(\scN(x)\) is a GSC, it is easy to see that \(\scN_{\phi}(s)\) is also a GSC.
Therefore, GSC is closed under the CF transform: \(\scN(x) \to \scN_{\phi}(s)\).

We studied many of the product and ratio distributions and tabulated the results as following.

\subsection{Product Distribution Results}\label{section:product-dist}

Table \ref{tab:subord-product-dist} below illustrates how a classic distribution on the left can
be decomposed to a GSC on the right via the product distribution relation:
\begin{align}\label{eq:product-dist}
\text{Left}(x) 
    &:= \int_{0}^{\infty} 
       \frac{dt}{t} \, \text{Unit}(\frac{x}{t}) \, \text{Right}(t)
\end{align}

The discovery of Table \ref{tab:subord-product-dist} directly paved the way to 
generalize the stable count distribution.

\addtocounter{table}{-1}
\begin{table}[H]
    \smaller  
    \centering
    \renewcommand{\arraystretch}{1.5}
\begin{longtable}{|p{3cm}|p{3cm}|p{2.5cm}|l|l|l|l|}
\hline
\multicolumn{1}{|c|}{} & \multicolumn{1}{|c|}{} & \multicolumn{1}{|c|}{}
    & \multicolumn{4}{c|}{Right GSC Equiv.: \(\gscN{t}\)} \\
\cline{4-7}
\multicolumn{1}{|c|}{Left Dist. (PDF)} 
    & \multicolumn{1}{|c|}{Unit Dist.} 
    & \multicolumn{1}{|c|}{Right Dist.} 
    & \(\alpha\) & \(\sigma\) & \(d\) & \(p\) \\
\hline
Exp Pow: \(\mathcal{E}_\alpha(z)\) 
    & Laplace: \(L(\frac{\mid z \mid}{t})\)
    & SC: \(\scN_\alpha(t)\) 
    & \(\alpha\) & 1 & 1 & \(\alpha\) \\
Exp Pow: \(\mathcal{E}_\alpha(z)\) 
    & Normal: \(\mcN(z/t)\)  
    & SV: \(V_\alpha(t)\)
    & \(\alpha/2\) & \(\frac{1}{\sqrt{2}}\) & 1 & \(\alpha\) \\
Exp Pow: \(\mathcal{E}_\alpha(z)\) 
    & \(\mathcal{E}_k(z/t)\), any \(k\in \mathbb{Z}\) 
    & GSC & \({\alpha}/{k}\) & 1 & 1 & \(\alpha\) \\
Weibull: \(\text{Wb}(x;k)\) & \(L(\frac{x}{t})\) aka \(\text{Wb}(\frac{x}{t});1)\) 
    & GSC: \(t^{-1} \scN_k(t)\)
    & \(k\) & 1 & 0 & \(k\) \\
Weibull: $\text{Wb}(x;k) $ & \(\text{Rayleigh}(\frac{x}{t})\) 
        \newline aka \(\text{Wb}(\frac{x}{t});2)\) 
    & \(\chi_{k,1}\): \(t^{-1} V_k(t)\) 
    & \({k}/{2}\) & \(\frac{1}{\sqrt{2}}\) & 0 & \(k\) \\
Gamma: \(\Gamma(x;s)\) 
    & \(\text{Wb}(\frac{x}{t};s)\) 
    & GSC & \(1/s\) & 1 & \(s\) & 1 \\
Poisson: \(f(k;\lambda)\) \newline\(= \Gamma(\lambda;s=\lfloor k+1 \rfloor)\) 
    & \(\text{Wb}(\frac{\lambda}{t};k+1)\) 
    & GSC & \(1/(k+1)\) & 1 & \(k+1\) & 1 \\
\(\chi_k^2\): \(\Gamma(\frac{x}{2};\frac{k}{2})\) &
    $\text{Wb}(\frac{x}{t};\frac{k}{2}) $ 
    & GSC & \(2/k\) & 2 & \(k/2\) & 1 \\
\(\chi_k\): \(\Gamma(\frac{x^2}{2}; \frac{k}{2})\) &
    $\text{Wb}(\frac{x}{t};k) $ 
    & GSC & \(2/k\) & \(\sqrt{2}\) & \(k\) & 2 \\
\(\text{GenGamma}(\frac{x}{\sigma}; s,c)\) &
    $\text{Wb}(\frac{x}{t};sc) $ 
    &GSC  & \(1/s\) & \(\sigma\) & \(sc\) & \(c\) \\
\hline
\end{longtable}
\caption{\label{tab:subord-product-dist}Subordination by a product distribution to GSC.
}
\end{table}

\subsection{Ratio Distribution Results}\label{section:ratio-dist}

Table \ref{tab:subord-ratio-dist} below illustrates how a Student's t and/or SaS distribution on the left can
be decomposed to a GSC on the right via a ratio distribution:
\begin{align}\label{eq:ratio-dist}
\text{Left}(x) 
    &:= \int_{0}^{\infty} 
       s\,ds \, \text{Unit}(sx) \, \text{Right}(s)
\end{align}

We've also included the new GSaS from this work. This table shows that we should be flexible
in choosing either the path of product or ratio distributions. 
As Lemma \ref{lemma-sub-pdf-cf} shows,
they are really the flip side of the same coin.

\addtocounter{table}{-1}
\begin{table}[H]
    \smaller 
    \centering
\renewcommand{\arraystretch}{1.5}
\begin{longtable}{|p{3.5cm}|p{2.75cm}|p{3cm}|l|l|l|l|}
\hline
\multicolumn{1}{|c|}{} & \multicolumn{1}{|c|}{} & \multicolumn{1}{|c|}{}
    & \multicolumn{4}{c|}{Right GSC: \(\gscNs\)} \\
\cline{4-7}
\multicolumn{1}{|c|}{Left Dist. (PDF)} 
    & \multicolumn{1}{|c|}{Unit Dist.} 
    & \multicolumn{1}{|c|}{Right Dist.} 
    & \(\alpha\) & \(\sigma\) & \(d\) & \(p\) \\
\hline
\({\text{IG}}(x;k)\)
    & \(\text{IWb}(sx; k)\) & GSC & \(1/k\) & 1 & \(k\) & \(k\) \\
\({\text{IWb}}(x;\alpha)\)
    & \(\text{IWb}(sx;k)\),\newline any \(k\in \mathbb{Z}\) 
    & GSC & \(\alpha/k\) & 1 & 0 & \(\alpha\) \\
Student's t: \(t_k(x)\) 
    & Normal: \(\mcN(sx)\) 
    & \(\chibar_{1,k}\): \(s^{k-2} V_{1}(\sqrt{k} s)\) 
    & \(1/2\) & \(\frac{1}{\sqrt{2k}}\) & \(k-1\) & 1 \\
SaS: \(P_\alpha(x)\) 
    & \(\text{Cauchy}(s x)\) 
    & GSC: \(s^{-1} \scN_\alpha(s)\)
    & \(\alpha\) & 1 & 0 & \(\alpha\) \\
SaS: \(P_\alpha(x)\) 
    & Normal: \(\mcN(sx)\)
    & \(\chibar_{\alpha,1}\): \(s^{-1} V_\alpha(s)\)
    & \(\alpha/2\) & \(\frac{1}{\sqrt{2}}\) & 0 & \(\alpha\) \\
GSaS: \(\gsas{x}\)
    & Normal: \(\mcN(sx)\) 
    & \(\chibar_{\alpha,k}\): \(s^{k-2} V_{\alpha}(\sqrt{k} s)\) 
    & \(\alpha/2\) & \(\frac{1}{\sqrt{2k}}\) & \(k-1\) & \(\alpha\) \\
ML: \(E_\alpha(-x)\) 
    & Expon: \(\exp(-sx)\) 
    & \(s^{-1} M_\alpha(s)\) 
    & \(\alpha\) & 1 & \(-1^{+}\) & 1 \\
ML: \(E_{\frac{\alpha}{2}}(-\frac{x^2}{2})\) 
    & Normal: \(\mcN(sx)\) 
    & GSC: \(M_{\frac{\alpha}{2}}(s^2)\) 
    & \(\alpha/2\) & 1 & -1 & 2 \\

\hline
\end{longtable}
\caption{\label{tab:subord-ratio-dist}Subordination by a ratio distribution to GSC.
Note on the Mittag-Leffler (ML) relations: The first relation is 
a Laplace transform, neither side is normalized properly.
The LHS of the second relation needs a normalization constant of 
\(\frac{\Gamma(1-\alpha/4)}{\sqrt{2} \,\pi}\).
}
\end{table}

\section{Useful Results on the Wright Function}
\label{section:formula-wright}

This section collects useful results of the Wright function that are used in this paper.
It is expanded from Section \ref{section:main-recap-wright}.

\textbf{The moments of the Wright function} are (See (1.4.28) of \cite{Mathai:2017})

\begin{align}\label{eq:wright-moments}
\E(X^{d-1}) =
\int_{0}^{\infty} x^{d-1} W_{-\lambda,\delta}(-x) dx 
    &= \frac{\Gamma(d)} {\Gamma(d\lambda+\delta)}
\end{align}

\textbf{The recurrence relations} of the Wright function are (Chapter 18, Vol 3 of \cite{Bateman:1955})
\begin{align}\label{eq:wright-recurrence}
\lambda z \, W_{\lambda, \lambda+\mu}(z)  
    &= W_{\lambda, \mu-1}(z)  + (1-\mu) W_{\lambda, \mu}(z) 
\\ \label{eq:wright-recurrence-diff}
\frac{d}{dz} W_{\lambda,\mu}(z) &= W_{\lambda, \lambda+\mu}(z) 
\end{align}

\textbf{The four-parameter Wright function} is defined as

\begin{align}\label{eq:wright-fn-4ways}
    W \left[ 
        \begin{matrix}
        a,& b
        \\
        \lambda,& \mu
        \end{matrix}
    \right](z)
    &:=
    \sum_{n=0}^{\infty} 
    \frac{{z}^n}{ \Gamma(n+1) } \, 
    \frac{\Gamma(an+b)} { \Gamma(\lambda n+\mu)}
\end{align}
As far as I know, this function is used seriously for the first time in this work.

\subsection{The M-Wright Functions}\label{section:formula-m-wright}

Mainardi introduced two auxiliary functions of the Wright type (See F.2 of \cite{Mainardi:2010}):
\begin{align}\label{eq:m-wright-F}
F_\alpha(z) &:= W_{-\alpha,0}(-z)  & (z > 0)
\\ \label{eq:m-wright-M}
M_\alpha(z) &:= W_{-\alpha,1-\alpha}(-z) = \frac{1}{\alpha z} F_\alpha(z) & (z > 0)
\end{align}
The relation between \(M_\alpha(z)\) and \(F_\alpha(z)\) in \eqref{eq:m-wright-M} 
is an application of \eqref{eq:wright-recurrence} by setting \(\lambda= -\alpha, \mu=1\).

\(F_\alpha(z)\) has the following Hankel integral representation:
\begin{align}\label{eq:m-wright-F-hankel}
F_\alpha(z) &= W_{-\alpha,0}(-z) 
    = \frac{1}{2\pi i} \int_{H} dt \, \exp{(t-z t^\alpha)}
\end{align}

\(F_\alpha(z)\) is used to define GSC. 
But its series representation isn't very useful computationally.
It requires a lot more terms to converge to a prescribed precision.

On the other hand, 
\(M_\alpha(z)\) has a more computationally friendly series representation, 
especially for small \(\alpha\)'s:
\begin{align}\label{eq:m-wright-series}
M_\alpha(z) &= 
    \sum_{n=0}^\infty
        \frac{{(-z)}^n}{n!\,\Gamma(-\alpha n + (1-\alpha))} 
    = \frac{1}{\pi} \sum_{n=1}^\infty
        \frac{{(-z)}^{n-1}}{(n-1)!} \Gamma(\alpha n) \sin(\alpha n \pi)
    && (0 < \alpha < 1)
\end{align}
\(M_\alpha(z)\) also has very nice analytic properties at \(\alpha = 0, 1/2\),
where \(M_0(z) = \exp(-z)\) and \(M_{\frac{1}{2}}(z) = \frac{1}{\sqrt{\pi}} \exp(-z^2/4)\).

\(M_\alpha(z)\) can be computed to high accuracy when properly implemented with arbitrary-precision 
floating-point library, such as the \texttt{mpmath} package. 
In this regard, it is much more "useful" than \(F_\alpha(z)\).
\(L_\alpha(x)\) (therefore, \(\scN_\alpha(x)\)) can be computed via
\(L_\alpha(x) = \alpha x^{-\alpha-1} M_{\alpha}(x^{-\alpha})\)
without relying on the \texttt{scipy.stats.levy{\textunderscore}stable} package.

This is particularly important in working with large degrees of freedom 
and extreme values of \(\alpha\), mainly at 0 and 1. Typical 64-bit floating-point
is quickly overflowed.

\(M_\alpha(z)\) has the asymptotic representation in GG-style: (See F.20 of \cite{Mainardi:2010})
\begin{align} 
M_\alpha\left(\frac{x}{\alpha}\right) &= 
    A \, x^{d-1} \, e^{-B \, x^{p}}
\\ \notag
    & \where p = 1 / (1-\alpha), d = p/2, \,\,
    A = \sqrt{p/(2\pi)}, \,\,
    B = 1 /(\alpha p).
\end{align}
This formula is important in guiding the series sum to high precision.

Treating \(M_\alpha(x)\) as the PDF of a one-sided distribution such that 
\(\int_0^\infty M_\alpha(x) dx = 1\), then its \(n\)-th moment is
(See Section 4 of \cite{Mainardi:2020})
\begin{align}\label{eq:m-wright-mnt}
\E(X^n|M_\alpha) = 
    \int_0^\infty x^n M_\alpha(x) dx &= 
    \frac{\Gamma(n+1)}{\Gamma(n\alpha + 1)}
\end{align}

Differentiating \(M_\alpha(z)\), and from \eqref{eq:m-wright-series}, we get
\begin{align}\label{eq:m-wright-series-diff}
\frac{d}{dz} M_\alpha(z) &= 
    -W_{-\alpha,1-2\alpha}(-z)
    =
    \frac{-1}{\pi} \sum_{n=2}^\infty
        \frac{{(-z)}^{n-2}}{(n-2)!} \Gamma(\alpha n) \sin(\alpha n \pi)
\end{align}
Note that \(\frac{d}{dz} M_\alpha(0) = -\frac{1}{\pi} \Gamma(2\alpha) \sin(2\alpha \pi)\).
This also indicates that
\begin{align}\label{eq:f-wright-series-diff}
\frac{d}{dz} F_\alpha(z) &= 
    \alpha \left( 1 + z \frac{d}{dz}\right) M_\alpha(z)
\end{align}
which can be implemented from \(M_\alpha(z)\) 
through \eqref{eq:m-wright-series} and \eqref{eq:m-wright-series-diff}.
These results are important for the implementation of 
\(Q_\alpha(z)\) in \eqref{eq:mu-wright-ratio-m}.

\section{List of Useful Formula}
\label{section:formula}

\subsection{Gamma Function}\label{section:formula-gamma}

Gamma function is used extensively in this paper. First, note that 
\(\Gamma(\frac{1}{2}) = \sqrt{\pi}\). 
Its \textbf{reflection formulas} are:
\begin{align}
\label{eq:gamma-refl-sin}
\Gamma(z) \Gamma(1-z)           &= \frac{\pi}{\sin(\pi z)} 
\\ \label{eq:gamma-reflect}
\Gamma(z) \Gamma(z+\frac{1}{2}) &= 2^{1-2z} \sqrt{\pi} \, \Gamma(2z)
\end{align}

\textbf{Gamma function Asymptotic}:
At \(x \to 0\), gamma function becomes 
\begin{align}\label{eq:gamma-zero}
\lim_{x \to 0} \Gamma(x) &\sim \frac{1}{x}
\\ \notag
\lim_{x \to 0} \frac{\Gamma(ax)} {\Gamma(bx)} &= \frac{b}{a}
    \quad \quad (ab \ne 0)
\end{align}

For a very large \(x\), assume \(a,b\) are finite, 
\begin{align}\label{eq:gamma-infty}
\lim_{x \to \infty} \frac{\Gamma(x+a)}{\Gamma(x+b)} \sim x^{a-b}
\end{align}

\textbf{Sterling's formula} is used to expand the kurtosis formula for a large \(k\), which is:
\begin{align}\label{eq:gamma-sterling}
\lim_{x \to \infty} \Gamma(x+1) &\sim \sqrt{2\pi x} {\left(\frac{x}{e}\right)}^x, 
\\
\,\,\text{or} \,\,
\lim_{x \to \infty} \Gamma(x) &\sim \sqrt{2\pi} \, x^{x-1/2} e^{-x}.
\end{align}

\subsection{Transformation}\label{section:formula-transform}

Laplace transform of cosine is\footnote{
See \url{https://proofwiki.org/wiki/Laplace_Transform_of_Cosine}
}
\begin{align}\label{eq:int-cosine}
\int_{0}^{\infty} dt \, \cos(xt) e^{-t/\nu}
    &= 
    \frac{\nu^{-1}}{x^2 + \nu^{-2}}
    = 
    \frac{\nu}{{(\nu x)}^2 + 1}
\end{align}

Gaussian transform of cosine is\footnote{
See \url{https://www.wolframalpha.com/input?i=integrate+cos\%28a+x\%29+e\%5E\%28-x\%5E2\%2F2\%29+dx+from+0+to+infty}
}
\begin{align} \label{eq:gauss-cosine}
\int_{0}^{\infty} dt \, \cos(xt) \, e^{-t^2/2}
    &= \sqrt{\frac{\pi}{2}} \, e^{-x^2/2}
\\  \notag
\text{Hence} \,
\int_{0}^{\infty} dt \, \cos(xt) \, e^{-t^2/2s^2}
    &= \sqrt{\frac{\pi}{2}} s \, e^{-(sx)^2/2}
\end{align}

\subsection{Half-Normal Distribution}\label{section:formula-half-normal}

The moments of the half-normal distribution (HN)\footnote{
See \url{https://en.wikipedia.org/wiki/Half-normal_distribution}
} are used several times.
Its PDF is defined as
\begin{align} 
  p_{HN}(x;\sigma) &:= 
    \sqrt{\frac{2}{\pi}} \frac{1}{\sigma} 
    \, e^{ -x^2/{(2\sigma^2)} } 
    \, , \,\, x > 0
\end{align}
which is a special case of GG with \(d=1, p=2, a=\sqrt{2} \sigma\). 
Its moments are
\begin{align}\label{eq:HN-moments}
E_{HN}(T^n) &= \sigma^n
    \frac{ 2^{n/2} }{ \sqrt{\pi} }
    \, \Gamma\left(\frac{n+1}{2}\right)
\end{align}
which are the same as those of a normal distribution.

}


\bibliographystyle{plain} 
\bibliography{main} 

\end{document}